\title{\vspace{-9ex}\centering \bf On
%the properties of
the normalized power prior}
\author{
Luiz Max de Carvalho$^1$, Joseph G. Ibrahim$^2$\\
1 - School of Applied Mathematics (EMAp), Get\'ulio Vargas Foundation (FGV).\\
2 - Department of Biostatistics, University of North Carolina at Chapel Hill.\\
}
\newtheorem{theorem}{Theorem}[]
\newtheorem{lemma}{Lemma}[]
\newtheorem{proposition}{Proposition}[]
\newtheorem{remark}{Remark}[]
\begin{document}
\maketitle

\begin{abstract}
The power prior is a popular tool for constructing informative prior distributions based on historical data.
The method consists of raising the likelihood to a discounting factor in order to control the amount of information borrowed from the historical data.
It is customary to perform a sensitivity analysis reporting results for a range of values of the discounting factor.
However, one often wishes to assign it a prior distribution and estimate it jointly with the parameters, which in turn necessitates the computation of a normalising constant.
In this paper we are concerned with how to recycle computations from a sensitivity analysis in order to approximately sample from joint posterior of the parameters and the discounting factor.
We first show a few important properties of the normalising constant and then use these results to motivate a bisection-type algorithm for computing it on a fixed budget of evaluations.
We give a large array of illustrations and discuss cases where the normalising constant is known in closed-form and where it is not.
We show that the proposed method produces approximate posteriors that are very close to the exact distributions when those are available and also produces posteriors that cover the data-generating parameters with higher probability in the intractable case.
Our results show that proper inclusion the normalising constant is crucial to the correct quantification of uncertainty and that the proposed method is an accurate and easy to implement technique to include this normalisation, being applicable to a large class of models. 

Key-words: Doubly-intractable; elicitation; historical data; normalisation; power prior; sensitivity analysis. 
\end{abstract}

\section{Background}

Power priors~\citep{Ibrahim2000,Ibrahim2015} are are popular method of constructing informative priors, and are widely used in fields such medical research, where elicitation of informative priors is crucial.
When historical data are available, power priors allow the elicitation of informative priors by borrowing information from the historical data.
This is accomplished by raising the likelihood of the historical data to a scalar discounting factor $a_0$, usually taken to be $0 \leq a_0 \leq 1$.
When $a_0 = 0$ the historical data receives no weight and thus no information is borrowed, whereas $a_0 = 1$ represents full borrowing of the information contained in the historical data to inform the prior.
In many settings, this construction of an informative prior can be shown to be optimal in an information-processing sense~\citep{Zellner2002,Ibrahim2003}.

To make the presentation more precise, let the observed data be $D_0 = \{d_{01}, d_{02}, \ldots, d_{0N_0} \}$, $d_{0i} \in \mathcal{X} \subseteq \mathbb{R}^d$ and let $L(D_0 \mid \theta)$ be a likelihood function assumed to be finite for all arguments $\theta \in \boldsymbol\Theta \subseteq \mathbb{R}^q$. 
The simplest formulation of the power prior reads
\begin{equation}
 \label{eq:power_prior_simple}
 \pi(\theta \mid D_0, a_0) \propto L(D_0 \mid \theta)^{a_0}\pi(\theta),
\end{equation}
where  $\pi$ is called the~\textit{initial} prior and $a_0$ is a scalar, usually taken to be in $[0, 1]$.
The scalar $a_0$ controls the amount of information from historical data that is included in the analysis of the current data~\citep{Ibrahim2000}.
A commonly adopted practice is to fix $a_0$ and assess the sensitivity of results to different values, including $a_0 = 0$ (no borrowing) and $a_0 = 1$ (full borrowing, see~\cite{Ibrahim2015}, Section 5).
One might also be interested in accommodating uncertainty about the relative weighting of the historical data by placing a prior $\pi_A$ on $a_0$.
This leads to what we will call the~\textit{unnormalised} power prior
\begin{equation}
 \label{eq:unnormalised_power_prior}
 \pi(\theta, a_0 \mid D_0, \delta) = \pi(\theta \mid D_0, a_0)\pi_A(a_0 \mid \delta) \propto L(D_0 \mid \theta)^{a_0}\pi(\theta)\pi_A(a_0).
\end{equation}
As observed by~\cite{Neuenschwander2009}, this formulation does not lead to a correct joint posterior distribution for $(\theta, a_0)$ because the normalsing constant of~(\ref{eq:power_prior_simple}),
\begin{equation}
 \label{eq:normalising_constant}
 c(a_0) : = \int_{\Theta} L( D_0 \mid \theta)^{a_0}\pi(\theta) \, d\theta,
\end{equation}
is missing.
The~\textit{normalised} power prior is defined as~\citep{Duan2006a,Duan2006b}:
\begin{equation}
 \label{eq:normalised_power_prior}
  \pi(\theta, a_0 \mid D_0, \delta) = \frac{L(D_0| \theta)^{a_0} \pi(\theta)  \pi_A(a_0\mid \delta)}{c(a_0)}.
\end{equation}

In light of new (current) data $D = \{d_1, d_2, \ldots, d_N\}$, we thus have the joint posterior
\begin{equation}
 \label{eq:joint_posterior}
 p(\theta, a_0 \mid D_0, D, \delta) \propto \frac{1}{c(a_0)} L(D \mid \theta) L(D_0 \mid \theta)^{a_0} \pi(\theta)\pi_A(a_0 \mid \delta).
\end{equation}
In this setting, $a_0$ becomes a parameter we are interested in learning about in light of the data and thus we arrive at the marginal posterior
\begin{align}
 \nonumber
   p(a_0 \mid D_0, D, \delta)  &= \int_{\Theta}  p(\theta, a_0 \mid D_0, D, \delta) \, d\theta,\\
  \label{eq:marginal_posterior_a0}
  & \propto \frac{\pi_A(a_0\mid \delta)}{c(a_0)} \int_{\Theta}  L(D_0| \theta)^{a_0} \pi(\theta) L(D | \theta) \, d\theta,
\end{align}
which involves the computation of not one but two potentially high-dimensional integrals.

Unfortunately, the posterior distribution in~(\ref{eq:joint_posterior}) is in the class of so-called doubly-intractable distributions and its exact computation depends on advanced Markov chain Monte Carlo (MCMC) techniques.
In this paper we study the theoretical properties of the normalising constant and use our findings to guide informed designs for sensitivity analysis.
Further, we explore a simple way to recycle computations from a sensitivity analysis in order to sample from an approximate joint posterior of $a_0$ and $\theta$.

The paper is organised as follows: in Section~\ref{sec:theory} we present a few results on the propriety of the power prior and the properties of the normalising constant, $c(a_0)$.
We give general results as well as specific formulae for the exponential family of probability distributions and its conjugate prior.
Section~\ref{sec:computation} discusses the computational aspects of approximating $c(a_0)$ when it is not known in closed-form and Section~\ref{sec:illustrations} brings a large array of illustrations of the normalised power prior in examples where the normalising constant is known in closed-form and situations where it is not.
We conclude with a discussion of the findings and future directions in Section~\ref{sec:discussion}.

\section{Theory}
\label{sec:theory}

We begin by describing a few results on the properties of the power prior and its normalised version.
First, we show that the normalised power prior is always well-defined when the initial prior is proper in Theorem~\ref{thm:integrability}, for which we give an elementary proof in Appendix~\ref{sec:further_proofs}.

\begin{theorem}
\label{thm:integrability}
 Assume $\int_{\mathcal{X}} L( x \mid \theta)\, dx < \infty$.
 In addition, assume $\pi$ is proper, i.e., $\int_{\boldsymbol\Theta} \pi(\theta) \, d\theta = 1$.
 Then, $c(a_0) = \int_{\Theta}  L( D_0 \mid \theta)^{a_0}\pi(\theta) \, d\theta <\infty$ for $a_0 \geq 0$.
\end{theorem}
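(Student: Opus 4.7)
The plan is to pointwise bound $L(D_0 \mid \theta)^{a_0}$ by a $\pi$-integrable function, treating the values of $\theta$ where the likelihood is small or large separately. Specifically, I would split the parameter space as $\boldsymbol\Theta = \Theta_- \cup \Theta_+$, with $\Theta_- := \{\theta : L(D_0 \mid \theta) \leq 1\}$ and $\Theta_+ := \{\theta : L(D_0 \mid \theta) > 1\}$. Over $\Theta_-$ one has $L(D_0 \mid \theta)^{a_0} \leq 1$ for every $a_0 \geq 0$, so by propriety of $\pi$ this piece contributes at most $\int_{\Theta_-}\pi(\theta)\,d\theta \leq 1$.

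Over $\Theta_+$, since $L(D_0 \mid \theta) > 1$ there, the monotone bound $L^{a_0} \leq L^{n}$ holds for any integer $n \geq a_0$. The problem then reduces to showing that $M_n := \int_{\boldsymbol\Theta} L(D_0 \mid \theta)^{n} \pi(\theta)\,d\theta$ is finite for each positive integer $n$. Under the standard conditional-independence structure of the sampling model, $L(D_0 \mid \theta)^{n} = L(\widetilde{D}_0 \mid \theta)$ where $\widetilde{D}_0 \in \mathcal{X}^{n N_0}$ is the concatenation of $n$ copies of $D_0$, so $M_n$ may be viewed as a prior predictive density evaluated at $\widetilde{D}_0$.

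At this stage I would invoke Tonelli's theorem to write
\[
\int_{\mathcal{X}^{n N_0}} M_n(\widetilde{D}_0)\,d\widetilde{D}_0 \;=\; \int_{\boldsymbol\Theta} \pi(\theta) \left( \int_{\mathcal{X}} L(x \mid \theta)\,dx \right)^{n N_0} d\theta,
\]
whose right-hand side is finite because the inner integral is finite by hypothesis (and equal to $1$ in the usual proper-density setting), combined with the fact that $\pi$ is a probability measure. Hence $M_n(\widetilde{D}_0) < \infty$ for Lebesgue-almost every $\widetilde{D}_0$, and one concludes finiteness at the observed value since $D_0$ lies in the support of the sampling model.

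The main obstacle I anticipate is the transition from almost-everywhere finiteness of the predictive to finiteness at the specific observed $D_0$; this step leans implicitly on $D_0$ not lying in a null set, which is natural but worth flagging. A secondary subtlety is that the hypothesis $\int_{\mathcal{X}} L(x \mid \theta)\,dx < \infty$ must hold with enough uniformity in $\theta$ (e.g., boundedness, or at least integrability against $\pi$) for the final Tonelli step to produce a finite double integral; this is automatic when $L$ is a proper density, which is the situation of interest throughout the paper.
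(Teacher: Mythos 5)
Your route is genuinely different from the paper's. The paper's proof takes the finiteness of the ordinary marginal likelihood $\int_{\boldsymbol\Theta} L(D_0\mid\theta)\pi(\theta)\,d\theta$ essentially as given (attributing it to the ``finiteness of $L$'') and then bootstraps to general $a_0$ via Jensen's inequality for the concave map $x\mapsto x^{s}$, $s\in(0,1]$, combined with an induction over the integer intervals $[n,n+1]$. Your splitting of $\boldsymbol\Theta$ into $\{L\le 1\}$ and $\{L>1\}$ together with the bound $L^{a_0}\le \max(1,L^{\lceil a_0\rceil})$ achieves the same reduction to integer powers in one stroke, and your Tonelli computation of $\int M_n(\widetilde D_0)\,d\widetilde D_0$ is an honest attempt to \emph{derive} the finiteness of $M_n$ from the stated hypothesis $\int_{\mathcal X}L(x\mid\theta)\,dx<\infty$ rather than assume it.

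The step you flag, however, is a genuine gap and not a removable technicality: Tonelli yields $M_n(\widetilde D_0)<\infty$ only for Lebesgue-almost every data configuration, and membership of $D_0$ in the support of the sampling model does not keep it out of the exceptional null set. Concretely, take $\mathcal X=\mathbb R$, $\boldsymbol\Theta=(0,1)$, $L(x\mid\theta)=\theta^{-1}\mathbf{1}\{0\le x\le\theta\}$ and $\pi$ uniform on $(0,1)$. Then $\int_{\mathcal X}L(x\mid\theta)\,dx=1$ and $L(d_0\mid\theta)$ is finite for every $\theta$, yet for the single observation $d_0=0$ one gets $c(a_0)=\int_0^1\theta^{-a_0}\,d\theta=\infty$ for every $a_0\ge 1$; the point $0$ lies in the support of every $\operatorname{Uniform}[0,\theta]$ but is precisely the null set on which the prior predictive diverges. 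So the passage from almost-everywhere to the observed $D_0$ requires an extra hypothesis --- that the posterior $L(D_0\mid\theta)^{n}\pi(\theta)$ at the data actually observed is normalisable for the relevant integer $n$ --- which is exactly what the paper's first Jensen step assumes implicitly. With that assumption granted, your $\Theta_{\pm}$ decomposition closes the argument for all real $a_0\ge 0$ at once and is, if anything, cleaner than the interval-by-interval induction; without it, neither your Tonelli step nor the appeal to the support can be completed.
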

\begin{proof}\let\qed\relax
See Appendix~\ref{sec:further_proofs}.
\end{proof}

Theorem~\ref{thm:integrability} thus shows that the expression in~(\ref{eq:normalised_power_prior}) leads to a valid joint prior on $(\theta, a_0)$.
While scientific interest usually lies with $a_0 \in [0, 1]$, showing the result holds also for $a_0 > 1$ might find use in other fields, such as the analysis of complex surveys, where the likelihood is raised to a power that is inversely proportional to a selection probability~\citep{Savitsky2016}.
In many applications one usually has a collection of historical data sets, with different sample sizes and particular (relative) reliabilities that one would like to include in a power prior analysis.
Remark~\ref{rmk:historical} extends Theorem~\ref{thm:integrability} for the situation where multiple historical data sets are available and the analyst desires to include them simultaneously, each with a weight $a_{0k}$.

\begin{remark}
\label{rmk:historical}
 The power prior on multiple (independent) historical data sets is also a proper density.
\end{remark}
\begin{proof}\let\qed\relax
See Appendix~\ref{sec:further_proofs}.
\end{proof}

These two results give solid footing to the normalised power prior as well as tempered likelihood techniques, for which propriety is usually assumed without proof or proved only for specific cases~\citep{Duan2006a, Savitsky2016}.

\subsection{Properties of the normalising constant $c(a_0)$}
\label{sec:properties}

In order to aid computation, it is useful to study some of the properties of the normalising constant, $c(a_0)$, seen as a function of the scalar $a_0$. 
First, we show that  $c(a_0)$ is strictly convex (Lemma~\ref{lm:convex_norm_constant}), which motivates specific algorithms for its approximation.

\begin{lemma}
\label{lm:convex_norm_constant}
Assume $L(D_0 \mid \theta)$ is continuous with respect to $\theta$.
Then the normalising constant is a strictly convex function of $a_0$.
\end{lemma}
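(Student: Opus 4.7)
The plan is to reduce the claim to Hölder's inequality (or equivalently, to the classical fact that moment generating functions are log-convex). Write
\[
 c(a_0) = \int_{\boldsymbol\Theta} \exp\!\bigl(a_0 \log L(D_0 \mid \theta)\bigr)\, \pi(\theta)\, d\theta,
\]
so $c$ is the moment generating function of the random variable $Y := \log L(D_0 \mid \theta)$ under the probability measure $\pi(\theta)\, d\theta$, evaluated at $a_0$. Log-convexity (and hence convexity) of such MGFs is standard.

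Concretely, first I would pick $a, b \geq 0$ and $\lambda \in (0,1)$, and apply Hölder's inequality with conjugate exponents $p = 1/\lambda$ and $q = 1/(1-\lambda)$ to the factorisation
\[
 L(D_0\mid\theta)^{\lambda a + (1-\lambda) b}\,\pi(\theta) \;=\; \bigl[L(D_0\mid\theta)^{a}\pi(\theta)\bigr]^{\lambda}\,\bigl[L(D_0\mid\theta)^{b}\pi(\theta)\bigr]^{1-\lambda}.
\]
Integrating both sides yields $c(\lambda a + (1-\lambda) b) \leq c(a)^{\lambda}\, c(b)^{1-\lambda}$, which is log-convexity; since $\log$ is increasing and $c > 0$ by Theorem~\ref{thm:integrability}, ordinary convexity follows.

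For strict convexity, I would inspect the equality case in Hölder. Equality in the inequality above (for $a \neq b$) forces $L(D_0\mid\theta)^{a}\pi(\theta)$ and $L(D_0\mid\theta)^{b}\pi(\theta)$ to be proportional $\pi$-almost everywhere, which in turn forces $L(D_0 \mid \theta)$ to be constant on the support of $\pi$. Here is where the continuity hypothesis enters: by continuity of $L(D_0\mid\cdot)$, an almost-everywhere constancy on an open set of positive $\pi$-measure upgrades to actual constancy of $L$ on $\mathrm{supp}(\pi)$, and in that degenerate case $c(a_0)$ reduces to a constant function (which is ruled out in any non-trivial likelihood setting). So equality cannot occur, and convexity is strict.

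The main obstacle, and the one requiring the continuity assumption, is precisely this last step: passing from equality a.e.\ in Hölder's inequality to a pointwise statement that rules out strict convexity only in a degenerate model. An alternative route I would consider as a sanity check is the direct one: differentiate twice under the integral sign to get $c''(a_0) = \int_{\boldsymbol\Theta} (\log L(D_0\mid\theta))^2 L(D_0\mid\theta)^{a_0}\,\pi(\theta)\, d\theta \geq 0$, with strict positivity characterised by the same non-degeneracy condition on $L$. This is conceptually immediate but carries the extra baggage of justifying the interchange of differentiation and integration via dominated convergence, which is why I prefer the Hölder argument as the primary proof.
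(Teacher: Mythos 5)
Your Hölder/log-convexity argument is correct in its main thrust but takes a genuinely different route from the paper. The paper first establishes (Proposition~\ref{prop:c_is_Cinfinity}) that all derivatives of $c(a_0)$ exist, via a change of variables that turns $c$ into the moment generating function of $\log L(D_0\mid\theta)$, and then proves the lemma by differentiating twice under the integral sign, obtaining $c^{\prime\prime}(a_0)=\int_{\Theta}L(D_0\mid\theta)^{a_0}\pi(\theta)\left[\log L(D_0\mid\theta)\right]^{2}d\theta \geq 0$ --- exactly the route you relegate to a ``sanity check.'' The continuity hypothesis in the paper is invoked to justify the interchange of differentiation and integration, not (as in your proof) to upgrade almost-everywhere constancy on the support of $\pi$. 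What each approach buys: your Hölder argument proves the stronger property of log-convexity, needs only Theorem~\ref{thm:integrability} for finiteness, and sidesteps the differentiation-under-the-integral issue entirely; the paper's derivative computation is heavier on regularity but yields the explicit formula~(\ref{eq:derivative_ca0}) for $c^{\prime}(a_0)$ as the expected log-likelihood under the power prior, which is then used operationally in the adaptive grid construction of Section~\ref{sec:adapt_grid} and in Remark~\ref{rmk:discrete_decreasing}, so it is not just a proof device.

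One imprecision in your strictness step: proportionality of $L^{a}\pi$ and $L^{b}\pi$ forces $L\equiv k$ on $\operatorname{supp}(\pi)$, but then $c(a_0)=k^{a_0}$, which is a constant function only when $k=1$; for $k\neq 1$ it is in fact strictly convex even though equality holds in the log-convexity inequality. So ruling out strict convexity requires combining the Hölder equality case with the equality case of the weighted AM--GM step ($c(a)=c(b)$), which together force $k=1$, i.e.\ $\log L=0$ $\pi$-a.e. This is the same degenerate configuration that the paper's proof silently assumes away when asserting $c^{\prime\prime}$ is ``always positive,'' so your proof is no weaker than the paper's on this point, but the sentence ``$c(a_0)$ reduces to a constant function'' should be corrected accordingly.
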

\begin{proof}\let\qed\relax
See Appendix~\ref{sec:further_proofs}.
\end{proof}

For the goals of this paper, it would be useful to know more about the shape of $c(a_0)$, more specifically if and when its derivatives change signs.
For computational stability reasons, one is usually interested in $l(a_0) := \log(c(a_0)) $ instead of $c(a_0)$ and hence it is also useful to study the derivative of the log-normalising constant, $l^\prime(a_0) = c^\prime(a_0)/c(a_0)$.
A key observation is that $l^\prime(a_0)$ changes signs at the same point as $c^\prime(a_0)$ does, a feature that can be exploited when designing algorithms for approximating $l(a_0)$ (see Section~\ref{sec:efficient_computation_ca0}).

Next, we state Remark~\ref{rmk:discrete_decreasing}, that shows that for the large class of discrete likelihoods (Bernoulli, Poisson, etc), $c(a_0)$ is monotonic.
\begin{remark}
 \label{rmk:discrete_decreasing}
 When $L(D \mid \theta)$ is a discrete likelihood, $c(a_0)$ is monotonically decreasing in $a_0$.
\end{remark}
\begin{proof}
 The proof is immediate from Lemma~\ref{lm:convex_norm_constant} and the fact that for a non-degenerate discrete likelihood the function $\log(L(D\mid\theta))$ is strictly negative and hence so is its expectation under the power prior (see Proposition~\ref{prop:c_is_Cinfinity} in Appendix~\ref{sec:further_proofs}).
\end{proof}
This will find application in the adaptive grid building described in Section~\ref{sec:adapt_grid}.

\subsubsection{Exponential family}
\label{sec:expo_family}

A large class of models routinely employed in applications is the exponential family of distributions, which includes the Gaussian and Gamma families, as well as the class of generalised linear models~\citep{Nelder1972, Mccullagh1989}.
Here we give expressions for the normalising constant when the likelihood is in the exponential family.
Furthermore, we also derive the marginal posterior of $a_0$ when the initial prior $\pi(\theta)$ is in the conjugate class~\citep{Diaconis1979}. 

Suppose $L(D_0 \mid \theta)$ is in the exponential family:
\begin{equation*}
 L(D_0 \mid \theta) = \boldsymbol h(D_0) \exp \left( \eta(\theta)^T \left(\boldsymbol S(D_0)) \right) - N_0 A(\theta) \right),
\end{equation*}
where $\boldsymbol h(D_0) := \prod_{i = 1}^{N_0} h(d_{0i})$ and $\boldsymbol S(D_0) :=  \sum_{i=1}^{N_0} T(d_{0i})$.
Thus we have
\begin{align}
\nonumber
 c(a_0) &=  \int_{\boldsymbol\Theta} \left[ \boldsymbol h(D_0) \exp \left( \eta(\theta)^T \boldsymbol S(D_0) - N_0 A(\theta) \right) \right]^{a_0}\pi(\theta) \, d\theta, \\
  \label{eq:expo_family_const}
 &= \boldsymbol h(D_0)^{a_0}\int_{\boldsymbol\Theta} \exp \left( \eta(\theta)^T a_0 \boldsymbol S(D_0) \right) \exp\left(- a_0N_0 A(\theta) \right) \pi(\theta) \, d\theta.
\end{align}
The derivative (see Proposition~\ref{prop:c_is_Cinfinity}) evaluates to 
\begin{equation}
\label{eq:expo_family_deriv_general}
 c^\prime(a_0) = \log(\boldsymbol h(D_0)) +  \int_{\boldsymbol\Theta} \left[ \eta(\theta)^T a_0 \boldsymbol S(D_0) \right]  f_{a_0}(D_0; \theta) \, d\theta -  a_0N_0\int_{\boldsymbol\Theta} f_{a_0}(D_0; \theta)  A(\theta) \, d\theta,
\end{equation}
where $f_{a_0}(D_0; \theta) := L(D_0 \mid \theta)^{a_0}\pi(\theta)$.

These results can be refined further if we restrict the class of initial priors.
If we choose $\pi(\theta)$ to be conjugate to $L(D_0 \mid \theta)$~\citep{Diaconis1979}, i.e.
\begin{equation*}
 \label{eq:conj_exp_family}
 \pi(\theta \mid \tau, n_0) = H(\tau, n_0) \exp\{ \tau^T\eta(\theta) - n_0A(\theta) \},
\end{equation*}
we have 
\begin{align}
%  c(a_0) &= \boldsymbol h(D_0)^{a_0} H(\tau, n_0) \int_{\boldsymbol\Theta} \exp \left( \eta(\theta)^T \left(a_0 \boldsymbol S(D_0)) \right) -a_0N_0 A(\theta) \right) \exp\{  \eta(\theta)  \tau^T - n_0A(\theta) \} \, d\theta,\\
\nonumber
 c(a_0) &= \boldsymbol h(D_0)^{a_0} H(\tau, n_0) \int_{\boldsymbol\Theta}  \exp \left[ \eta(\theta)^T \left( \tau + a_0\boldsymbol S(D_0) \right) -(n_0  + a_0N_0) A(\theta) \right] \, d\theta, \\
  \label{eq:expo_family_const_conj}
 &= \frac{\boldsymbol h(D_0)^{a_0} H(\tau, n_0)}{H\left( \left[\tau + a_0\boldsymbol S(D_0)\right]^T, n_0  + a_0N_0 \right)}.
\end{align}
Following~(\ref{eq:marginal_posterior_a0}), the marginal posterior for $a_0$ is 
\begin{equation}
 \label{eq:marginal_posterior_a0_expoFamily}
 p(a_0 \mid D_0, D, \delta) \propto \frac{H\left( \left[\tau + a_0\boldsymbol S(D_0)\right]^T, n_0  + a_0N_0 \right)}{H\left( \left[\tau + a_0\boldsymbol S(D_0) + \boldsymbol S(D) \right]^T, n_0  + a_0N_0 + N \right)} \frac{\boldsymbol h(D)}{\boldsymbol h(D_0)^{a_0} } \pi_A(a_0 \mid \delta).
\end{equation}

\section{Computation}
\label{sec:computation}

In this section we propose a way to approximate $c(a_0)$ at a grid of values, while simultaneously picking the grid values themselves.

\subsection{Efficiently approximating $c(a_0)$}
\label{sec:efficient_computation_ca0}

While the exponential family is a broad class of models, for many models of practical interest $c(a_0)$ is not known in closed form, and hence must be computed approximately.
As discussed above (and by~\cite{Neuenschwander2009}), it is important to include $c(a_0)$ in the calculation of the posterior when $a_0$ is allowed to vary and assigned its own prior.
An example where this would be important is when we need to normalise the power prior for use within a Markov chain Monte Carlo (MCMC) procedure.
If one wants to avoid developing a customised MCMC sampler for this situation (see Section~\ref{sec:discussion}), one needs a simple yet accurate way of approximating $c(a_0)$ and its logarithm, $l(a_0)$.

Here we take the following approach to approximating $c(a_0)$: first, define a grid of values $\boldsymbol a^{\text{est}} = \{ a^{\text{est}}_1, \ldots, a^{\text{est}}_J \}$ for a typically modest grid size $J$.
Using a marginal likelihood approximation method (see below), compute an estimate of $c(a_0)$ for each point in $\boldsymbol a^{\text{est}}$, obtaining a set of estimates.
Consider an approximating function $g_{\boldsymbol\xi} : [0, \infty) \to (0, \infty)$, indexed by a set of parameters $\boldsymbol\xi$.
For instance, $g_{\boldsymbol\xi}$ could be a linear model, a generalised additive model (GAM) or a Gaussian process.
We can then use  $\boldsymbol a^{\text{est}}$ and $\hat{\boldsymbol c}(\boldsymbol a^{\text{est}})$ as data to learn about $\boldsymbol\xi$.
Once we obtain an estimate  $\hat{\boldsymbol \xi}$, $c(\cdot)$ can be approximated at any point $z$ by the prediction $g_{\hat{\boldsymbol \xi}}(z)$.

In order to simplify implementation, in our applications we found it useful to create a grid of size $K \gg J$, $\boldsymbol a^{\text{pred}} = \{ a^{\text{pred}}_1, \ldots, a^{\text{pred}}_K \}$, and then compute the predictions $\boldsymbol g^{\text{pred}} := g_{\hat{\boldsymbol \xi}}( \boldsymbol a^{\text{pred}} )$.
We can then use this dictionary of values to obtain an approximate value of $c(a_0)$ by simple interpolation.
This approach allows one to evaluate several approximating functions without having to implement each one separately.

A caveat of this grid approach is that the maximum end point needs to be chosen in advance, effectively bounding the space of $a_0$ considered. 
While for many applications, interest usually lies in $a_0 \in [0, 1]$, even when one is interested in $a_0 > 1$, one usually has a good idea of the range of reasonable of values, since this information is also useful in specifying the prior $\pi_A(a_0\mid \delta)$.
In fact, prior information can be used to set the maximum grid value: let $p$ be a fixed probability and then set the maximum grid value $M$ such that
\[ \int_0^M \pi(a) da = p. \]
One can pick $p = 0.9999$, for instance, so as to have a high chance of not sampling any values of $a_0$ outside the grid.
This path is not explored here, however.

\subsubsection{Adaptively building the estimation grid}
\label{sec:adapt_grid}

Another approach is to build the estimation grid $\boldsymbol a^{\text{est}}$ adaptively.
Since $c(a_0)$ is convex, we need to make sure our grid covers the region where its derivative changes signs (if it does) when designing both $\boldsymbol a^{\text{est}}$ and $\boldsymbol a^{\text{pred}}$.
As discussed above, $l^\prime(a_0)$ changes signs at the same point as $c^\prime(a_0)$ does, and we shall exploit this to design our grids.
One can get an estimate of $c^\prime(a_0)$ directly from MCMC (see Equation~(\ref{eq:derivative_ca0}) in Appendix~\ref{sec:further_proofs}), since this is just the expected value of the log-likelihood under the power prior.
In practice this means that evaluating $l^\prime(a_0)$ comes essentially ``for free'' once one does the computations necessary to estimate $l(a_0)$.

In order to adaptively build the estimation grid, $\boldsymbol a^{\text{est}}$, we propose doing a bisection-type search.
First, let $m$ and $M$ be the grid the endpoints and $J$ be the budget on the total number of evaluations of $l(a_0)$.
Further, fix two real constants $v_1, v_2 > 0$.
In our computations we have used $v_1 = v_2 =  10$. 

\begin{enumerate}
 \item Initialize the variables $Z = \{ 0\}$, which will store the visited values of $a_0$,  $F = \{ 0\}$ which will store the values of $l(a_0)$ and $F^{\prime} = \{ \emptyset \}$ which will store the values of $l^{\prime}(a_0)$;
 \item Compute $l(m)$, $l^\prime(m)$, $l(M)$, and $l^\prime(M)$ and store these values in their respective variables.
\textbf{If} $\text{sgn}(l^\prime(m)) = \text{sgn}(l^\prime(M))$, construct $Z$ to be a regular grid of $J - 2$ values between $m$ and $M$ and estimate $l(\cdot)$ at those values, building $F$ and $F^{\prime}$ accordingly. 
\textbf{Else}, with $\text{sgn}(c^\prime(m)) \neq \text{sgn}(c^\prime(M))$, set $L^{(1)} = m$ and $U^{(1)} = M$ and  make $J = J - 1$.
Then, for the $k$-th iteration ($k > 1$):
 \begin{enumerate}
 \item Make $z^{(k)} = (L^{(k)} + U^{(k)})/2$, compute $l(z^{(k)})$, $l^{\prime}(z^{(k)})$ and store $Z \leftarrow z^{(k)}$, $F \leftarrow l(z^{(k)})$ and $F^{\prime} \leftarrow l^{\prime}(z^{(k)})$.
 \item Compare derivative signs: \textbf{if} $ \text{sgn}(l^\prime(z^{(k)})) = \text{sgn}(l^\prime(m))$ set $L^{(k + 1)} = z^{(k)}$ and $U^{(k + 1)} = U^{(k)}$. 
 Otherwise, set $L^{(k + 1)} = L^{(k)}$ and $U^{(k + 1)} = z^{(k)}$. 
 Compute $\delta^{(k)} = |z^{(k)} - z^{(k-1)}|$ and set $J = J - 1$.
 \textbf{If} $J = 0$, stop.
 \item \textbf{If} $J > 0$ but $\delta^{(k)}  <  v_1 m $, stop.
 \begin{enumerate}
  \item Compute $A^{(k)} = \max(0, z^{(k)} - v_2m)$ and $ B^{(k)} = \min(z^{(k)} + v_2m, M)$.
  \item Considering only the elements $z_i$ of $Z$ such that $A^{(k)} \leq z_i \leq B^{(k)}$, find the pair $(z_i, z_{i + 1})$ such that $|z_{i} -z_{i + 1}|$ is largest, make $z^{(k)}  = |z_{i} -z_{i + 1}|/2$, compute $l(z^{(k)})$, $l^{\prime}(z^{(k)})$ and store $Z \leftarrow z^{(k)}$, $F \leftarrow l(z^{(k)})$ and $F^{\prime} \leftarrow l^{\prime}(z^{(k)})$.
  Set $J = J -1$ and, if $J = 0$, stop.
 \end{enumerate}
\end{enumerate}
\end{enumerate}
 
Informally, the algorithm starts by approaching the point $\hat{a}$ at which $l^\prime(\hat{a}) = 0$ and storing the values encountered on that path.
Because we do not want to waste all of the computational budget in evaluating too small a neighbourhood around $\hat{a}$, we use $v_1$ to control the size of this neighbourhood.
Then, if the computational budget of $J$ evaluations has not yet been exhausted, we ``plug the gaps'' in our collection of values of $a_0$, $Z$.
Because these gaps matter more closer to the region around $\hat{a}$, we use $v_2$ to control the size of the neighbourhood where we plug the gaps.
 
The algorithm discussed in this section shares many similarities with power (tempered) posterior methods for computing marginal likelihoods~\citep{Gelman1998, Friel2008}.
A key difference is that while the aforementioned methods are concerned with estimating the normalising constant for a single value of $a_0$ ($a_0 = 1$), here we are interested in approximating the whole $c(a_0)$ curve.

\subsubsection{When estimating marginal likelihoods directly is impractical}
\label{sec:adapt_grid_derivOnly}

For very complex, parameter-rich models, it might be the case that estimating $l(a_0)$ -- or $c(a_0)$ -- at a few values of $a_0$ may still be very computationally costly.
It may also be the case that the posterior density $L(D_0 \mid \theta)\pi(\theta)$ is costly to compute.
An alternative approach is to only evaluate $l^\prime(a_0)$ instead, which should be cheaper, and then obtain an approximation of $l(a_0)$~\textit{via} quadrature.
To this end, the algorithm presented in Section~\ref{sec:adapt_grid} can be used with very little change, namely by simply not evaluating (estimating) $l(\cdot)$ at points $Z$.
One can then approximate $l^\prime(a_0)$ -- instead of $l(a_0)$ -- in the same way by estimating an approximate curve $g_\xi$, evaluating predictions on a fine grid (say, $K = 20, 000$) and then using midpoint integration to obtain an approximation of $l(a_0)$.
We analysed this approach on a limited set of examples and found that it yielded less accurate approximations when compared to the method approximating $c(a_0)$ directly (see Appendix~\ref{sec:derivative_only}) and thus did not pursue the matter further.

\subsection{Computational details}
\label{sec:comp_details}

Easily extendable computer code for reproducing all of the steps as well as implementing new models is available from~\url{https://github.com/maxbiostat/propriety_power_priors}.

\subsubsection{Markov Chain Monte Carlo}
\label{sec:mcmc}

The vast majority of the models discussed in this paper lead to posterior distributions which cannot be written in closed-form and hence we must resort to numerical methods.
Here we employ Hamiltonian -- or Hybrid -- Monte Carlo (HMC), implemented in the Stan programming language~ \citep{Carpenter2017} in order to estimate the expectations of interest.
An excellent review of HMC can be found in~\cite{Neal2011}.
Unless stated otherwise, all of the analyses reported here are the result of running four independent chains of 2000 iterations each, with the first 1000 removed as burn-in/warmup.
Convergence was checked for by making sure all runs achieved a potential scale reduction factor (PSRF, $\hat{R}$) smaller than $1.01$.
For all expectations, we ensured Monte Carlo error (MCSE) was smaller than 5\% of the posterior standard deviation.

\subsubsection{Bridge sampling}
\label{sec:bridge}

Our approach relies heavily on estimates of $l(a_0)= \log(c(a_0))$, which are (log) marginal likelihoods, at selected values of $a_0$.
We employed bridge sampling~\citep{Meng1996,Meng2002} to compute marginal likelihoods using the methods implemented in the R package~\textbf{bridgesampling}~\citep{Gronau2017}.
% As input, we used samples from HMC as before.
% As before, unless stated otherwise, we ran four independent chains of $2000$ iterations each, removing half of each chain as burn-in/warm-up, resulting in $4000$ samples being kept.

Let $\boldsymbol{\Theta} \subseteq \mathbb{R}^d$ and let $(\boldsymbol{\Theta}, \mathcal{F}, P)$ be a probability space.
Suppose $P$ admits a density $p$ and consider computing
\begin{equation}
\label{eq:norm_const}
\nonumber
 Z = \int_{\boldsymbol{\Theta}} q(t) dP(t),
\end{equation}
where $p(\theta) = q(\theta)/Z$.
The quantity $Z$ is usually called the normalising constant of $p$, and finds use in many applications in Statistics, particularly in Bayesian Statistics.
In a Bayesian context, it is usual to compute the marginal likelihood $m(\boldsymbol{X} \mid \mathcal{M}) := \int_{\boldsymbol{\Theta}} L(\boldsymbol{X} \mid t, \mathcal{M} ) d\pi(t)$, where $\pi$ is the prior measure, and this quantity is the~\textbf{evidence} in favour of model $\mathcal{M}$. 
Computing $Z$ for most models of interest involves computing a high-dimensional integral which can seldom be solved in closed-form and is thus a difficult numerical task that requires specialised techniques.
As before, denote $f_{a_0}(D_0; \theta) = L(D_0 \mid \theta)^{a_0} \pi(\theta)$.
Here we are interested in estimating
\begin{equation}
 \label{eq:marginal_like_power_prior}
 \nonumber
 c(a_0, D_0) = \int_{\boldsymbol{\Theta}} f_{a_0}(D_0; \theta) d\theta.
\end{equation}
While previously we have omitted the dependence of $c(a_0)$ on the data $D_0$ for clarity, here we shall write the full expression for completeness.

The method proposed initially by~\cite{Meng1996} and extended by~\cite{Meng2002} gives the estimator
\begin{equation}
\label{eq:brige_estimator}
  \hat{c}(a_0, D_0) = \frac{N^{-1}\sum_{j=1}^N h(\tilde{\theta}_j) f_{a_0}(D_0; \tilde{\theta}_j) }{M^{-1}\sum_{i=1}^M h(\theta_i^\ast) g(\theta_i^\ast) },
\end{equation}
where $h(\cdot)$ is the bridge distribution and $g(\cdot)$ is a proposal density.
We then let $\tilde{\boldsymbol{\theta}} = \{\tilde{\theta}_1, \tilde{\theta}_2, \ldots, \tilde{\theta}_N \}$ and $ \boldsymbol{\theta}^\ast = \{ \theta_1^\ast, \theta_2^\ast, \ldots, \theta_M^\ast  \}$ are sets of $N$ and $M$ samples from $f_{a_0}(D_0; \theta)$ and $h$, respectively.

The performance of the estimator depends on the optimal choice of $h$, which in turn depends on $c(a_0, D_0)$, the target quantity.
To overcome this difficulty we use an iterative procedure to obtain the estimate from an initial guess $\hat{c}(a_0, D_0)^{(0)}$:
\begin{equation}
 \label{eq:iterative_estimate_marglike}
  \hat{c}(a_0, D_0)^{(t + 1)} = \frac{N}{M}\frac{\sum_{j=1}^N \frac{w(\tilde{\theta}_j)}{aw(\tilde{\theta}_j) + (1-a)\hat{c}(a_0, D_0)^{(t)} } }{\sum_{i=1}^M \frac{1}{aw(\theta_i^\ast) + (1-a)\hat{c}(a_0, D_0)^{(t)} }},
\end{equation}
where $w(\theta) = f_{a_0}(D_0; \theta)/g(\theta)$ and $a =  M/(M + N)$.
Numerically stable routines for computing~(\ref{eq:iterative_estimate_marglike}) are implemented in the package~\textbf{bridgesampling}~\citep{Gronau2017}.
As a note, the estimator in~(\ref{eq:brige_estimator}) assumes that the samples are independent and identically distributed, which is not the case when samples are obtained~\textit{via} MCMC, and hence $M$ and $N$ are replaced with estimates of the effective sample size (ESS).
Since Stan achieves high efficiency ($\text{ESS}/\text{\# samples}$) for most models considered here, this poses no problem.
Here we use the default settings of the algorithm available in~\textbf{bridgesampling}, meaning we take $g(\cdot)$ to be a multivariate proposal distribution.
As explained by~\cite{Gronau2017} the bridge sampling algorithm is robust to the tail behaviour of the target and proposal distributions as long as $h$ is optimal, which is the case here.

\subsubsection{Generalised additive models}
\label{sec:gam}

The approach in Section~\ref{sec:efficient_computation_ca0} (see also Section~\ref{sec:bridge} above) allows us to obtain a set of $J$ pairs $(a_{0i}, \hat{l}_i)$ from which the approximating function $g_{\boldsymbol{\xi}}(a_0) \approx l(a_0)$ can be estimated. 
We now detail our tool of choice to construct $g_{\boldsymbol{\xi}}$, the generalised additive model (GAM)~\citep{Wood2017}.

A GAM is a model for the conditional expectation of the dependent variable, $\mu_i := E[Y_i]$, of the form
\begin{equation}
 g(\mu_i) = \boldsymbol{X}_i^\ast\boldsymbol{\eta} + \sum_{k = 1}^q f_i(\boldsymbol{X}_i^\ast)
\end{equation}
where $g$ is a link function, $\boldsymbol{\eta}$ is a vector of coefficients for the parametric components of the model and the $f_i$ are smooth functions of the covariates.
In particular, here we are interested in the model
\begin{equation}
\label{eq:gam_one_smooth}
  \hat{l}_i = \Delta + \sum_{k = 1}^q b_k(a_{0i})\beta_k  + \epsilon_i,
\end{equation}
and where $b_k$ is the $k$-th basis function, $\Delta$ is an intercept and we assume $\epsilon_i\sim \operatorname{Normal}(0, \tau)$.
We employ the routines in the \textbf{mgcv} package~\citep{Wood2011} in R to fit GAMs and make predictions.

In our applications we employed $q = J$.
This choice leads to overfitting, which in other settings would be undesirable.
In our situation, however, overfitting is not a problem because the end goal is to predict the value of $l(a_0)$ within the measured range of the covariate $a_0$, $[m, M]$.
Once we have fitted the model in~(\ref{eq:gam_one_smooth}), we have our approximating function $g_{\boldsymbol{\hat{\xi}}}$, where $\boldsymbol{\hat{\xi}} = \{ \hat{\Delta}, \hat{\beta_1}, \hat{\beta_2}, \ldots, \hat{\beta_q}, \hat{\tau}\}$, which we can in turn use to predict $l(a_0)$ over a grid of $K$ points covering  $[m, M]$.

\section{Illustrations}
\label{sec:illustrations}

In this section we discuss applications of the normalised power prior.
We first discuss four examples where $c(a_0)$ is known in closed-form and use these as a benchmark for the approximations discussed in this paper.
Then we move on to explore two regression examples where the normalising constant is not known in closed-form and thus only the approximation is available.  

In all examples we employed a Beta prior on $a_0$ with parameters $\eta = \nu = 1$ and thus restricted attention to $a_0 \in [0, 1]$ when approximating the normalising constant -- i.e., we used $M = 1$.
The exception was the Gaussian example in Section~\ref{sec:gaussian_illus} where we used $M=10$ in order to study the method in the non-monotonic case.
For all exampĺes we used $m = 0.05$ and employed budget of $J = 20$ evaluations of $l(a_0)$ via bridge sampling.   

\subsection{Bernoulli likelihood}
\label{sec:reproduce_N2009}

In this section we revisit the Bernoulli example of~\cite{Neuenschwander2009} and show how the approximation scheme proposed in Section~\ref{sec:efficient_computation_ca0} can be used, taking advantage of the fact that $c(a_0)$ is known exactly for this example.
The historical data consist of $N_0$ Bernoulli trials $x_{0i} \in \{0,1\}$.
Suppose there were $y_0 = \sum_{i=1}^{N_0}x_{0i}$ successes.
The model reads
\begin{align*}
 \theta &\sim \operatorname{Beta}(c, d), \\
 x_{0i} \mid \theta &\sim \operatorname{Bernoulli}(\theta).
\end{align*}
This leads to a Beta posterior distribution for $\theta$,
\begin{equation}
 \label{eq:bernoulli_posterior}
 p(\theta \mid N_0, y_0, a_0) \propto \theta ^{a_0 y_0 + c - 1} (1-\theta)^{a_0 (N_0 -y_0) + d - 1},
\end{equation}
and hence~\citep{Neuenschwander2009}:
\begin{equation}
 \label{eq:cA0_bernoulli}
 c(a_0) = \frac{\mathcal{B}(a_0 y_0 + c, a_0 (N_0 -y_0) + d)}{\mathcal{B}(c, d)},
\end{equation}
where $\mathcal{B}(w, z) = \frac{\Gamma(w)\Gamma(z)}{\Gamma(w + z)}$.
The derivative, $c^\prime(a_0)$, evaluates to
\begin{equation}
  \label{eq:cA0_prime_bernoulli}
%   c^\prime(a_0) = \frac{\mathcal{B}(a_0 y_0 + c, a_0 (N_0 -y_0) + d) \left[ y_0 \left( \psi_0(a_0 y_0 + c) - \psi_0(a_0N_0 + c + d) \right) + (N_0 - y_0) \left(\psi_0(a_0 (N_0 -y_0) + d) - \psi_0(a_0N_0 + c + d) \right) \right]}{\mathcal{B}(c, d)}
c^\prime(a_0) = \frac{\mathcal{B}(z_0, w_0) \left(y_0 \left[\psi_0(w_0) - \psi_0(z_0) \right] + N_0 \left[ \psi_0(z_0) - \psi_0(w_0 + z_0) \right] \right) }{\mathcal{B}(c, d)},
\end{equation}
where $z_0 = a_0 y_0 + c $ and $w_0 = a_0 (N_0 -y_0) + d$.
If one observes new data $D = (N, y) $, one can then compute a posterior $p(\theta \mid a_0, D_0, D)$.
In the situation where one lets $a_0$ vary by assigning it a prior $\pi_A ( \cdot \mid \delta)$, one can write the marginal posterior for $a_0$ explicitly~\cite[Eq. 8]{Neuenschwander2009}:
\begin{equation}
 \label{eq:marginal_posterior_a0_Bernoulli}
 p(a_0 \mid D_0, D) \propto c(a_0) \pi_A(a_0 \mid \delta) \mathcal{B}(a_0 y_0 + y  + c -1, a_0(N_0-y_0) + (N -y) + d -1 ).
\end{equation}

\cite{Neuenschwander2009} thus consider the problem of estimating the probability of response $\theta$ in a survey where $y$ of the $N$ individuals are responders and $N -y$ are non-responders.
They consider four scenarios, that vary the historical ($D_0 = \{N_0, y_0\}$) and current data ($D = \{N, y\}$), detailed in Table~\ref{tab:results_Bernoulli}.
They employ flat Beta priors $\pi_A(a_0\mid \delta)$ with parameters $\eta = \nu = 1$ and $\pi(\theta)$ with parameters $c = d = 1$, which we also adopt here.

First, we show a sensitivity analysis where we computed the prior and posterior distributions for the quantity of interest $\theta$ for various ($J=20$) values of $a_0$ in order to gauge how the discount factor affects the inferences reached. 
In Figure~\ref{fig:Bernoulli_sensanalysis} we show the distribution of $\theta$ for various values of $a_0$ in each scenario.
When the historical and current data are compatible ($y_0/N_0 = y/N$) as in scenarios 1 and 3, we see that the prior uncertainty encompasses the posterior for all values of $a_0$.
In contrast, when there is incompatibility between the historical and current data sets, we see that the prior and posterior intervals stop overlapping for moderate values of $a_0$, an effect more prominent the larger $N$ is.
For scenario 2  we see overlap up until $a_0 \approx 0.30$, while for scenario 4, with more data, incompatibility starts to arise much earlier, around $a_0 \approx 0.05$.

\begin{figure}[!ht]
\begin{center}
\includegraphics[scale=0.6]{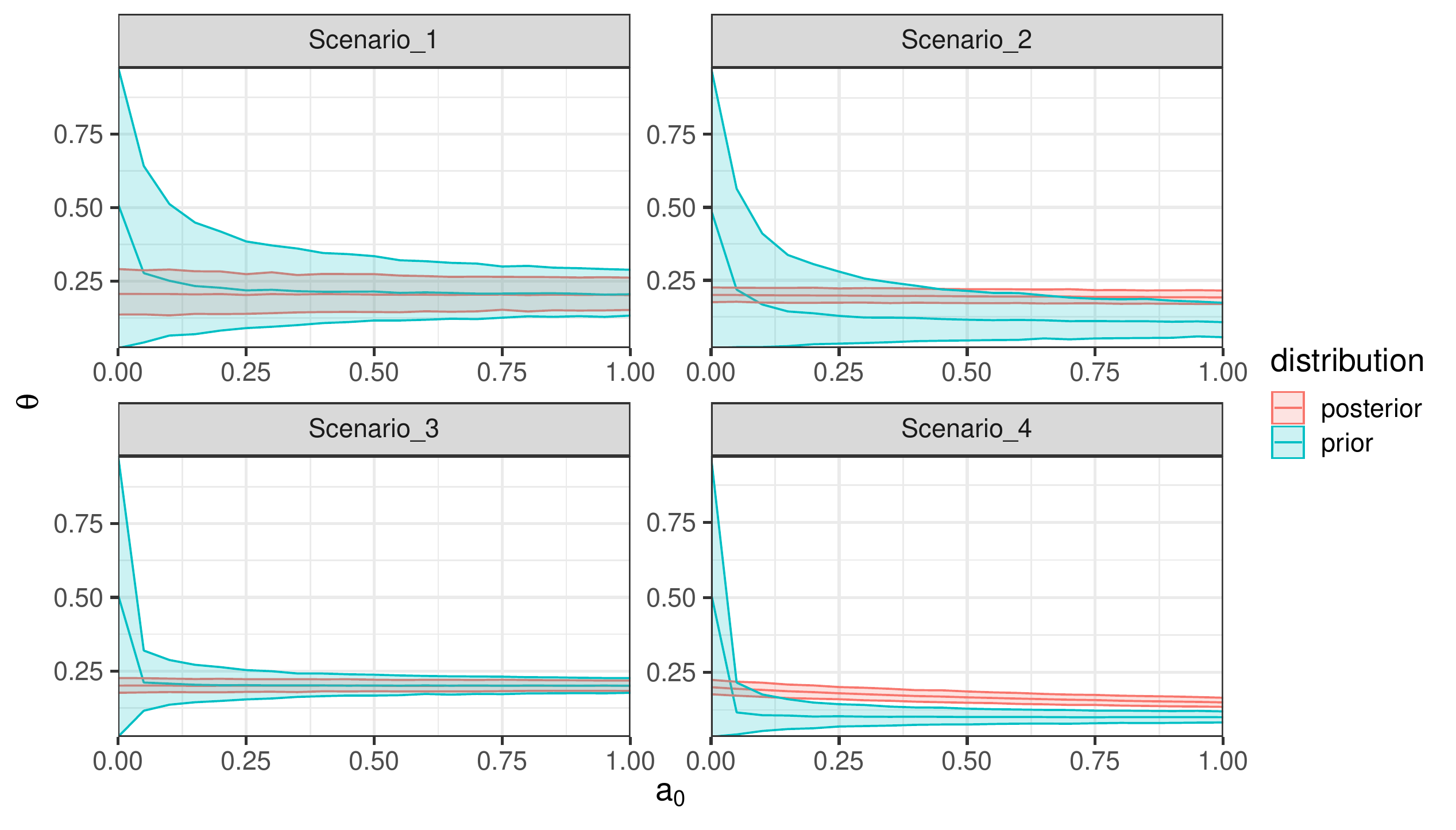}
\end{center}
\caption{\textbf{Sensitivity analysis for the Bernoulli example}.
We show the prior and posterior distribution for $\theta$ as the discounting factor $a_0$ varies.
Colours show the distribution in question:  the prior $\pi_{a_0}(\theta) = L(D_0 \mid \theta)^{a_0}\pi(\theta)$ or the posterior $p_{a_0}(\theta) = L(D\mid \theta)\pi_{a_0}(\theta)$.
% Please note that the y-axes might differ between panels.
}
\label{fig:Bernoulli_sensanalysis}
\end{figure}

We show that for all scenarios considered, posterior estimates of the response proportion $\theta$ and the power prior scalar $a_0$ are extremely consistent between the approximate normalisation and the exact normalisation given in~(\ref{eq:cA0_bernoulli}).
For all the scenarios considered, $l(a_0)$ looks approximately linear in $a_0$ as shown in Figure~\ref{sfig:ca0_Bernoulli}.
While we used a relatively fine grid ($K = 20, 000$) to create the $l(a_0)$ dictionary, we found that smaller values also gave good performance (data not shown, see below).

\begin{table}[!ht]
\caption{\textbf{Bernoulli example}.
We compare estimates of both the response proportion $\theta$ and the power prior scalar $a_0$ using the unnormalised power prior, the exactly normalised prior (Eq.~\ref{eq:cA0_bernoulli}) and an approximation obtained according to the method in Section~\ref{sec:efficient_computation_ca0}.
We approximated $l(a_0)$ using $J = 20$ grid points for estimation and $K = 20, 000$ points for prediction.}
\begin{center}
{ \small
 \begin{tabular}{cccccc}
\hline
            Scenario                &     Data            &   Parameter    & Unnormalised      & Normalised        & App. normalised \\
\hline                            
\multirow{2}{*}{Scenario 1} & \multirow{2}{*}{$\frac{y_0}{N_0} = \frac{20}{100}$, $\frac{y}{N} = \frac{20}{100}$} & $\theta$ & 0.21 (0.13, 0.29) & 0.20 (0.15, 0.27) & 0.20 (0.14, 0.27)          \\
                            &  & $a_0$  & 0.02 (0.00, 0.07) & 0.57 (0.07, 0.98) & 0.58 (0.07, 0.98)          \\
\multirow{2}{*}{Scenario 2} &  \multirow{2}{*}{$\frac{y_0}{N_0} = \frac{10}{100}$, $\frac{y}{N} = \frac{200}{1000}$} & $\theta$ & 0.20 (0.18, 0.23) & 0.20 (0.17, 0.22) & 0.20 (0.17, 0.22)          \\
                            &  & $a_0$  & 0.03 (0.00, 0.10) & 0.36 (0.02, 0.93) & 0.37 (0.03, 0.92)          \\
\multirow{2}{*}{Scenario 3} &  \multirow{2}{*}{$\frac{y_0}{N_0} = \frac{200}{1000}$, $\frac{y}{N} = \frac{200}{1000}$}& $\theta$ & 0.20 (0.18, 0.23) & 0.20 (0.18, 0.22) & 0.20 (0.18, 0.22)          \\
                            &  & $a_0$  & 0.00 (0.00, 0.01) & 0.57 (0.06, 0.98) & 0.59 (0.09, 0.98)          \\
\multirow{2}{*}{Scenario 4} &  \multirow{2}{*}{$\frac{y_0}{N_0} = \frac{100}{1000}$, $\frac{y}{N} = \frac{200}{1000}$} & $\theta$ & 0.20 (0.18, 0.23) & 0.20 (0.17, 0.22) & 0.20 (0.18, 0.23)          \\
                            &  & $a_0$  & 0.00 (0.00, 0.01) & 0.05 (0.00, 0.15) & 0.04 (0.00, 0.16)\\
\hline
\end{tabular}
}
\end{center}
\label{tab:results_Bernoulli}
\end{table}

The marginal posteriors of $a_0$ obtained for each scenario are shown in Figure~\ref{fig:marginal_a0_Bernoulli}.
As shown in Table~\ref{tab:results_Bernoulli}, the approximately normalised power prior are in close agreement with the closed-form solution, for a variety of shapes the distribution takes across scenarios.
In particular, scenarios 1 and 3 are designed such that posterior estimates of $a_0$ should be around $1/2$ in order to reflect the fact that current data is compatible with historical data.
On the other hand, scenarios 2 and 4 are designed such that there is mild incompatibility between historical and current data, and this is reflected in the properly normalised posteriors for $a_0$, whereas the unnormalised posteriors yield counter-intuitive results.

Looking closely at Figure~\ref{fig:marginal_a0_Bernoulli}d, however, we notice that while the mean and BCI of the approximately normalised posterior are not significantly different from the exactly normalised distribution, the shape of the marginal posterior density for $a_0$ does show some inconsistencies.
This serves as a warning that the approximate normalisation does not work equally well in all situations, and may be susceptible to non-linearities in the sense that a small error in approximating $c(a_0)$ might have a big impact on the estimates.

\begin{figure}[!ht]
\begin{center}
 \hfill
\subfigure[Scenario 1]{\includegraphics[width=7cm]{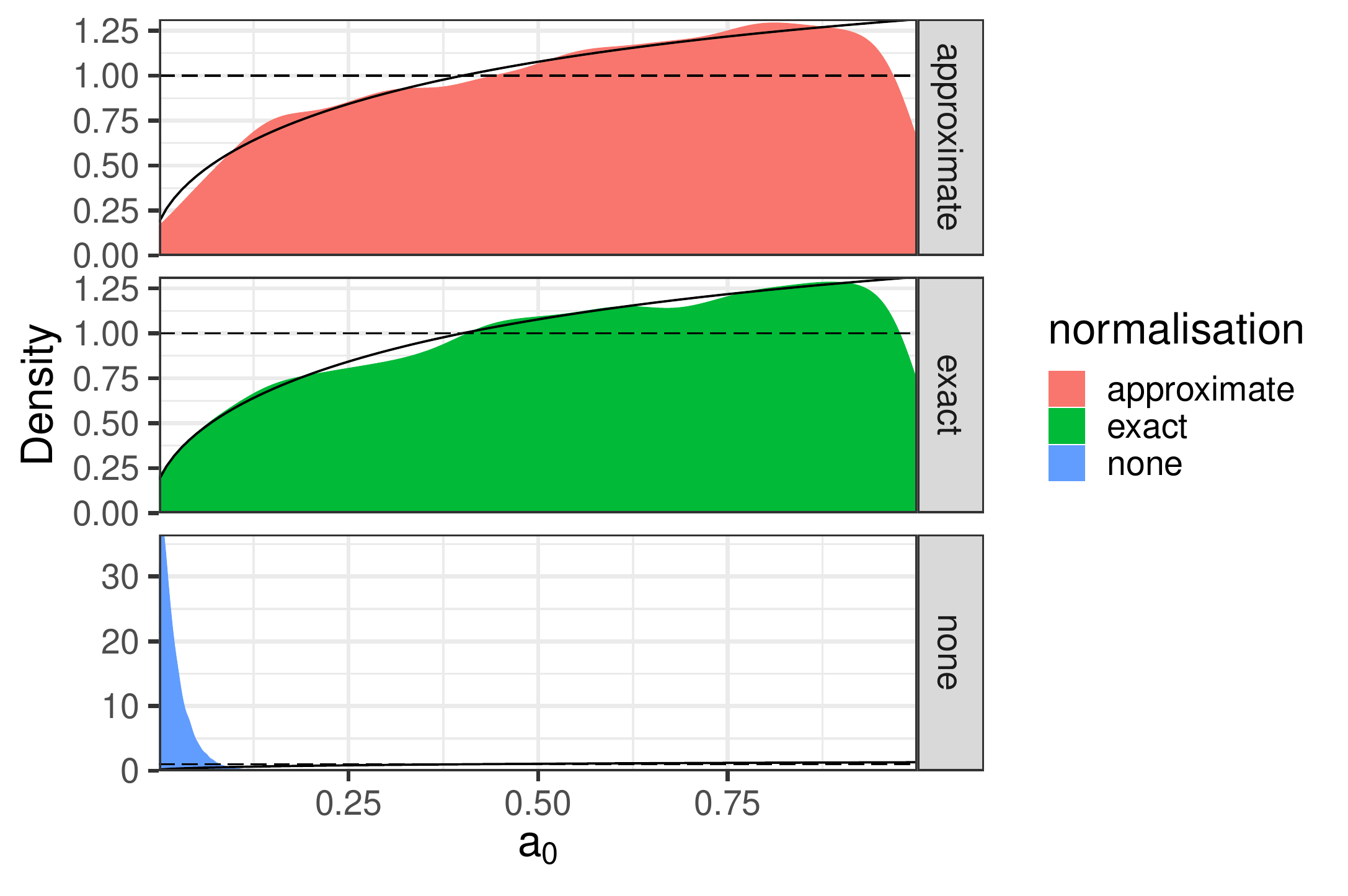}}
\hfill
\subfigure[Scenario 2]{\includegraphics[width=7cm]{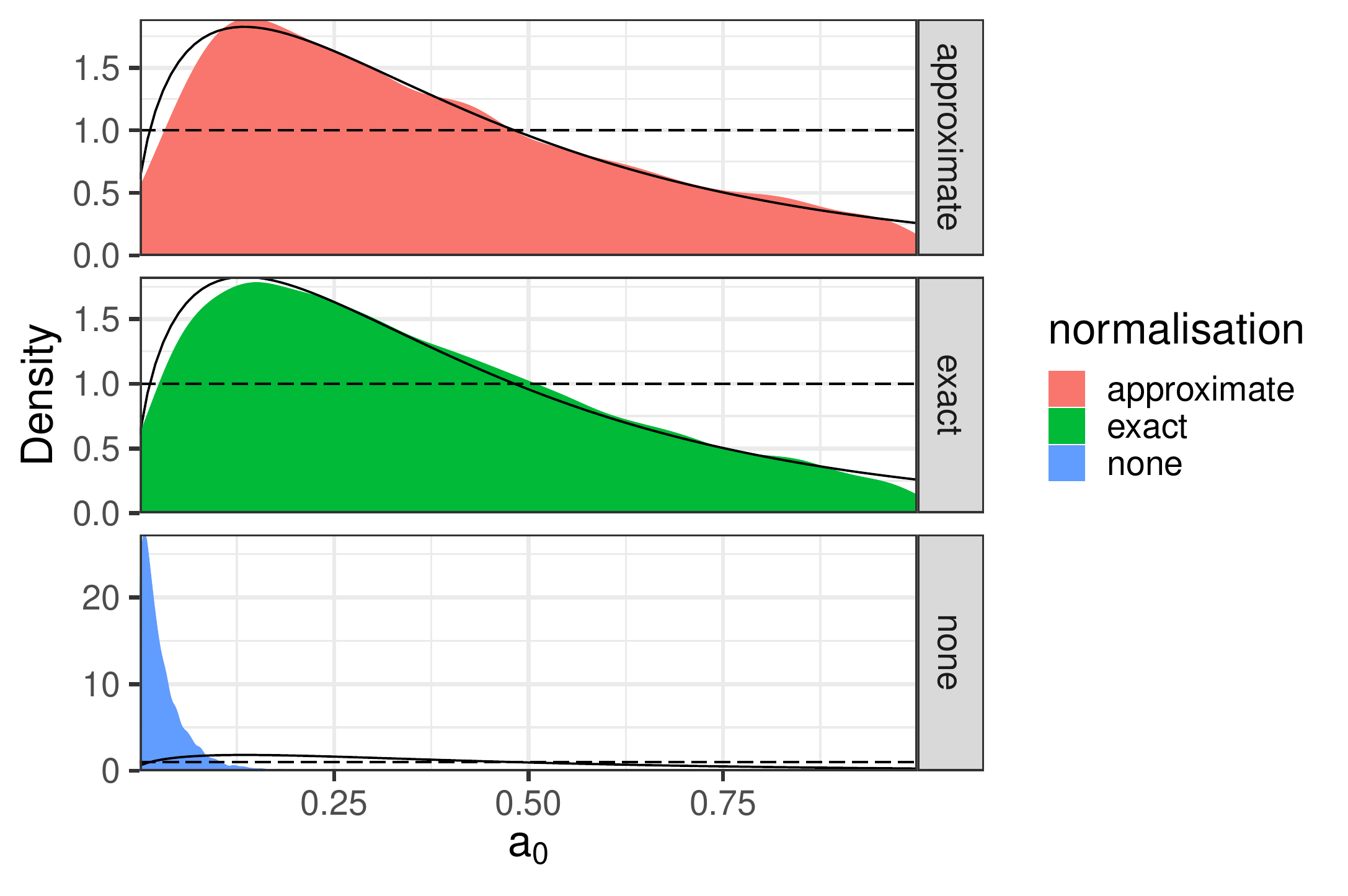}}\\
\hfill
\subfigure[Scenario 3]{\includegraphics[width=7cm]{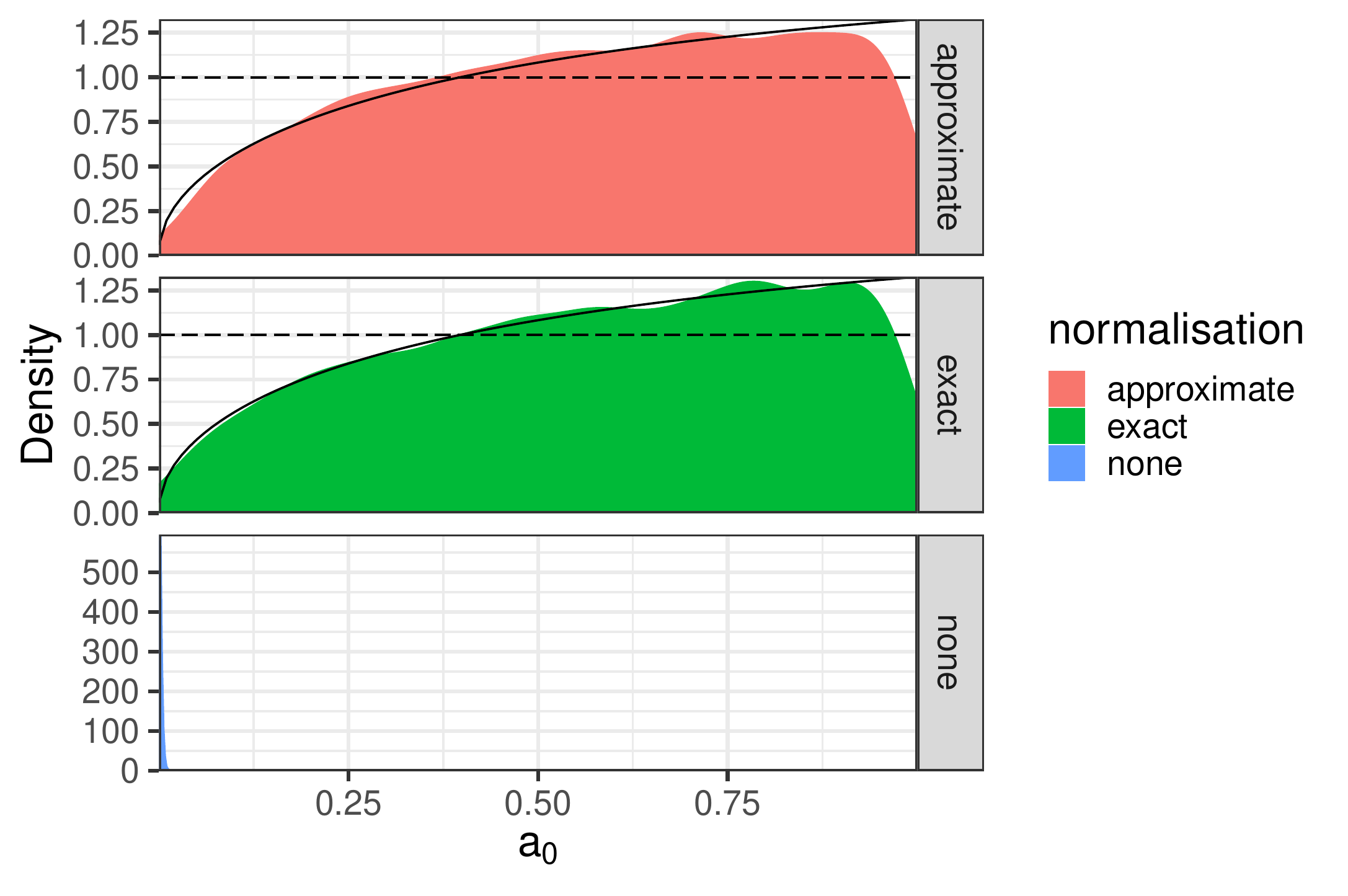}}
\hfill
\subfigure[Scenario 4]{\includegraphics[width=7cm]{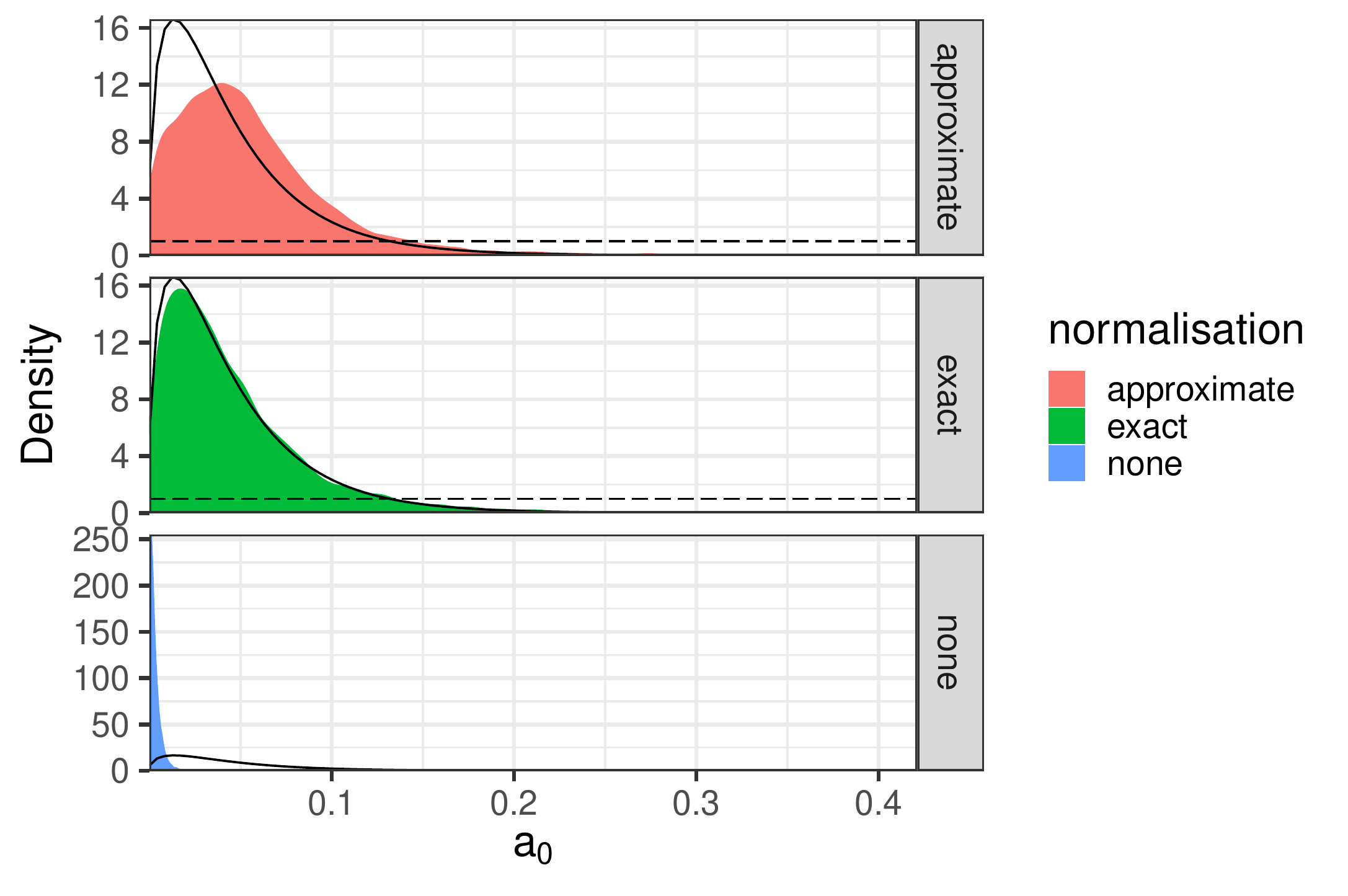}}
\hfill
\end{center}
\caption{\textbf{Marginal distributions of $a_0$ for the Bernoulli example}.
We show the marginal posterior of $a_0$ as given by~(\ref{eq:marginal_posterior_a0_Bernoulli}), normalised~\textit{via} quadrature.
Colours (and horizontal tiles) show the normalisation method used: none (unnormalised), exact (normalised) or approximate.
Horizontal dashed line marks the Beta prior with parameters $\eta = \nu = 1$.
Please note that the x-axes differ between panels.
}
\label{fig:marginal_a0_Bernoulli}
\end{figure}

\subsection{Poisson likelihood}
\label{sec:poisson_illustration}

Now consider another simple discrete example, the modelling of counts.
Suppose the historical data consist of $N_0$ observations $y_{0i} \in \{0, 1, \ldots \}$, assumed to come from a Poisson distribution.
For simplicity, we will again consider the conjugate case:
\begin{align*}
 \lambda &\sim \operatorname{Gamma}(\alpha_0, \beta_0),\\
 y_{0i} \mid \lambda &\sim \operatorname{Poisson}(\lambda).
\end{align*}
The posterior distribution is 
\begin{equation}
 p(\lambda \mid \boldsymbol y_0) \propto \frac{1}{\boldsymbol {p^\prime}^{a_0} } \lambda^{a_0\boldsymbol s} \exp(-a_0 N_0 \lambda) \times \lambda^{\alpha_0-1} \exp(-\beta_0\lambda),
\end{equation}
where $\boldsymbol s := \sum_{i=0}^{N_0} y_{0i}$ and $\boldsymbol p^\prime := \prod_{i = 0}^{N_0} y_{0i}!$, leading the closed-form expression
\begin{equation}
 \label{eq:cA0_poisson}
 c(a_0) = \frac{\beta_0^{\alpha_0}}{\Gamma(\alpha_0)}\frac{1}{\boldsymbol {p^\prime}^{a_0} } \frac{\Gamma(a_0\boldsymbol s + \alpha_0)}{\left( a_0N_0 + \beta_0 \right)^{a_0\boldsymbol s + \alpha_0} }.
\end{equation}

For this model we have (see Remark~\ref{rmk:discrete_decreasing}):
\begin{equation}
 \label{eq:cA0_prime_poisson}
 c^\prime(a_0) = \left[ -\log(\boldsymbol p^\prime) - \frac{N_0 (\alpha_0 + \boldsymbol s a_0) }{\beta_0 + N_0a_0} - \boldsymbol s \log(\beta_0 + N_0a_0) + \boldsymbol s \psi_0(\alpha_0 + \boldsymbol s a_0) \right] c(a_0).
\end{equation}

We can use this example to study the quality of the approximation to $c(a_0)$ as the number of grid points $K$ increases.
For the experiment in this section we simulated $N_0 = 200$ historical data points $\boldsymbol y_0$ with $\lambda = 2$ and $N = 100$ current data points $\boldsymbol y$ with the same rate parameter.
The prior hyperparameters are $\alpha_0 = \beta_0 = 2$.
Figure~\ref{fig:poisson} shows the resulting marginal posteriors for $a_0$ and $\lambda$ using several values of the grid size, $K$, in .
Even for relatively small values of $K$, such as $K = 50$, the approximately normalised posteriors are very similar to the posterior obtained with exact normalisation, both for $a_0$ and $\lambda$.

\begin{figure}[!ht]
\hfill
\subfigure[$a_0$]{\includegraphics[width=7cm]{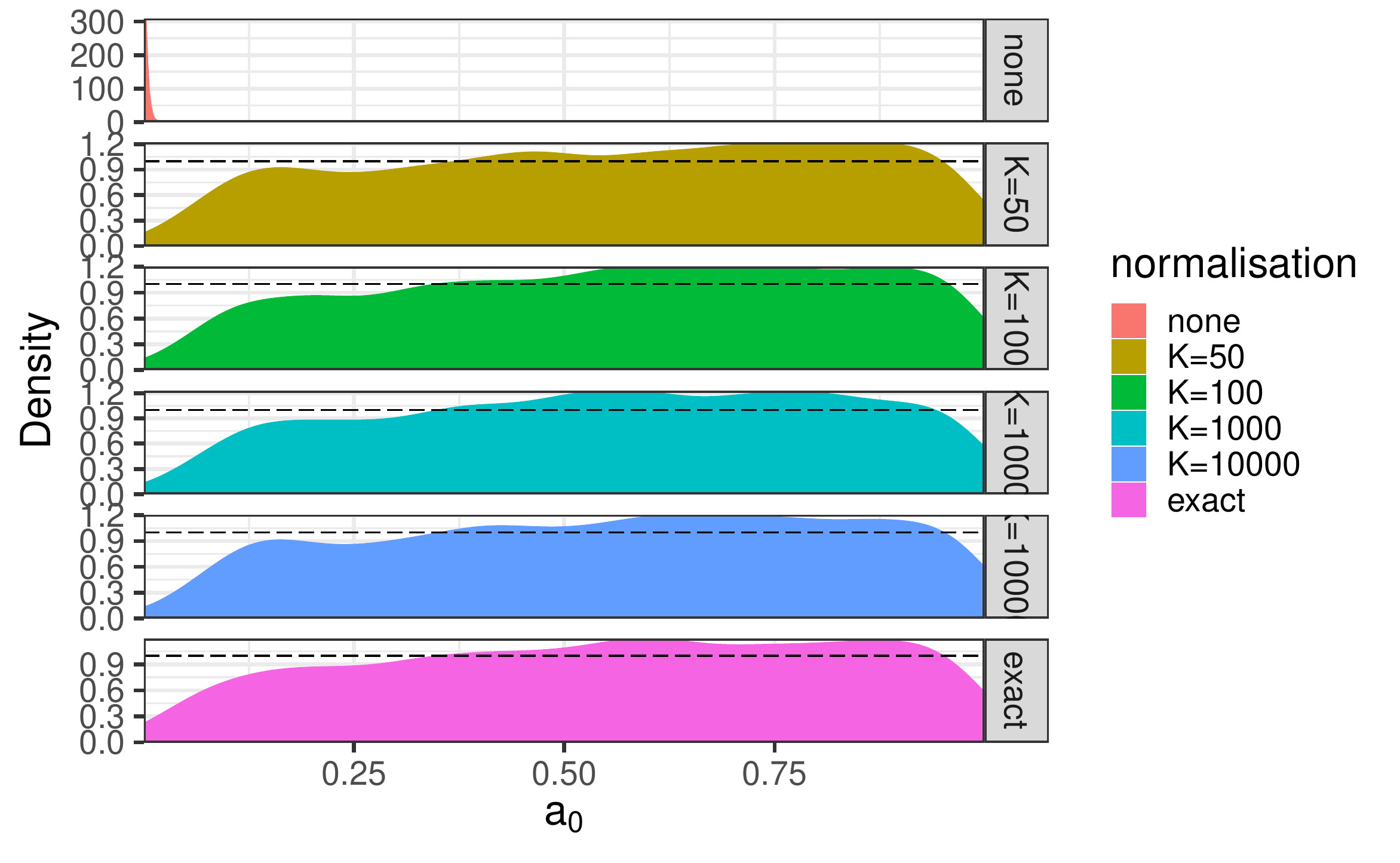}}
\hfill
\subfigure[$\lambda$]{\includegraphics[width=7cm]{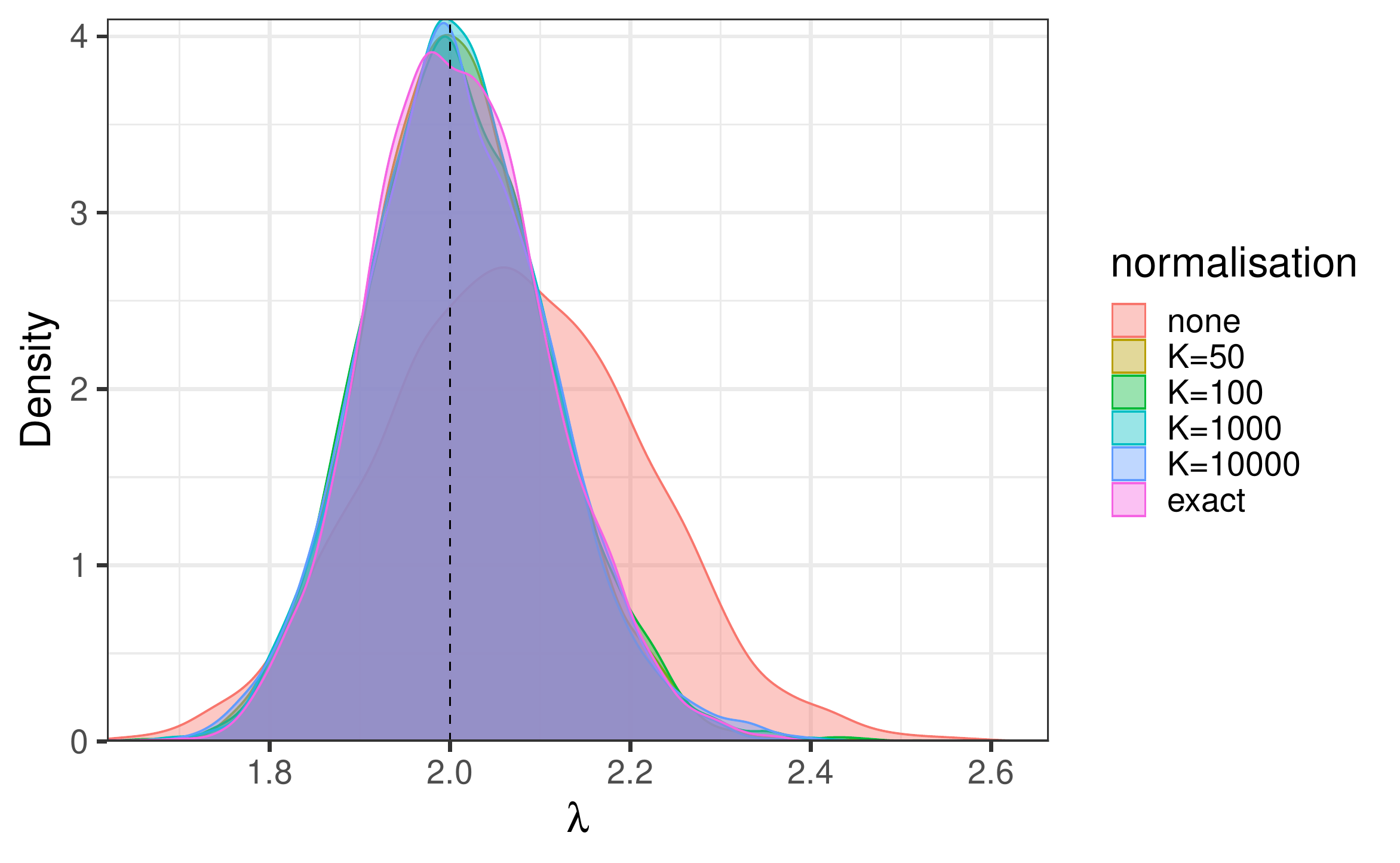}}
\hfill
\subfigure[$\lambda$]{\includegraphics[width=7cm]{parameter_posterior_Poisson.pdf}}
\hfill
\caption{\textbf{Results for the Poisson example}.
Panels (and colours) correspond to various values of the grid size, $K$, as well as the results with no normalisation.
In panel (a) we show the marginal posterior for $a_0$, using horizontal dashed lines to show the prior density of a $\operatorname{Beta}(\eta = 1, \nu = 1)$.
}
\label{fig:poisson}
\end{figure}

\subsection{Gaussian likelihood with unknown mean and variance}
\label{sec:gaussian_illus}

Now we move on to study a case where $c(a_0)$ is non-monotonic and thus presents a more challenging setting.
Suppose one has $N_0$ historical observations $y_{i0} \in \mathbb{R}, i = 1, \ldots, N_0$, which come from a Gaussian distribution with parameters $\mu$ and $\tau$.
Here we will choose a normal-Gamma conjugate model:
\begin{align*}
 \tau &\sim \operatorname{Gamma}(\alpha_0, \beta_0),\\
 \mu &\sim \operatorname{Normal}(\mu_0, \kappa_0\tau ),\\
 y_{0i} \mid \mu, \tau &\sim \operatorname{Normal}(\mu, \tau),
\end{align*}
where the normal distribution is parametrised in terms of mean and precision (see below for a different parametrisation).
The posterior distribution is again a normal-Gamma distribution and the normalising constant is
\begin{align}
 \label{eq:cA0_gaussian}
 c(a_0) &= \frac{\Gamma(\alpha_n)}{\Gamma(\alpha_0)}\frac{\beta_0^{\alpha_0}}{\beta_n^{\alpha_n}} \left(\frac{\kappa_0}{\kappa_n} \right)^2 (2\pi)^{-N_0 a_0/2},\\
 \nonumber
 \alpha_n &= \alpha_0 + \frac{1}{2}a_0N_0, \\
 \nonumber
 \kappa_n &= \kappa_0 + a_0N_0, \\
 \nonumber
 \beta_n  &= \beta_0 + \frac{1}{2}\left( a_0\sum_{i=1}^{N_0}(y_{0i}-\bar{y})^2 + \left(\kappa_0 a_0 N_0 (\bar{y}-\mu_0)^2\right)/\kappa_n \right),
\end{align}
with $\bar{y} = N_0^{-1}\sum_{i=1}^{N_0} y_{0i}$.
In Appendix~\ref{sec:ca0_norm_deriv}, we give a closed-form expression for $c^\prime(a_0)$ and characterise the point of inflection of $c(a_0)$ by giving the conditions for $c^\prime(a_0) = 0$.

To make the discussion concrete, we generate $N_0 = 50$ data points from a Gaussian distribution with parameters $\mu = -0.1$ and  $\tau = 10^{6}$.
We construct the Gamma prior on $\tau$ with $\alpha_0 = \beta_0 = 1$ and assign a Gaussian prior on $\mu$, with parameters $\mu_0 = 0$ and $\kappa_0 = 5$.
This
% admittedly odd
choice of hyperparameters leads to a function $c(a_0)$ -- Equation~(\ref{eq:cA0_gaussian}) -- that resembles a concave up parabola (Figure~\ref{fig:gaussian_results}a). 
We then generate $N = 200$ new points from the same distribution to be used as current data.
The points show the values of $l(a_0)$ and $l^\prime(a_0)$ estimated using the algorithm described in Section~\ref{sec:adapt_grid}, which exploits the derivatives of $c(a_0)$ to place more points closer to the region where $c^\prime(a_0)$ (and $l^\prime(a_0)$) changes signs.

We show the resulting marginal posteriors for $\mu$ and $\tau$ as well as $a_0$ under no normalisation, exact and approximate normalisation with various $K$ in Figure~\ref{fig:gaussian_results}b and~\ref{fig:gaussian_results}c.
The first observation is that approximations with $K > 100$ seem to produce marginal posteriors for $a_0$ that resembles the exactly normalised distribution quite closely, even in this setting, where $c(a_0)$ is non-linear.

In terms of parameter posteriors, we find that the posterior is not very sensitive to the value of $a_0$, as shown by the overlap between marginal posteriors with no normalisation as well as exact and approximate normalisation.
Even in this setting the approximately normalised marginal posteriors match their exact counterparts closely.

\begin{figure}[!ht]
\hfill
\subfigure[(log) normalising constant ]{\includegraphics[width=7cm]{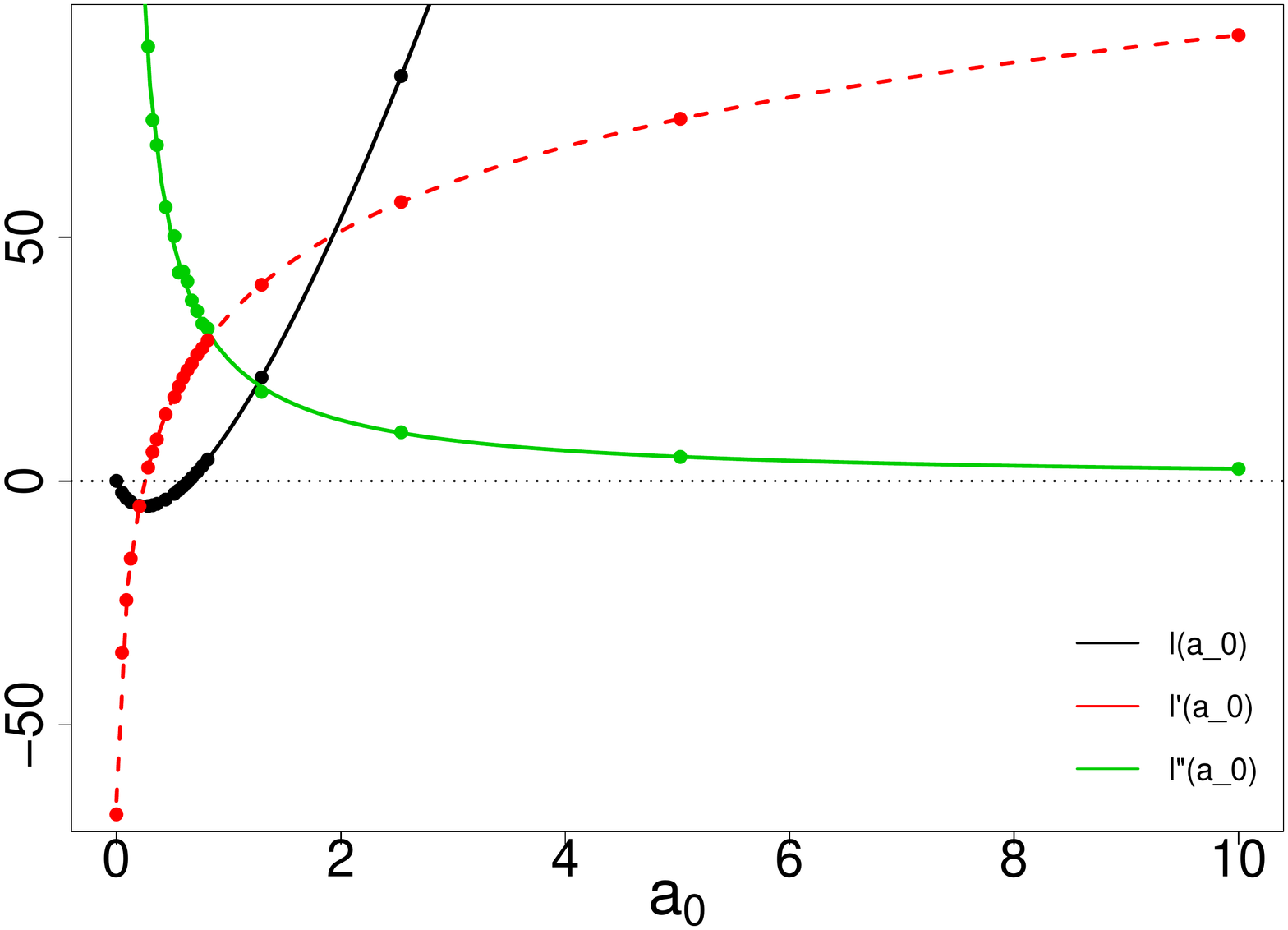}}
\hfill
\subfigure[$a_0$ posterior]{\includegraphics[width=7cm]{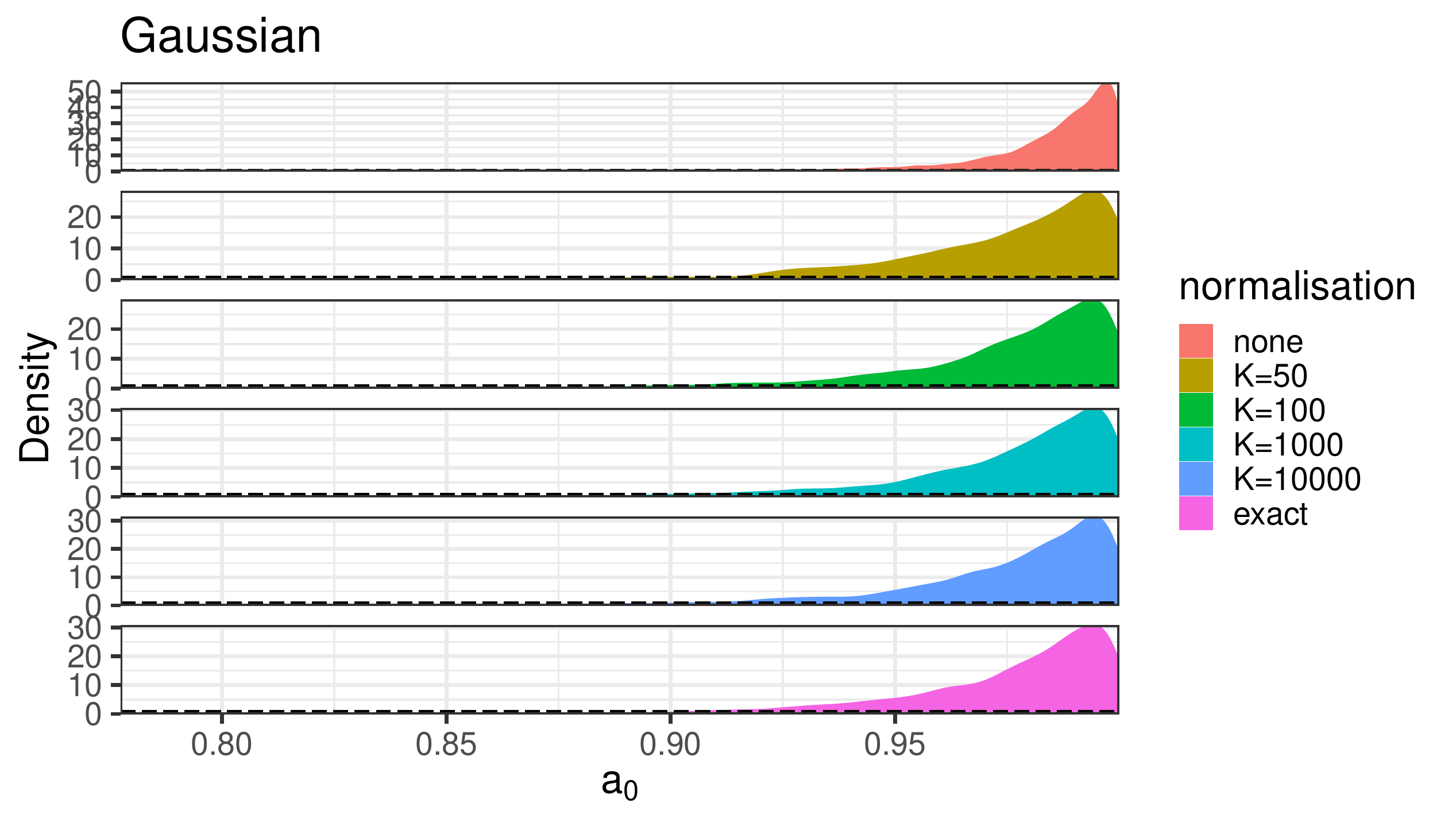}}\\
\hfill
\subfigure[Parameter posteriors]{\includegraphics[width=7cm]{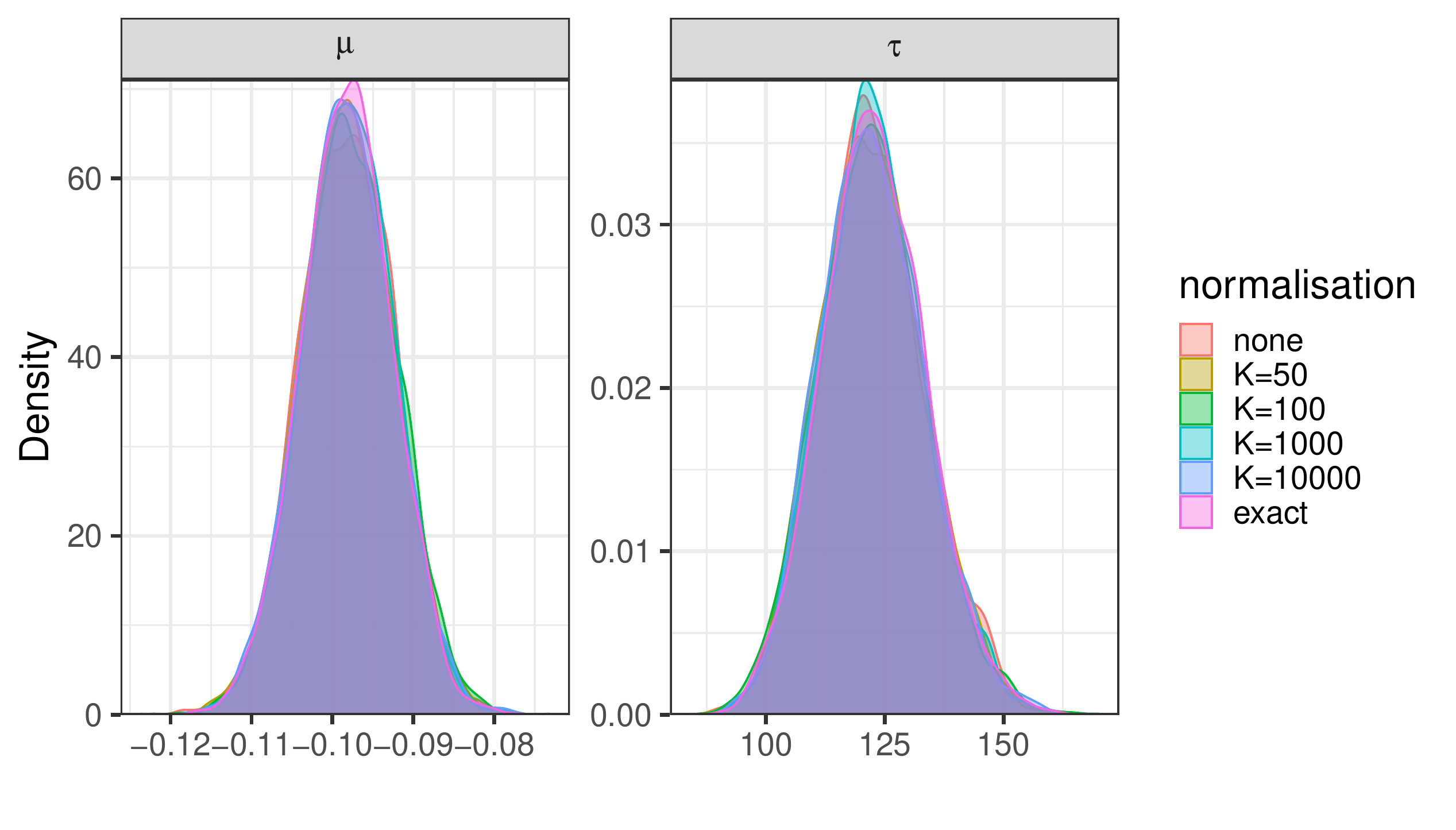}}
\hfill
\caption{\textbf{Results for the Gaussian example}.
In panel (a) we show $l(a_0) := \log(c(a_0))$ (black) and its first two derivatives (red and green, respectively).
Points show the $J = 20$ estimates of $l(a_0)$, $l^\prime(a_0)$ and $l^{\prime\prime}(a_0)$ obtained using the algorithm in Section~\ref{sec:adapt_grid}.
In panel (b)  we show the marginal posterior of $a_0$ under no normalisation, exact normalisation or approximate normalisation with various grid sizes ($K$) in each subpanel.
Horizontal dashed lines show the prior density of a $\operatorname{Beta}(\eta = 1, \nu = 1)$.
In (c) we show the marginal posteriors for $\mu$ and $\tau$ under no normalisation, exact normalisation or approximate normalisation with various grid sizes ($K$) in each subpanel.
}
\label{fig:gaussian_results}
\end{figure}

We also evaluate the performance of adaptively building the grid of $a_0$ by comparing the mean absolute error (MAD) and root mean squared error (RMSE) of the estimated function $g_{\hat{\xi}}$ to the true (exact) normalisation when using either a uniform grid or the adaptive grid.
Over the whole range of $a_0 \in [0, 10]$ we found that the uniform grid leads to an estimated function with lower MAD ($0.59$ vs $0.82$) and lower RMSE ($0.99$ vs $1.18$). 
When considering only the range $a_0 \in [0, 1]$, the support of the prior -- $\pi_A(a_0\mid \delta)$ --,  we find the opposite: the adaptive grid outperforms uniform with MAD $2.10$ vs $0.10$ and RMSE $2.73$ vs $0.13$.
This suggests that the adaptive scheme would produce better results in situations where the region where the derivative changes lies within the support of the prior.

\subsubsection{Linear regression with a normal inverse-Gamma prior}
\label{sec:linreg_ex}

To conclude the examples for which we know $c(a_0)$ in closed-form, we present a popular model for Bayesian linear regression.
Suppose $\boldsymbol X_0$ is a $N_0 \times P$ full-rank matrix of predictors and $\boldsymbol y_0 = \{y_{01}, \ldots, y_{0N_0} \}$ is a vector of observations.
For illustrative purposes, we will employ a mean and variance parametrisation, which naturally leads to a normal inverse-Gamma conjugate prior.
The model is 
\begin{align*}
 \sigma^2 &\sim \operatorname{Inverse-Gamma}(\alpha_0, \gamma_0),\\
 \epsilon_i \mid \sigma^2  &\sim \operatorname{Normal}(0, \sigma^2), \\
 \beta \mid \sigma^2 &\sim \operatorname{Normal}(\boldsymbol \mu_0, \sigma^2\boldsymbol\Lambda_0^{-1}),\\
 y_{0i} &= \boldsymbol X_{0i}^\top \boldsymbol\beta + \epsilon_i, \\
\end{align*} 
where $\boldsymbol\beta$ is a $ 1 \times P$ vector of coefficients and $\boldsymbol\Lambda_0$ is a $P \times P$ variance-covariance matrix controlling the prior variance of the coefficients.
The posterior is again a normal inverse Gamma and thus
\begin{align}
 \label{eq:cA0_regression}
c(a_0) &= \sqrt{\frac{|\boldsymbol\Lambda_n|}{|\boldsymbol\Lambda_0^{-1}|}} \frac{\gamma_0^{\alpha_0}}{\gamma_n^{\alpha_n}}\frac{\Gamma(\alpha_0)}{\Gamma(\alpha_n)}  (2\pi)^{-N_0 a_0/2},\\
\nonumber
\boldsymbol\Lambda_n &= \boldsymbol X_{\star}^\top\boldsymbol X_{\star} + \boldsymbol \Lambda_0^{-1}, \\
\nonumber
\boldsymbol\mu_n &= \boldsymbol\Lambda_n^{-1}\left(\boldsymbol\Lambda_0^{-1}\boldsymbol\mu_0 + \boldsymbol X_{\star}^\top\boldsymbol y_{\star} \right),  \\
\nonumber
\alpha_n &= \alpha_0 + \frac{1}{2}a_0N_0,\\
\nonumber
\gamma_n &= \gamma_0 + \frac{1}{2}\left( \boldsymbol y_{\star}^\top \boldsymbol y_{\star} + \boldsymbol \mu_0^\top \boldsymbol \Lambda_0^{-1} \boldsymbol \mu_0 - \boldsymbol\mu_n^\top \boldsymbol \Lambda_n \boldsymbol \mu_n  \right),
\end{align}
where $\boldsymbol X_{\star} = \sqrt{a_0} \boldsymbol X_0$ and $\boldsymbol y_{\star} = \sqrt{a_0} \boldsymbol y_0$, and $|A|$ denotes the determinant of $A$.

As a first experiment, we generate $N_0 = 1000$ data points, drawing the columns of $\boldsymbol X_0$ from a standard normal distribution and using $\boldsymbol \beta = \{ -1, 1, 0.5, -0.5\}$.
The response variable $Y_0$ is generated using a normal distribution with variance $\sigma^2 = 4$, i.e. $y_{0i} \sim \operatorname{Normal}(\boldsymbol\beta^T \boldsymbol X_{0i}, 4)$.
For the current data, we generate $N = 100$ points using the same data-generating process.
To complete the model specification we set $\alpha_0 = 1/2$, $\gamma_0 = 2$ and $\boldsymbol\Lambda_0 = \frac{3}{2}\boldsymbol I_P$, where $\boldsymbol I_P$ is the $P \times P$ identity matrix.

Results of the power prior analysis of this data are shown in Table~\ref{tab:results_NIGregression} and indicate that while parameter recovery is similar for the exactly normalised and approximately normalised posteriors, the approximate method does not recover the lower tail of the marginal posterior of $a_0$ well.
\begin{table}[!ht]
\label{tab:results_NIGregression}
\caption{\textbf{Parameter estimates for the linear regression example}.
We report the posterior mean and 95\% BCI for the regression parameters $\boldsymbol\beta$, response variance $\sigma^2$ and the power prior scalar, $a_0$.
We employed a  $\operatorname{Beta}(\eta = \nu = 1)$ as prior for $a_0$.
}
{\small
\begin{tabular}{cccccc}
\hline 
 Parameter & True & None & Exact & app. $K = 50$ & app. $K = 10000$ \\
 \hline
$\beta_0$ & -1 & -0.56 (-0.98, -0.16) & -0.91 (-1.12, -0.59) & -0.92 (-1.12, -0.63) & -0.92 (-1.12, -0.68) \\
$\beta_1$ & 1 & 0.78 (0.38, 1.18) & 0.89 (0.66, 1.09) & 0.88 (0.69, 1.08) & 0.89 (0.67, 1.08) \\
$\beta_2$ & 0.5 & 0.32 (-0.05, 0.70) & 0.50 (0.24, 0.70) & 0.50 (0.28, 0.70) & 0.51 (0.29, 0.69) \\
$\beta_3$ & -0.5 & -0.68 (-1.04, -0.34) & -0.57 (-0.79, -0.39) & -0.58 (-0.78, -0.39) & -0.57 (-0.78, -0.38) \\
$\sigma^2$ & 4 & 3.7 (2.8, 4.8) & 4.4 (3.6, 5.0) & 4.4 (3.8, 5.1) & 4.4 (3.8, 5.0) \\
$a_0$ & -- & 0.00 (0.00, 0.00) & 0.48 (0.05, 0.97) &0.43 (0.11, 0.94) & 0.49 (0.09, 0.97)\\
\hline
\end{tabular}
}
\end{table}

Next, we explore the behaviour of our approach when the dimension of the problem increases, with the goal of ascertaining if and how the performance of the method deteriorates with increasing dimension.
We devised four scenarios where we keep constant the ratio $N_0/P = 10$ and make $P = 5, 10, 50, 100$ (see Table~\ref{tab:results_NIGregression_scenarios}).
For the current data, we fixed $N_0 = 100$, which leads to a near-identification configuration for Scenario D.
For these experiments we replaced the default configurations on Stan by increasing the number of iterations (from $2000$ to $5000$) and maximum tree size (\verb|max_treedepth| from $10$ to $15$) and decreasing the step size (\verb|adapt_delta| from $0.8$ to $0.95$).

The results in Table~\ref{tab:results_NIGregression_scenarios}  suggest that even in the extreme case of scenario D, with $P=100$ parameters and $N_0 = 1000$ data points our approach is able to accurately approximate the normalising constant and the approximately normalised posteriors compare favourably to their exactly normalised counterparts.
Perhaps counterintuitively, the MRAE in the estimation of the (log) normalising constant decreases with dimension.
We hypothesise this is the effect of the function $l(a_0)$ increasing in absolute value whilst the estimation method (bridge sampling) does not lose precision at quite the same rate, leading to relatively more precise estimates for values of $a_0$ closer to $1$.
In general, failing to account for the normalising constant leads to broader credibility intervals and worse estimates of the coefficients (using the marginal posterior mean) in terms of MSE.

\begin{table}[!ht]
\label{tab:results_NIGregression_scenarios}
\caption{\textbf{Scaling of the algorithm with dimension, linear regression example.}.
For each scenario we show the mean relative absolute error (MRAE) of the estimated $l(a_0)$ for $J = 20$ points.
We show the average width of the (95\%) credibility intervals (CIs) as well as the CIs that included the true data-generating coefficients (``inclusion'') for the unnormalised, approximately normalised and exactly normalised posteriors.
We also show the mean squared error (MSE) in the estimation of $\boldsymbol{\beta}$.
For comparison, the MRAE for $N_0 = 1000$ and $P =5$ was $0.05 \times 10^{-4}$.
}
\begin{center}
{\scriptsize
\begin{tabular}{cccccc}
\hline \\
                                                              &                                  & \multicolumn{4}{c}{Scenario}                                                          \\ 
                                                              &                                  & A                 & B                   & C                   & D                     \\ 
                                                              &                                Normalisation  & $N_0 = 50, P = 5$ & $N_0 = 100, P = 10$ & $N_0 = 500, P = 50$ & $N_0 = 1000, P = 100$ \\ \hline
\multirow{3}{*}{CI width (inclusion)}                         & None                     & 0.79 (0.8)        & 0.64 (1)            & 0.89 (0.94)         & 1.26 (0.87)           \\
                                                              & Approximate                      & 0.67 (0.6)        & 0.49 (0.9)          & 0.39 (0.98)         & 0.28 (1)              \\
                                                              & Exact                       & 0.67 (0.6)        & 0.49 (0.9)          & 0.38 (0.98)         & 0.28 (1)              \\
\multirow{3}{*}{MSE $\boldsymbol{\beta}$  ($\times 10^{-2}$)} & None                     & 6                 & 3                   & 5                   & 1.6                   \\
                                                              & Approximate                      & 5                 & 1.62                & 0.8                 & 0.31                  \\
                                                              & Exact                       & 5                 & 1.6                 & 0.8                 & 0.31 \\ 
                                                               MRAE $l(a_0)$ ($\times 10^{-4}$) & -- & 7.7               & 0.87                & 0.73                & 0.52                  \\\hline                
\end{tabular}
% \begin{tabular}{@{}llllllllll@{}}
%  &  &  &  & \multicolumn{3}{l}{CI width (inclusion)} & \multicolumn{3}{l}{MSE $\boldsymbol{\beta}$  ($\times 10^{-2}$)} \\
%  &  &  & MRAE $l(a_0)$ ($\times 10^{-4}$) & Unnormalised & Approximate & Normalised & Unnormalised & Approximate & Normalised \\
% \multirow{4}{*}{Scenario} & A & $N_0 = 50, P = 5$ & 7.7 & 0.79 (0.8) & 0.67 (0.6) & 0.67 (0.6) & 6 & 5 & 5 \\
%  & B & $N_0 = 100, P = 10$ & 0.87 & 0.64 (1) & 0.49 (0.9) & 0.49 (0.9) & 3 & 1.62 & 1.6 \\
%  & C & $N_0 = 500, P = 50$ & 0.73 & 0.89 (0.94) & 0.39 (0.98) & 0.38 (0.98) & 5 & 0.8 & 0.8 \\
%  & D & $N_0 = 1000, P = 100$ & 0.52 & 1.26 (0.87) & 0.28 (1) & 0.28 (1) & 1.6 & 0.31 & 0.31
% \end{tabular}
}
\end{center}
\end{table}

We present the estimated $l(a_0)$ in each scenario in Figure~\ref{sfig:ca0_NIGRegression} and show that the derivative-based method discussed briefly in Section~\ref{sec:adapt_grid_derivOnly} performs worse as the dimension of the problem increases, as expected.
This is because it gets progressively harder to reliably estimate the derivative of $l(a_0)$ as the dimension of the parameter space increases.
As a general takeaway we note that while the method remains accurate for this admittedly simple but high-dimensional problem, we needed to change the computational specifications to increase precision (e.g. increase the number of iterations) and also use a finer approximation grid for $l(a_0)$, in particular, $K = 50, 000$ and $K = 100, 000$ points.
For scenario D, even using $K = 50, 000$ did not lead to a good approximation (Figure~\ref{sfig:a0_posterior_NIGRegression_scenarios}D).

\subsection{Logistic regression}
\label{sec:logistic_regression}

Next, we approach a problem for which $c(a_0)$ cannot be written in closed-form.
Logistic regression is very popular model for binary outcomes in the presence of explanatory variables (covariates). 
Taking $\boldsymbol Y_0 = \{y_{01}, y_{02}, \ldots, y_{0N_0} \}$ with $y_{0i} \in \{0, 1\}$ and a (assumed full rank) $N_0 \times P$ matrix of covariates $\boldsymbol X_0$ as historical data, the model we consider here is 
\begin{align*}
 y_{0i} &\sim \operatorname{Bernoulli}(\theta_i), \\
 \theta_i &= \frac{\exp(\alpha + \boldsymbol X_{0i}^T \boldsymbol \beta)}{1 + \exp(\alpha + \boldsymbol X_{0i}^T \boldsymbol \beta)},\\
 \alpha & \sim \operatorname{Normal}(0, 1), \\
 \beta_i &\sim \operatorname{Normal}(0, 1),
\end{align*}
where $\alpha$ is the intercept and $\boldsymbol\beta$ is a $P$-dimensional vector of coefficients.
Since we do not have the benefit of a closed-form $c(a_0)$ in this example, we simulate data with known parameters and study how parameters are recovered as a function of the grid size $K$.
First, we generate $N_0 = 1000$ historical data points $(\boldsymbol Y_0, \boldsymbol X_0)$, where the matrix of covariates is constructed in the same manner as in the linear regression example.
We set $\alpha = 1.2$ and $\boldsymbol\beta = \{ -1, 1, 0.5, -0.5\}$.
For the current data, we use the same data-generating process to create a set of $N = 100$ new data points $(\boldsymbol Y, \boldsymbol X)$.
A prior sensitivity analysis is shown in Figure~\ref{sfig:sensitivity_logistic_regression}.
The chief idea is that a properly normalised power prior would allow one to capture the similarities between the historical and current data, while an analysis lacking the proper normalisation would yield counter-intuitive and suboptimal results, as demonstrated in the previous examples.
The results shown in Figure~\ref{fig:logistic_regression}a seem to support this intuition, since the approximately normalised power prior leads to posterior estimates that better recover the generating parameters, while the unnormalised prior leads to more diffuse posteriors that do not capture the full information contained in the data. 

\begin{figure}[!ht]
\hfill
\subfigure[Coefficients]{\includegraphics[width=7cm]{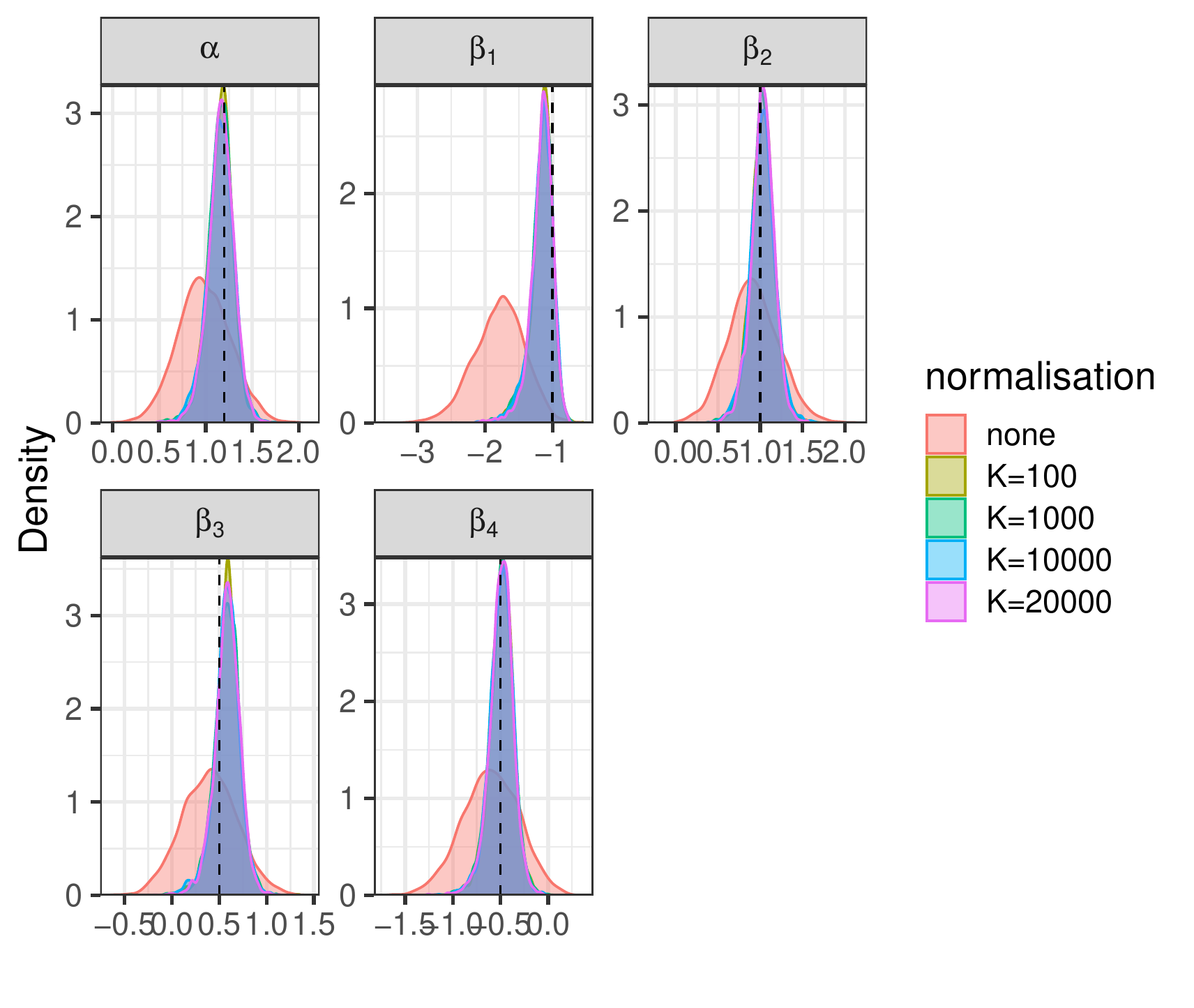}}
\hfill
\subfigure[$a_0$]{\includegraphics[width=7cm]{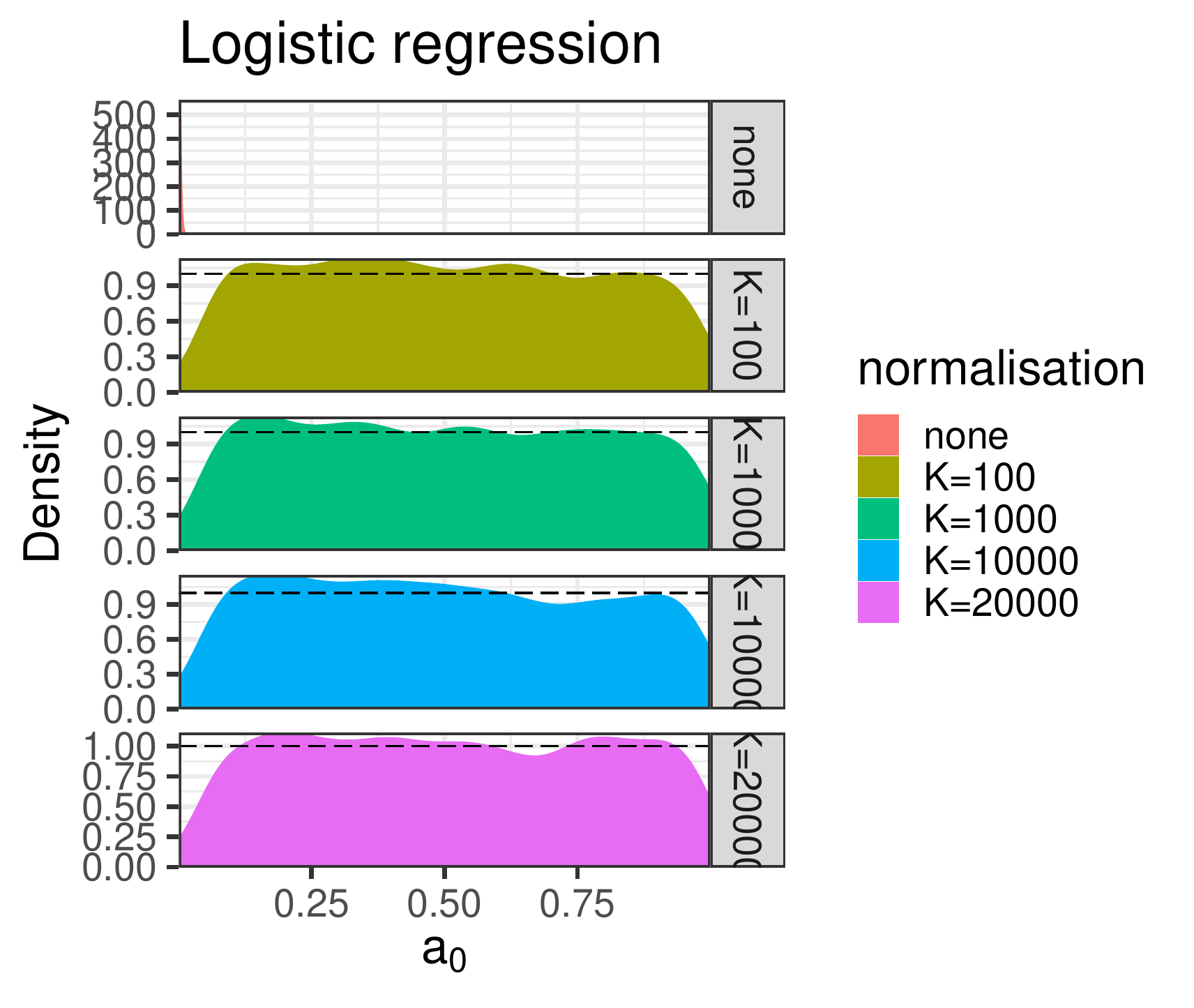}}
\hfill
\caption{\textbf{Results for the logistic regression example}.
Panel (a) shows the marginal posterior distributions for model parameters, with colours again pertaining to the approximation scheme.
Vertical dashed lines show the ``true'' parameter values of the data-generating process.
Horizontal dashed lines show the prior density of a $\operatorname{Beta}(\eta = 1, \nu = 1)$ for $a_0$.
In panel (b) the subpanels (and colours) correspond to the posterior distribution of the parameter $a_0$ when $c(a_0)$ is accounted for using various grid sizes $K$ and when it is not included.
}
\label{fig:logistic_regression}
\end{figure}

In addition we see that the approximate posteriors start to stabilise for $K > 1000$, showcasing the increased difficulty of this  multi-dimensional problem (see also Section~\ref{sec:linreg_ex}).
The bimodal marginal posterior for $a_0$ (Figure~\ref{fig:logistic_regression}b) suggests high uncertainty about the compatibility of the current data and historical data, which is unsurprising given the small number of current observations.
In order to gauge the dependence of these results on our specific parameter choices, in Figures~\ref{sfig:logistic_regression_extra}a and~\ref{sfig:logistic_regression_extra}b we show results for a similar setup with $\alpha = 0.2$ and $\boldsymbol\beta = \{ -10, 1, 5, -5\}$.
Results indicate that in this scenario with larger (absolute value) coefficients, our approximation still works well, albeit with less posterior coverage of the data-generating values.
Overall, the results of this section suggest that our approach works well in a setting where the normalising constant is not known in closed-form and leads to posterior estimates that appropriately incorporate the information in the historical data.

\subsection{Survival model with cure fraction}
\label{sec:survival}

As a final illustration, we show an application of the approximation scheme to an elaborate survival model, namely the cure rate model proposed by~\cite{Chen1999a}.
This model allows one to accommodate situations where a significant proportion of subjects is cured.
The model can be described generatively as follows.
Let $N$ be the number of carcinogenic cells left after initial treatment, assumed to follow a Poisson distribution with rate $\theta$.
Now let $Z_j$, $j = 1, 2, \ldots, N$ be i.i.d. random variables with distribution function $F(t) = 1 - S(t)$.
The variable of interest is then $T = \min(Z_j)$, $0 \leq j \leq N$, the time of relapse.
Suppose we observe i.i.d. data $\boldsymbol Y = \{ y_1, y_2, \ldots, y_n \}$ with $\boldsymbol W = \{ w_1, w_2, \ldots, w_n\}$ being indicators of whether observations are (right) censored.
If we also have a $n \times p$ matrix of covariates $\boldsymbol X$, we can then write the likelihood after marginalising over the latent variables:
\begin{equation}
\label{eq:cure_rate_likelihood}
L(\boldsymbol\beta, \boldsymbol \psi \mid \boldsymbol Y, \boldsymbol W) = \prod_{i = 1}^n \left(\theta_i f(y_i \mid \boldsymbol\phi) \right)^{w_i} \exp\left[ -\theta_i \left( 1- S(y_i \mid \boldsymbol\phi \right) \right],
\end{equation}
where $\theta_i = \exp(\boldsymbol X_i^T\boldsymbol\beta)$, with $\boldsymbol \beta$ a vector of coefficients and $\boldsymbol \psi = (\alpha, \lambda) $ the parameters of a Weibull distribution, i.e.,
$$ f(y_i \mid \boldsymbol \psi) = \alpha y_i^{\alpha -1} \exp\left[ \lambda - y_i^\alpha \exp(\lambda)\right].$$
For more details, see~\cite{Chen1999a}.

To illustrate the use of a normalised power prior, we will consider a situation where one wants to analyse data from a current clinical trial in light of historical information provided by an earlier study which includes many of the same covariates and measurements.
In particular, we consider data from a two-arm clinical trial on phase III melanoma conducted by the Eastern Cooperative Oncology Group, denoted E1684.
In this study, patients were assigned to either a interferon treatment (IFN) or observation, and survival was defined as time from randomisation to death.
We have $n = 284$ measurements for this data set.
As historical data, we employ the data from an earlier essay, denoted E1673, for which we have $n_0 = 650$ data points.
We consider three covariates: (standardised) age, sex and performance status (PS), i.e., whether the patient was fully active or other.
Since in this paper we are chiefly concerned with situations where the initial priors are proper, we modify the prior modelling of~\cite{Chen1999a} to include proper priors for all parameters.
Namely, we employ the following prior structure:
\begin{align*}
 \boldsymbol \beta &\sim \operatorname{Normal}(\boldsymbol{0}, \sigma_\beta^2 \boldsymbol{I}_P), \\
 \alpha & \sim \operatorname{Gamma}(\delta_0, \tau_0),\\
 \lambda  & \sim \operatorname{Normal}(\mu_0, \sigma_0^2),
\end{align*}
with $\sigma_\beta^2 = 10$, $\delta_0  = 1$, $\tau_0 = 0.01$, $\mu_0 = 0$ and $\sigma_0^2 = 10, 000$.
Letting $D_0 = \{ \boldsymbol Y_0, \boldsymbol W_0, \boldsymbol X_0\}$, the normalised joint power prior and posterior are, respectively,
\begin{equation}
 \label{eq:power_prior_cure_rate}
 \pi(\boldsymbol \beta, \boldsymbol \psi, a_0 \mid D_0) = \frac{L(\boldsymbol\beta, \boldsymbol \psi \mid D_0)^{a_0}\pi(\boldsymbol \beta, \boldsymbol \psi) \pi_A(a_0)}{c(a_0)},
\end{equation}
and
\begin{equation}
 \label{eq:power_posterior_cure_rate}
 p(\boldsymbol \beta, \boldsymbol \psi, a_0 \mid D_0, D) \propto \frac{L(\boldsymbol\beta, \boldsymbol \psi \mid D_0)^{a_0}\pi(\boldsymbol \beta, \boldsymbol \psi) L(\boldsymbol\beta, \boldsymbol \psi \mid D) \pi_A(a_0) }{c(a_0)},
\end{equation}
where $D = \{ \boldsymbol Y, \boldsymbol W, \boldsymbol X\}$ and we take $\pi_A(\cdot)$ to be a Beta prior with parameters $\eta$ and $\nu$, taking values as in Table~\ref{tab:survival_results}.
In their original analysis of the melanoma data, \cite{Chen1999a} employed an unnormalised power prior.
Here, we revisit their analysis (Table 4 therein) and employ a normalised power prior which can then be compared to the results with an unormalised prior (Table~\ref{tab:survival_results}).
The sensitivity analysis presented in Figure~\ref{sfig:cure_rate_sensitivity} suggests that all model parameters are very sensitive to small values of $a_0$, whereas posteriors stabilise for values of $a_0 > 0.1$.

The results in Table~\ref{tab:survival_results} show that including the normalisation factor $c(a_0)$ leads to substantially different parameter estimates.
In particular, the normalised prior leads to the posterior BCI for the coefficient of age and sex excluding zero, showing that when information is properly accounted for through correct normalisation of the power prior, inferences might change.
Looking at the posterior estimates for $a_0$, we note that these changes likely stem from the fact that when employing a (approximately) normalised power prior, we give the historical data more weight and thus effectively increase the amount of data entering the model.

\begin{table}[!ht]
\caption{\textbf{Results for the cure fraction rate model}.
For several choices of prior for $a_0$, we show posterior means and 95\% BCIs for the model coefficients as well as $\boldsymbol \psi = \{ \alpha, \lambda\}$ and $a_0$ under no normalisation and approximate normalisation using the methods proposed in Section~\ref{sec:efficient_computation_ca0}.
We employed $J = 20$ evaluations to estimate $c(a_0)$ and $K = 2E4$ points for the approximation grid.
}
\begin{tabular}{cccc}
\hline
                                            &           & \multicolumn{2}{c}{}                      \\ \hline
Prior on $a_0$, $\operatorname{Beta}(\eta, \nu)$    & Parameter & Unnormalised         & App. Normalised  \\
\hline
\multirow{7}{*}{$\eta = 1$, $\nu  = 1$}     & Intercept & 0.10 (-0.11, 0.31)   & 0.47 (0.24. 0.70)           \\
                                            & Age       & 0.09 (-0.05,0.23)    & 0.15 (0.03, 0.26)           \\
                                            & Sex    & -0.13 (-0.44, 0.18)  & -0.31 (-0.48, -0.13)        \\
                                            & PS        & -0.23 (-0.76, 0.25)  & -0.04 (-0.33, 0.36)         \\
                                            & $\alpha$  & 1.30 (1.13, 1.48)    & 1.02 (0.93, 1.12)           \\
                                            & $\lambda$ & -1.36 (-1.63, -1.12) & -1.80 (-2.03, -1.55)        \\
                                            & $a_0$     & 0.00 (0.00, 0.00)    & 0.41 (0.19, 0.94)           \\
\multirow{7}{*}{$\eta = 50$, $\nu  = 50$}   & Intercept & 0.20 (-0.02, 0.44)   & 0.50 (0.29, 0.71)           \\
                                            & Age       & 0.10 (-0.04, 0.24)   & 0.15 (0.05, 0.26)           \\
                                            & Sex    & -0.17 (-0.44, 0.10)  & -0.33 (-0.49, -0.19)         \\
                                            & PS        & -0.19 (-0.72, 0.27)  & 0.07 (-0.25, 0.37)          \\
                                            & $\alpha$  & 1.17 (1.02, 1.32)    & 1.01 (0.92, 1.10)           \\
                                            & $\lambda$ & -1.47 (-1.75, -1.21) & -1.83 (-2.05, -1.62)        \\
                                            & $a_0$     & 0.03 (0.02, 0.04)    & 0.48 (0.37, 0.60)           \\
\multirow{7}{*}{$\eta = 100$, $\nu  = 100$} & Intercept & 0.28 (0.05. 0.52)    & 0.51 (0.29, 0.72)           \\
                                            & Age       & 0.11 (-0.02, 0.16)   & 0.16 (0.05, 0.26)           \\
                                            & Sex    & -0.20 (-0.43, 0.02)  & -0.33 (-0.49, -0.18)        \\
                                            & PS        & -0.15 (-0.61, 0.27)  & 0.07 (-0.27, 0.37)          \\
                                            & $\alpha$  & 1.11 (0.98, 1.24)    & 1.01 (0.92, 1.09)           \\
                                            & $\lambda$ & -1.56 (-1.84, -1.30) & -1.83 (-2.05, -1.63)        \\
                                            & $a_0$     & 0.07 (0.05, 0.08)    & 0.49 (0.42, 0.56)           \\
\multirow{7}{*}{$\eta = 200$, $\nu  = 1$}   & Intercept & 0.38 (0.14, 0.63)    & 0.54 (0.36, 0.72)           \\
                                            & Age       & 0.13 (0.01, 0.25)    & 0.17 (0.09, 0.26)           \\
                                            & Sex    & -0.25 (-0.45, 0.06)  & -0.36 (-0.49, -0.24)        \\
                                            & PS        & -0.09 (-0.52, 0.31)  & 0.15 (-0.11, 0.39)          \\
                                            & $\alpha$  & 1.05 (0.93, 1.17)    & 1.00 (0.93, 1.07)           \\
                                            & $\lambda$ & -1.68 (-1.96, -1.48) & -1.89 (-2.06, -1.73)        \\
                                            & $a_0$     & 0.14 (0.12. 0.16)    & 1.00 (0.98, 1.00)           \\
\hline                                            
\end{tabular}
\label{tab:survival_results}
\end{table}

\section{Discussion}
\label{sec:discussion}

\subsection{Starting from a sensitivity analysis}
\label{sec:sensitivity_analysis}

The starting point for the methodology presented here is a prior sensitivity analysis (PSA), in which one computes the distribution $L(D_0 \mid \theta)^{a_0}\pi(\theta)$ for a range of values of $a_0$ in order to gauge how sensitive the resulting prior is to the discounting (tempering) parameter. 
The class of models amenable to such an analysis thus comprises models: (i) that are well-established/studied and thus there is little need to test different likelihood functions or even initial priors; and (ii) for which one is able to compute the estimates in reasonable time such that a sensitivity analysis of the sort discussed here is feasible. 

Taking these conditions as given, we then first propose a simple way of picking a fixed budget of, say, $J=20$, values for $a_0$ at which to compute the power prior distribution using a bisection-type algorithm based on the theoretical results in Section~\ref{sec:properties}.
Since there are many instances in which one would wish to represent the uncertainty about $a_0$ as probability distribution, we propose a way to recycle computations in order to approximately sample from the joint posterior of $(a_0, \boldsymbol{\theta})$ where $\boldsymbol{\theta}$ are the parameters of interest.
This requires computing the normalising constant (Eq~\ref{eq:normconst}) as observed by~\cite{Neuenschwander2009}.

Sensitivity to the effects of normalisation varies between models and data configurations; the Bernoulli model in Section~\ref{sec:reproduce_N2009} shows little difference in parameter estimates between unormalised and normalised posteriors, whilst for the regression and survival examples -- Sections~\ref{sec:linreg_ex} and~\ref{sec:survival}, respectively -- parameter estimates, in particular their precisions are affected more strongly.
In terms of shape, the normalising constant $c(a_0)$ seen as function of the discounting scalar $a_0$ is usually monotonic, at least for majority of the examples we have considered.
The notable exception is the somewhat artificial example of a Gaussian likelihood in Section~\ref{sec:gaussian_illus} (Figure~\ref{fig:gaussian_results}a) for which $c(a_0)$ resembles a convex parabola, illustrating the results in Section~\ref{sec:properties}, which tell us that the normalising constant is a strictly convex function of the discounting scalar, $a_0$.
This problem was devised so as to test our ability to approximate $c(a_0)$ in a difficult setting, namely when it is not monotonic and varies over a large range -- we give more motivation for a theoretical analysis in Appendix~\ref{sec:ca0_norm_deriv}.
Results indicate the method proposed here is able to correctly approximate the normalising constant and thus provide a usable technique when $c(a_0)$ is not known in closed-form.

\subsection{The normalised power prior as a doubly-intractable problem}
\label{sec:doubly_intractable}

The normalised power prior is closely related to the class of doubly-intractable problems, which encompasses Markov random fields~\citep{Besag1974} and exponential random graph models~\citep{Robins2007} and many others.
For a review, see~\cite{Park2018}.

To see how our problem fits into the doubly-intractable framework, we can re-write equation~(\ref{eq:joint_posterior}) as  
\begin{align*}
 p(a_0, \theta \mid D_0, D, \delta) &\propto L(D \mid \theta, a_0) \pi(\theta, a_0 \mid D_0, \delta), \\
 &\propto L(D \mid \theta, a_0) h(\theta \mid a_0, D_0) \pi_A(a_0 \mid \delta),\\
\end{align*}
with $h(\theta \mid a_0, D_0) := c(a_0)^{-1} L(\theta \mid D_0)^{a_0}\pi(\theta)$ playing the part of an intractable likelihood where $\theta$ is seen as data.
With the exception of the double Metropolis-Hastings algorithm of~\cite{Liang2010}, most computational tools available rely on the ability to simulate from $h(\theta \mid a_0, D_0)$ relatively easily, which is often not the case in our setting (see below). 

On the other hand, the fact that $c(a_0)$ is univariate allows our approximation scheme to be feasible for many models. 
In contrast, extending our approach to multiple historical data sets (Remark~\ref{rmk:historical}) would thus be a non-trivial task, since one would need to ``observe'' $c(a_{01}, a_{02}, \ldots, a_{0M})$ at many points (on a $M$-dimensional grid) in order to obtain a good approximation.

\subsection{Exact and inefficient or inexact and efficient?}

In contrast to many existing algorithms such as auxiliary variable MCMC~\citep{Moller2006}, the methodology we put forth in this paper does not lead to sampling from the exact joint posterior of $a_0$ and $\theta$.
Our method is what~\cite{Park2018} call an ``asymptotically inexact'' algorithm because we replace the true (power) prior with an approximate density with normalising constant $g_{\boldsymbol{\hat{\xi}}}(a_0)$.

While it would obviously be preferable to have an exact algorithm, it is important to strike a balance between simplicity and exactitude.
The noise in an inexact algorithm can be decomposed into approximation error and Monte Carlo error, whereas the noise an exact algorithm comes solely from the Monte Carlo approximation and thus can, in theory, be made arbitrarily small.
In practice, however, it is entirely possible for the error from a suboptimally implemented exact algorithm to be larger than that of an efficient inexact method.
Almost all available state-of-the-art exact samplers for doubly intractable problems require careful consideration of the proposal distributions, as in the case of the double Metropolis-Hastings sampler of~\cite{Liang2010}, and/or the ability to easily simulate from the intractable likelihood~\citep{Murray2012,Park2018,Stoehr2019}, which in our case is not feasible.

Here we have devised a simple framework that employs the very efficient dynamic Hamiltonian Monte Carlo implemented in Stan~\citep{Carpenter2017} and requires very little programming effort to include any model from the class discussed above.
Our results show that the adaptive grid-building with GAM-based approximation works well for a range of problems and this gives us confidence that in this instance one should prefer an efficient inexact algorithm to a potentially inefficient exact one.

\subsection{Current limitations and future directions}
\label{sec:future}

The method presented here can be improved in many respects.
First, it is possible that better approximations  to $c(a_0)$ could be devised by using custom curve-fitting methods that incorporate the fact that  $l^\prime(a_0)$ -- and  $l^\prime(a_0)$ -- is monotonically increasing (see Section~\ref{sec:properties}), such as the Gaussian process methods discussed by~\cite{Riihimaki2010} and~\cite{Wang2016}.

Secondly, extending the methodology here to multiple historical data sets is straightforward only under the assumption of independence between data sets.
The grid-based approach that we have shown to work well here is going to scale poorly with dimension in the sense that if one has $K$ historical data sets and decides to use $J$ points for the sensitivity analysis, one ends up computing $KJ$ posteriors.
Moreover, under non-independence incorporating uncertainty about multiple weights at once would necessitates careful consideration of the prior distribution over $\boldsymbol{a_0}$.

Finally, while our inexact approach makes it possible for practitioners to perform sensitivity analyses and sample from the approximate joint posterior efficiently, this should not discourage the development of more efficient exact algorithms.
The main challenge for the normalised power prior in particular is that it is not easy to sample from 
$h(\theta \mid a_0, D_0)$, making it difficult to implement auxiliary variable-type algorithms.
This points to double-Metropolis~\cite{Liang2010}-type algorithms as the most promising class of exact algorithms to be developed for the analysis of the normalised power prior.

\section*{Acknowledgements}

The authors would like to thank Aditya Ravuri for pointing out the first part of the proof of Theorem 1. 
LMC would like to thank Leo Bastos for helpful discussions, Dr. Beat Neuenschwander for clarifications regarding his paper and Chris Koenig and Ben Jones for testing the computer code developed for this paper.
This study was financed in part by the Coordenação de Aperfeiçoamento de Pessoal de Nível Superior - Brasil (CAPES) Finance Code 001.

\bibliography{power_prior}

\appendix

\counterwithin{figure}{section}
\counterwithin{table}{section}

\section{Additional results and proofs}
\label{sec:further_proofs}

Proof of Theorem~\ref{thm:integrability}.

\begin{proof}
Denote $f_{a_0}(D_0;\theta) := L(D_0 \mid \theta)\pi(\theta)$.
First, note that $c(0) = 1$ because $\pi$ is proper.
For $0 < a_0 \leq 1$ the function $g(x) = x^{a_0}$ is concave and thus, by Jensen's inequality and the finiteness of $L( D_0 \mid \theta)$ for all of its arguments we have
\[ c(a_0) = \int_{\Theta} f_{a_0}(D_0; \theta) \, \, d\theta \leq \left[ \int_{\Theta} L(D_0  \mid \theta)\pi(\theta) \, \, d\theta \right]^{a_0} < \infty. \]
Rewrite $f_{a_0}(D_0; \theta) = L(D_0 \mid \theta)^{a_0 -1} L(D_0 \mid \theta)\pi(\theta)$.
If $1 \leq a_0 \leq 2$, we have the Jensen's inequality case above, since we know that $L(D_0 \mid \theta)\pi(\theta)$ is normalisable (proper).
Similarly, if $2 \leq a_0 \leq 3$, we can write 
\[  f_{a_0}(D_0; \theta) = L(D_0 \mid \theta)^{a_0-p} L(D_0 \mid \theta)^p\pi(\theta), \]
with $1 \leq p \leq 2$, again falling into the same case, since we know that $L(D_0 \mid \theta)^{p}\pi(\theta)$ is normalisable.
We can then show that for any $n \in \mathbb{N}$, $\int_{\Theta}f_{a_0}( D_0 ; \theta)\, d\theta < \infty$ for  $n-1 \leq a_0 \leq n$.
The base case for $1 \leq n \leq 3$ is established.
Now suppose the hypothesis holds for $n \geq 3$.
For $ n \leq  a_0 \leq n + 1$ and $n-1 \leq p_n \leq n$:
\begin{align*}
 \int_{\Theta} L(D_0 \mid \theta)^{a_0-p_n} L(D_0 \mid \theta)^{p_n}\pi(\theta)\, d\theta < \infty, \\
\end{align*}
because $0 \leq a_0 - p_n \leq 1$ and $L(D_0 \mid \theta)^{p_n}\pi(\theta)$ is proper by hypothesis, establishing the case for $n + 1$.
\end{proof}

\begin{remark}
 \label{rmk:improper}
 \textbf{Improper initial priors}. If $\pi$ is improper but $L(\theta \mid D_0)\pi(\theta)$ is integrable, i.e. the posterior is proper, then Theorem~\ref{thm:integrability} holds for $a_0 > 0$.
\end{remark}
\begin{proof}
 Analogous to the proof of Theorem~\ref{thm:integrability}, only excluding the boundary case $a_0 = 0$.
\end{proof}

Now let us prove Remark~\ref{rmk:historical}:
\begin{proof}
 Recall that the power prior on multiple historical data sets is of the form~\citep[Eq. 2.9]{Ibrahim2015}:
 \[ \pi(\theta \mid \boldsymbol D, \boldsymbol a_0) \propto \prod_{k=1}^M L(\theta \mid D_k)^{a_{0k}} \pi_0(\theta). \]
Assume, without loss of generality, that $L(\theta \mid  D_k)^{a_{0k}} > 1$ for all $\theta$ and let $m := \max(\boldsymbol a_0)$ with $\boldsymbol a_0 := \{ a_{01}, a_{02}, \ldots, a_{0M}\}$.
Then $\pi(\theta \mid \boldsymbol D, a_0)$ is bounded above by 
\[  g(\theta) :=  \prod_{k=1}^M L(\theta \mid D_k)^{m} \pi_0(\theta) =  \left[ \prod_{k=1}^M L(\theta \mid D_k) \right]^m  \pi_0(\theta) = L(\theta \mid \boldsymbol D)^m \pi_0(\theta), \]
which is normalisable following Theorem~\ref{thm:integrability}.
To relax the assumption made in the beginning, notice that this construction also bounds the case $ 0 \leq  L(\theta \mid  D_k)^{a_{0k}} \leq 1$ (for some $k$) above.
\end{proof}

To prove Lemma~\ref{lm:convex_norm_constant}, it is convenient to first establish the following proposition:

\begin{proposition}
\label{prop:c_is_Cinfinity}
All of the derivatives of $c(a_0)$ exist, i.e., $c \in \mathcal{C}^{\infty}$.
\end{proposition}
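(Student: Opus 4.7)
The plan is to apply Leibniz's rule (differentiation under the integral sign) inductively. Since $\partial_{a_0}^n L(D_0 \mid \theta)^{a_0} = L(D_0 \mid \theta)^{a_0}\,[\log L(D_0 \mid \theta)]^n$, the candidate $n$-th derivative is
\[
 c^{(n)}(a_0) \;=\; \int_{\Theta} L(D_0 \mid \theta)^{a_0}\,[\log L(D_0 \mid \theta)]^n\, \pi(\theta) \, d\theta.
\]
To justify this for every $n$ and every $a_0 > 0$, it suffices to exhibit, in a neighbourhood of any fixed $a_0^\star > 0$, an integrable dominating function for $\theta \mapsto L(D_0 \mid \theta)^{a}\, [\log L(D_0 \mid \theta)]^n\, \pi(\theta)$ that is uniform in $a$ on that neighbourhood.

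To build this envelope, I would fix $a_0^\star > 0$, pick $\epsilon \in (0, a_0^\star/2)$, and split $\Theta$ into $\Theta_- = \{\theta : L(D_0 \mid \theta) \leq 1\}$ and $\Theta_+ = \{\theta : L(D_0 \mid \theta) > 1\}$. The elementary inequality $|\log x|^n \leq C_{n,\epsilon}\, x^{-\epsilon}$ on $x \in (0,1]$, which follows from $x^\epsilon |\log x|^n \to 0$ as $x \downarrow 0$, gives for every $a \in [a_0^\star - \epsilon,\, a_0^\star + \epsilon]$ the bound
\[
 L(D_0\mid\theta)^{a}\, |\log L(D_0\mid\theta)|^n\, \pi(\theta)\;\leq\; C_{n,\epsilon}\, L(D_0\mid\theta)^{a_0^\star - 2\epsilon}\, \pi(\theta) \quad \text{on } \Theta_-,
\]
which is integrable by Theorem~\ref{thm:integrability} because $a_0^\star - 2\epsilon > 0$. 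The analogous inequality $(\log x)^n \leq C'_{n,\epsilon}\, x^{\epsilon}$ for $x \geq 1$ yields
\[
 L(D_0\mid\theta)^{a}\, (\log L(D_0\mid\theta))^n\, \pi(\theta) \;\leq\; C'_{n,\epsilon}\, L(D_0\mid\theta)^{a_0^\star + 2\epsilon}\, \pi(\theta) \quad \text{on } \Theta_+,
\]
which is likewise integrable by Theorem~\ref{thm:integrability}. Summing the two pieces produces a single integrable envelope that is uniform on $[a_0^\star - \epsilon, a_0^\star + \epsilon]$.

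With the envelope in hand, a straightforward induction on $n$ using the dominated convergence theorem lets us interchange $\partial_{a_0}$ and $\int_{\Theta}$ at each step, establishing both the existence of $c^{(n)}(a_0^\star)$ and the stated formula. Continuity of each $c^{(n)}$ on $(0,\infty)$ follows from a further application of dominated convergence with the same envelope. Since $n \in \mathbb{N}$ and $a_0^\star > 0$ were arbitrary, $c \in \mathcal{C}^{\infty}\bigl((0, \infty)\bigr)$.

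The main obstacle is only a bookkeeping one: the logarithm's singularity at $L = 0$ together with its unboundedness as $L \to \infty$ cannot be absorbed by any single power of $L$, which forces the split of $\Theta$ into $\Theta_-$ and $\Theta_+$ and the use of two independent $\epsilon$-cushions in the exponent. The boundary case $a_0 = 0$ is not covered by this argument, since on $\Theta_-$ the envelope $L^{a_0^\star - 2\epsilon}\pi$ degenerates; differentiability at the origin would require the separate hypothesis that $\log L(D_0\mid\theta)$ be integrable against $\pi$.
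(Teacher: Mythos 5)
Your proof is correct, but it takes a genuinely different route from the paper's. The paper changes variables to $l = \log L(D_0 \mid \theta)$, views the power prior as an exponential tilting of the pushforward of $\pi$ under that map, and argues that the moment-generating function $M_t(l)$ is finite whenever $t + a_0 > 0$ (by Theorem~\ref{thm:integrability}), so that all moments of $l$ --- which are, up to the normalisation by $c(a_0)$, the derivatives of $c$ --- exist. You instead work directly on $\Theta$ and justify repeated differentiation under the integral sign by exhibiting an explicit dominating envelope, splitting into $\{L \leq 1\}$ and $\{L > 1\}$ and absorbing each power of $|\log L|$ into a small $\epsilon$-shift of the exponent, with integrability of both pieces again supplied by Theorem~\ref{thm:integrability}. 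The MGF argument is shorter and conceptually tidy, but it leaves two things implicit: the measure-theoretic construction of the density $g(l)$ of the pushforward, and the interchange of differentiation and integration needed to identify $E_h[l^r]$ with $c^{(r)}(a_0)/c(a_0)$ in the first place --- which is precisely the step your dominated-convergence envelope makes rigorous. Both arguments genuinely require $a_0 > 0$ for full control (the paper's MGF is finite only for $t > -a_0$, so a two-sided neighbourhood of $t = 0$ exists only when $a_0 > 0$, mirroring the degeneration of your envelope on $\Theta_-$); you flag this boundary case explicitly, and correctly note that differentiability at $a_0 = 0$ would need the extra hypothesis $\int_{\Theta} |\log L(D_0\mid\theta)|\,\pi(\theta)\,d\theta < \infty$, a caveat the paper's statement and proof pass over.
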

\begin{proof}
First, we will assume that $L(D_0 \mid \theta) > 0\: \forall \theta \in \Theta$.
For convenience, let
\[  f(\theta) = \frac{L(D_0 \mid \theta)^{a_0} \pi(\theta)}{c(a_0)}. \]
Now, consider the change of variables $\theta \mapsto l$, with $l = \log(L(D \mid \theta))$.
Then we write
\[ h (l) = \frac{\exp(a_0 l) g(l)}{z(a_0)},\]
where $g(l)$ is a non-negative function that accommodates the transform $\theta \mapsto l$ with respect to the prior $\pi$ and $z(a_0)$ is the appropriate normalsing constant, guaranteed to exist by Theorem~\ref{thm:integrability}.
The moment-generating function (MGF) of $l$ is 
\[ M_t(l) = E_h[\exp(tl)] = \int_{-\infty}^\infty \frac{\exp((t + a_0) l) g(l)}{z(a_0)}\, dl.\]
Since $E_h[l^r] \equiv  \frac{d^r c(a_0)}{d a_0^r}$, all that remains is to show that $M_r(l)$ exists for all $r \geq 0$.
Under the change of variables discussed above, Theorem~\ref{thm:integrability} shows that
\[ \int_{-\infty}^\infty \exp(wl) g(l)\, dl < \infty,\]
for $w > 0$.
Making $w = t + a_0$ concludes the proof.
\end{proof}

Now we establish Lemma~\ref{lm:convex_norm_constant}.
\begin{proof}
Define the normalising constant as a function $c : [0, \infty) \to (0, \infty)$,
\begin{equation}
 \label{eq:normconst}
 c(a_0) := \int_{\Theta} L(D_0 \mid \theta)^{a_0} \pi(\theta)\, d\theta,
\end{equation}
which is positive and continuous on its domain.
The first and second derivatives are
\begin{align}
\label{eq:derivative_ca0}
c^\prime(a_0) &= \int_{\Theta} L(D_0 \mid \theta)^{a_0} \pi(\theta) \log L(D_0 \mid \theta) \, d\theta, \\
c^{\prime\prime}(a_0) &= \int_{\Theta} L(D_0 \mid \theta)^{a_0} \pi(\theta) [\log L(D_0 \mid \theta)]^2 \, d\theta,
\end{align}
and the integrals always exist (as per Proposition~\ref{prop:c_is_Cinfinity}).
Differentiation under the integral sign is justified because both $L(D_0 \mid \theta)^{a_0} \pi(\theta)$ and $L(D_0 \mid \theta)^{a_0} \pi(\theta) \log L(D_0 \mid \theta)$ are continuous with respect to $\theta$.
From this we conclude that $c$ is (strictly) convex and $c^\prime$ is monotonic, because $c^{\prime\prime}$ is always positive.
\end{proof}

\section{The derivative of $c(a_0)$ for the normal case}
\label{sec:ca0_norm_deriv}

In this section we give more detail on the analysis of the Gaussian example of Section~\ref{sec:gaussian_illus} in the main text.
Define
\begin{align*}
 c(a_0) &= g(a_0)h(a_0)w(a_0)z(a_0), \\
 g(a_0) &:=  \frac{\Gamma\left( \alpha_0 + \frac{N_0}{2}a_0 \right)}{\Gamma(\alpha_0)}, \\
 h(a_0) &:= \frac{\beta_0^{\alpha_0}}{ \left(  \beta_0 + \Delta a_0\right)^{\alpha_0 + \frac{N_0}{2}a_0}}, \\
 w(a_0) &:= \left(\frac{\kappa_0}{\kappa_0 + N_0 a_0} \right)^2 , \\
 z(a_0) &:= (2\pi)^{-N_0 a_0/2}, 
\end{align*}
with $\Delta =  \frac{1}{2}\left( \sum_{i=1}^{N_0}(y_{0i}-\bar{y})^2 + \frac{\kappa_0}{\kappa_n} N_0 (\bar{y}-\mu_0)^2 \right)$.
Thus, dropping dependency on $a_0$ for notational compactness, we have
\begin{equation}
\label{eq:c_deriv_gaussian}
 c^\prime = h w z g^\prime + g w z h^\prime + g h z w^\prime + g h w z^\prime.
\end{equation}
Notice that only the first term of~(\ref{eq:c_deriv_gaussian}) is positive.
Since $g^\prime(a_0) = \frac{N_0}{2} \psi_0\left( \alpha_0 +  \frac{N_0}{2} a_0 \right)g(a_0)$, we can write the following inequality:
\begin{equation*}
 c^\prime(a_0) > 0 \implies \frac{N_0}{2} \psi_0\left( \alpha_0 +  \frac{N_0}{2} a_0 \right) > \frac{|h^\prime(a_0)|}{h(a_0)} + \frac{|w^\prime(a_0)|}{w(a_0)} + \frac{|z^\prime(a_0)|}{z(a_0)}.
\end{equation*}
Since
\begin{align*}
 \frac{|h^\prime(a_0)|}{h(a_0)}  &=  \frac{\Delta\left( \alpha_0 + \frac{N_0}{2} a_0 \right) }{\Delta a_0 + \beta_0} + \frac{N_0}{2}\log{\left( \Delta a_0+ \beta_0 \right) }, \\
 \frac{|w^\prime(a_0)|}{w(a_0)}  &= \frac{2N_0}{a_0N_0+\kappa_0},\\
\frac{|z^\prime(a_0)|}{z(a_0)}  &= \log(2\pi) \frac{N_0}{2},
\end{align*}
we arrive at
\begin{align}
\nonumber
 \frac{N_0}{2} \psi_0\left( \alpha_0 +  \frac{N_0}{2} a_0 \right) &>  \frac{\Delta\left( \alpha_0 + \frac{N_0}{2} a_0 \right) }{\Delta a_0 + \beta_0} + \frac{N_0}{2}\log{\left( \Delta a_0+ \beta_0 \right) } + \frac{2N_0}{a_0N_0+\kappa_0} + \log(2\pi) \frac{N_0}{2}, \\
\psi_0\left( \alpha_0 +  \frac{N_0}{2} a_0 \right) &>  \frac{\Delta\left( 2\alpha_0 + N_0a_0 \right) }{N_0\left(\Delta a_0 + \beta_0\right)} + \log{\left( \Delta a_0+ \beta_0 \right) } + \frac{4}{a_0N_0+\kappa_0} + \log(2\pi).
\end{align}

% \section{Details on the approximation of the normalising constant}
% \label{sec:bridge_and_gam}

\section{Comparing approximations of $l(a_0)$}
\label{sec:derivative_only}

In this section we study two approaches to estimating $l(a_0)$, using four examples where it is known in closed-form.
First, we consider the main approach discussed in this paper, which consists of estimating $l(a_0)$ at a grid of $J = 15$ points of $a_0$ and using a GAM as the approximating function $g_\xi$ to approximate $l(a_0)$ directly.
We then evaluate the fitted function at a grid of $K = 20,000$ values to form a vector $\boldsymbol l_{\text{direct}}$.

Another approach is to use estimates of $l^\prime(a_0)$ (see Equation~\ref{eq:derivative_ca0}) as data and fit a GAM as the approximating function $h_\omega$ and evaluate this function at a fine grid of values for $a_0$.
We can then approximate $l(a_0)$ by midpoint integration, forming a vector of predictions/estimates $\boldsymbol l_{\text{deriv}}$.
Other methods, such as trapezoid integration might also be used.
We then compare the estimated values with the true values, $\boldsymbol l_{\text{true}}$, by computing the root mean squared error, $\hat{r} = \sqrt{\frac{1}{K} \sum_{i= 1}^K \left( l^{(i)}_{\text{est}} - l^{(i)}_{\text{true}} \right)^2 }$.

We show results for the Bernoulli, Poisson, Gaussian and linear regression in Table~\ref{tab:rmse_approx}.
Results are presented for two values of the $a_0$ endpoint, $M = 1$ and $M = 10$.
As expected, estimates (predictions) derived using direct estimation of $l(a_0)$ are substantially more accurate.
The only instance in which the derivative-based method is more accurate is for the Gaussian model and for a large endpoint $M = 10$.
Of the four models considered, only the Gaussian example (see section~\ref{sec:gaussian_illus}) has a non-monotonic $l(a_0)$, which might explain the observed results.

\begin{table}[!ht]
\caption{\textbf{ Mean root squared error comparison of methods for approximating $l(a_0)$}.
We used $J = 20$ points to construct $\boldsymbol a^{\text{est}}$ and use a GAM to approximate either $l(a_0)$ or $l^\prime(a_0)$. 
In the latter case, we evaluate the fitted function on a fine grid ($K = 20, 000$ points) and obtain an approximation of $l(a_0)$~\textit{via} midpoint integration (see text).
}
\begin{center}
\label{tab:rmse_approx}
\begin{tabular}{ccccc}
\hline
        Model          & \multicolumn{2}{c}{$M = 1$} & \multicolumn{2}{c}{$M = 10$} \\
\hline
                  & Direct & Deriv + midpoint & Direct  & Deriv + midpoint \\
Bernoulli         & 0.08   & 1.61             & 0.17    & 1.03             \\
Poisson           & 0.05   & 1.02             & 0.13    & 1.97             \\
Linear regression & 0.33   & 6.33             & 0.71    & 7.06             \\
Gaussian          & 1.74   & 10.08            & 31.29   & 12.11           \\
\hline
\end{tabular} 
\end{center}
\end{table}

\newpage

\section{Supplementary Figures}
\label{sec:extra_figs}

\begin{figure}[!ht]
\begin{center}
 \hfill
\subfigure[Scenario 1]{\includegraphics[width=7cm]{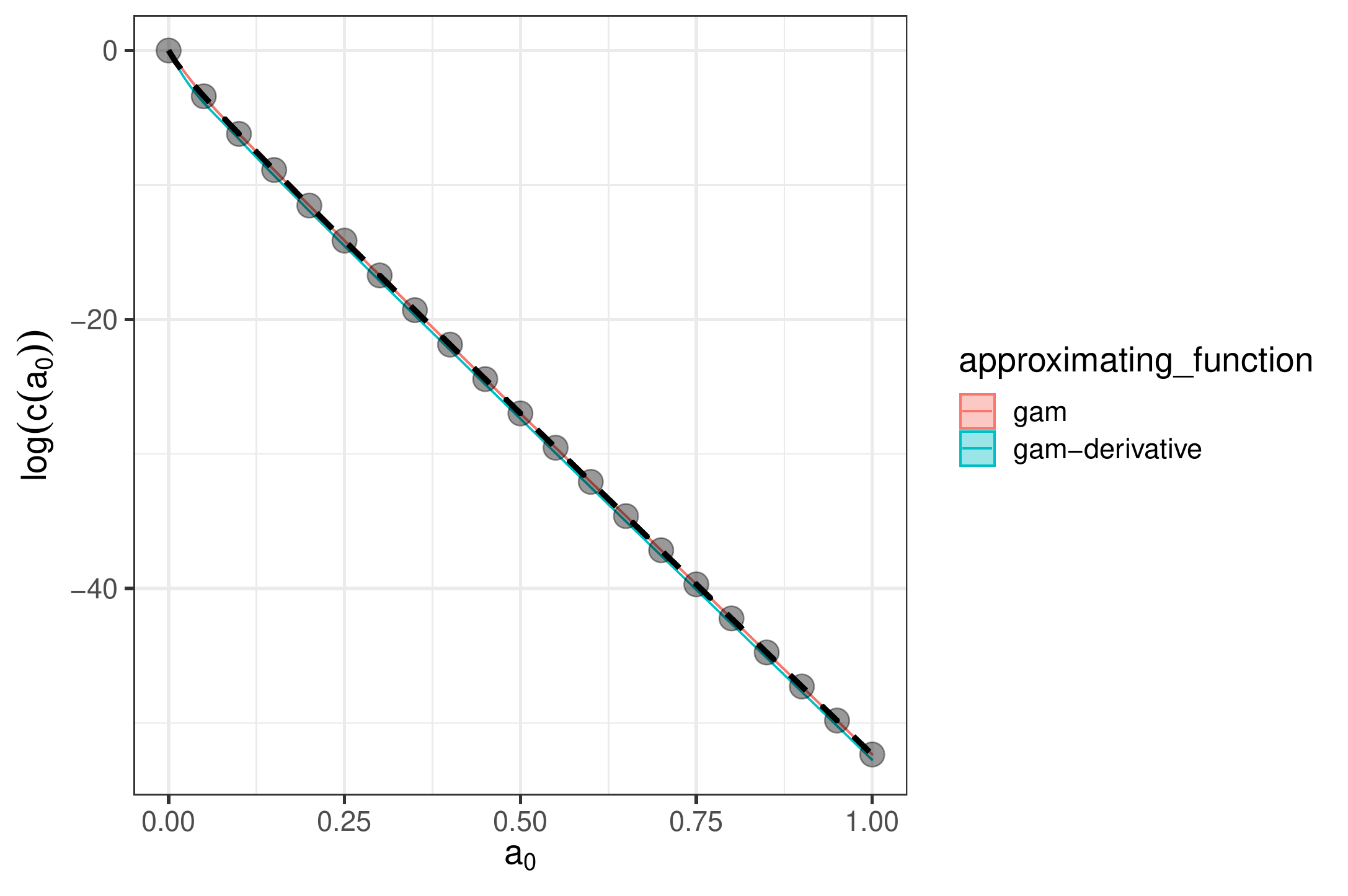}}
\hfill
\subfigure[Scenario 2]{\includegraphics[width=7cm]{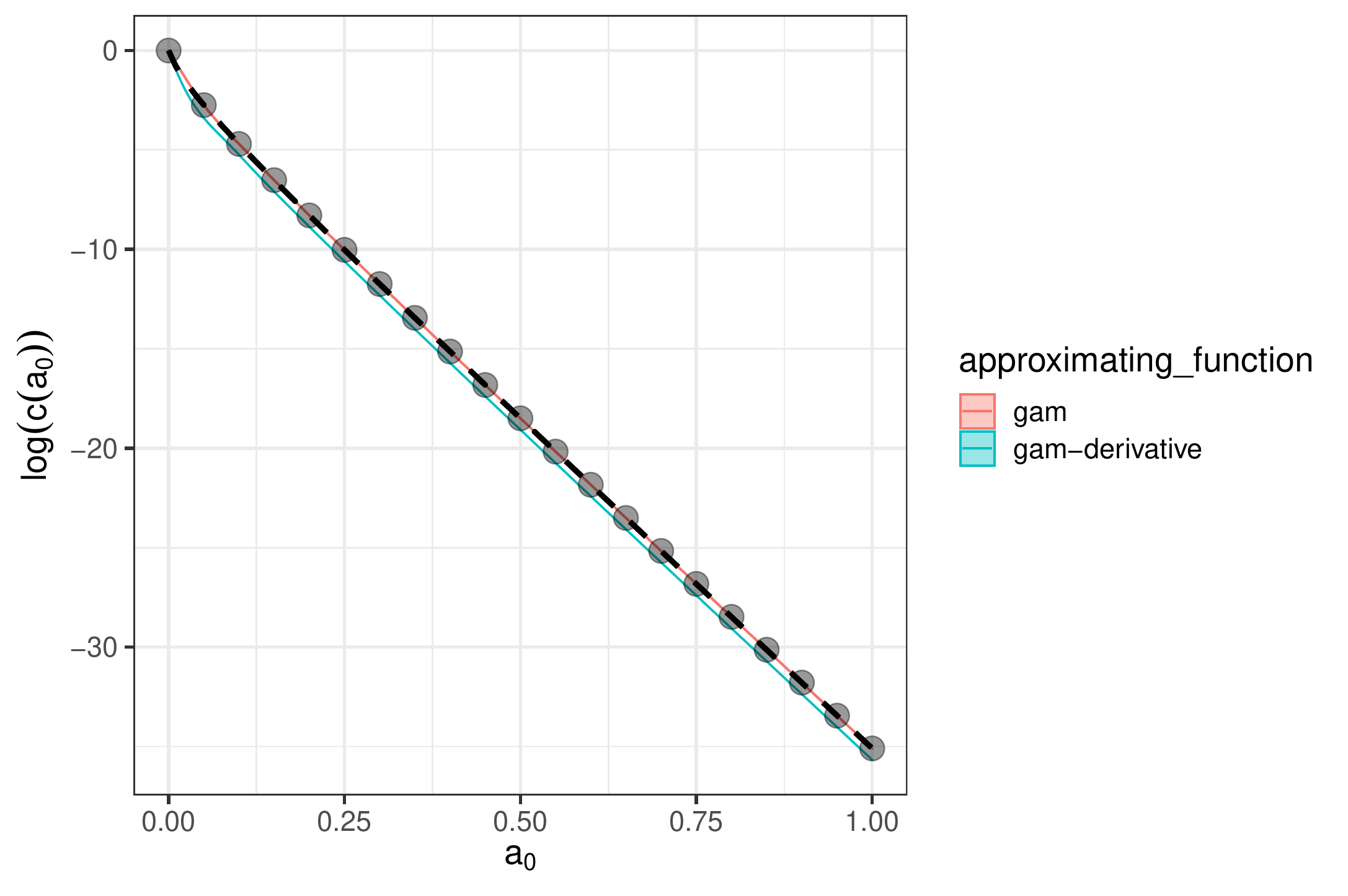}}\\
\hfill
\subfigure[Scenario 3]{\includegraphics[width=7cm]{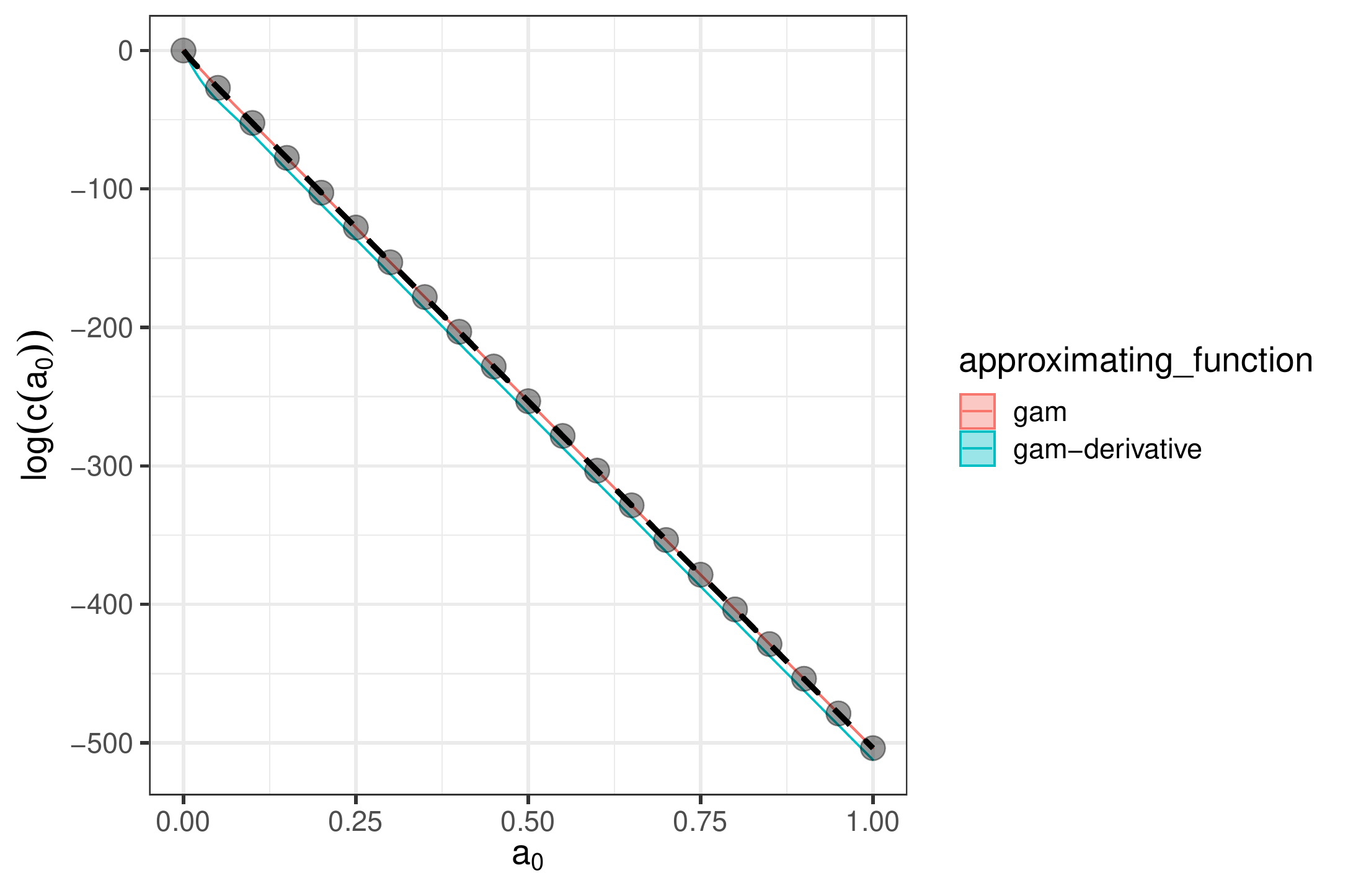}}
\hfill
\subfigure[Scenario 4]{\includegraphics[width=7cm]{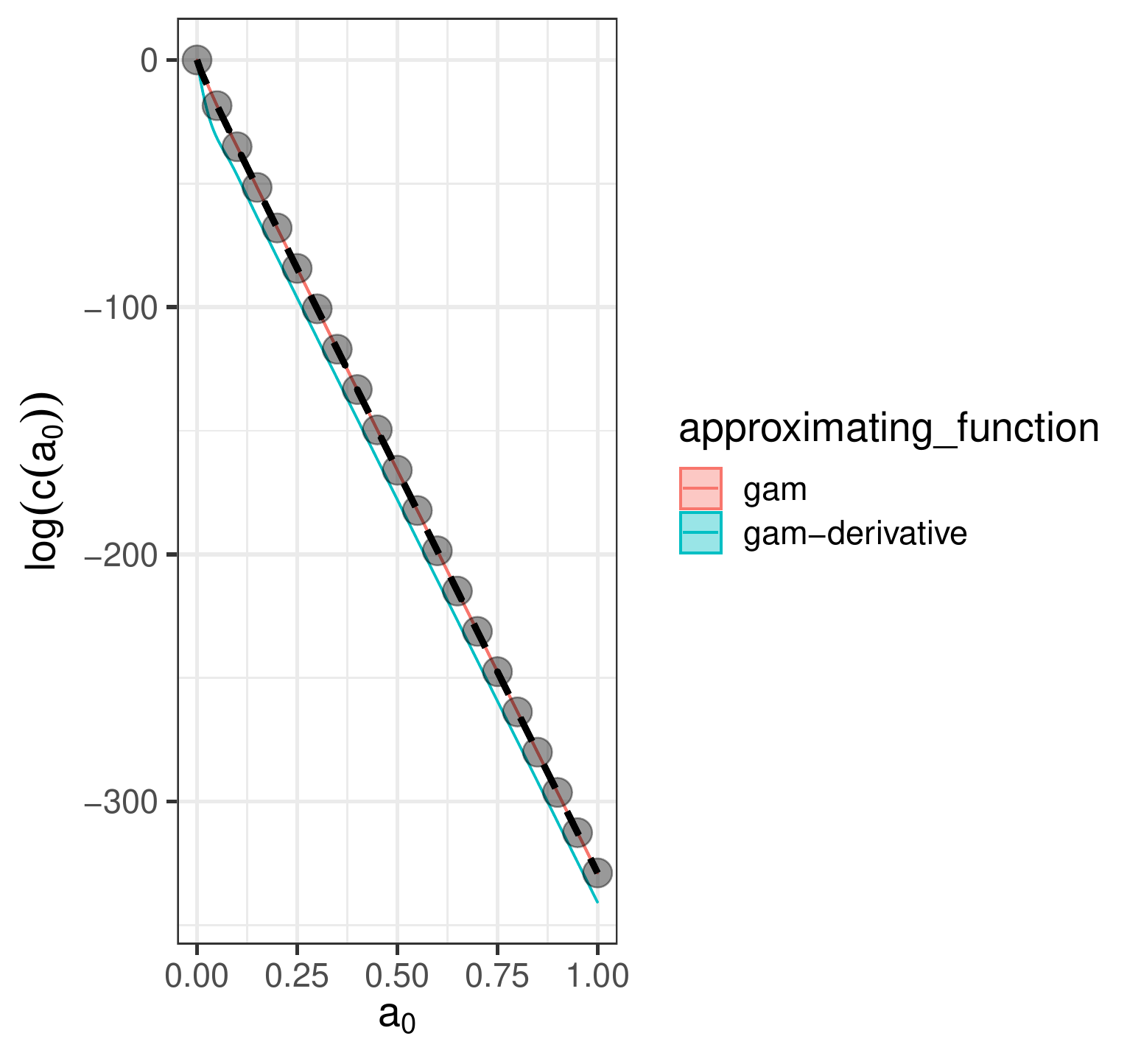}}
\hfill
\end{center}
\caption{\textbf{The log-normalising constant $l(a_0) = \log(c(a_0))$ for the Bernoulli example in each scenario}.
We show the true value of the function (black dashed line) along with the GAM-based approximation (``gam'') and an approximation based on fitting a GAM to the estimated values of $l^\prime(a_0)$ and then using midpoint integration to get $l(a_0)$ (``gam-derivative'').
Colours show the approximation method used.
All results are shown for computations using $J = 20$ points (see Section~\ref{sec:adapt_grid} in the main text).
Please note that y-axes differ between panels.
}
\label{sfig:ca0_Bernoulli}
\end{figure}

\begin{figure}[!ht]
\begin{center}
 \hfill
\subfigure[Scenario A]{\includegraphics[width=7cm]{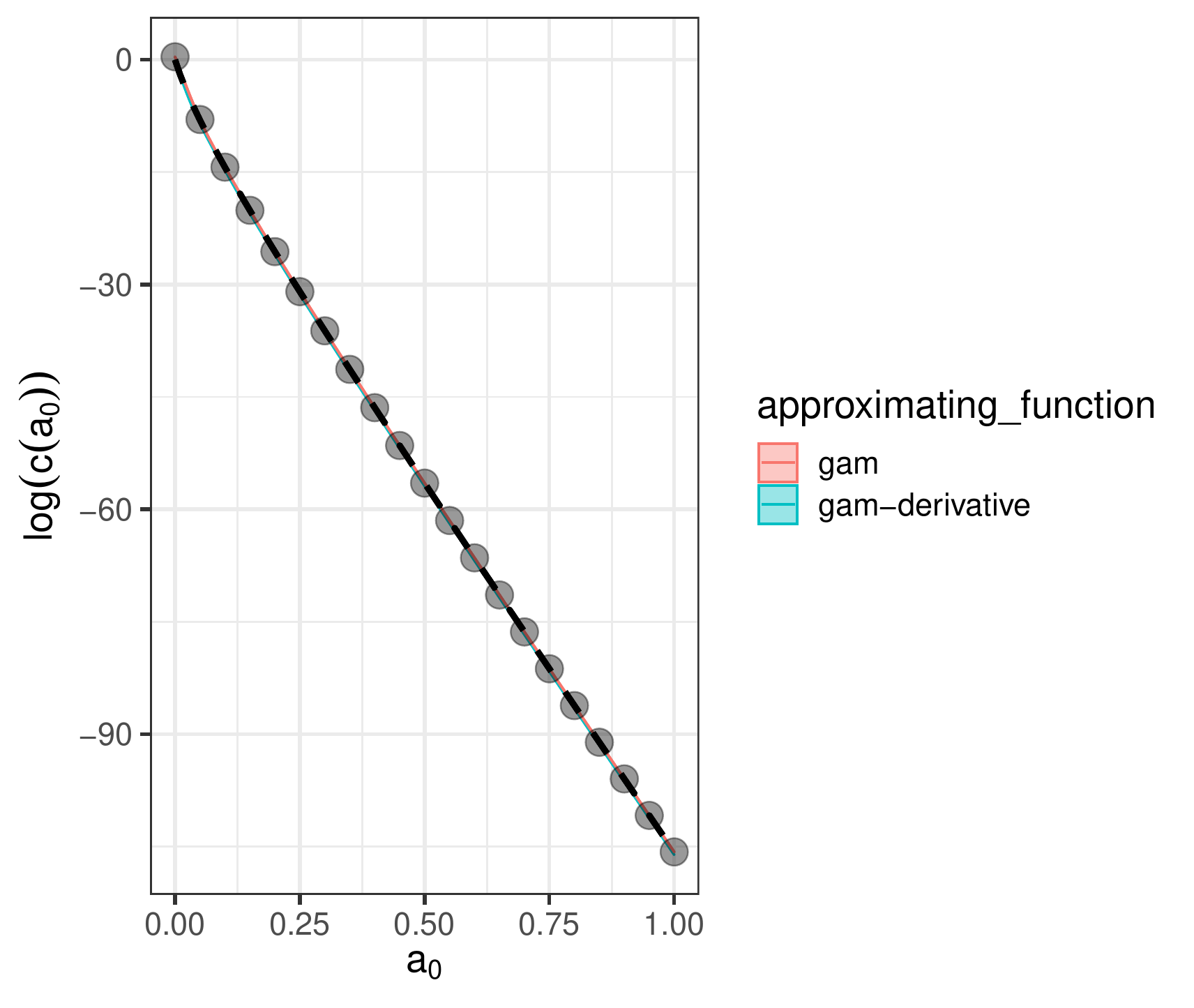}}
\hfill
\subfigure[Scenario B]{\includegraphics[width=7cm]{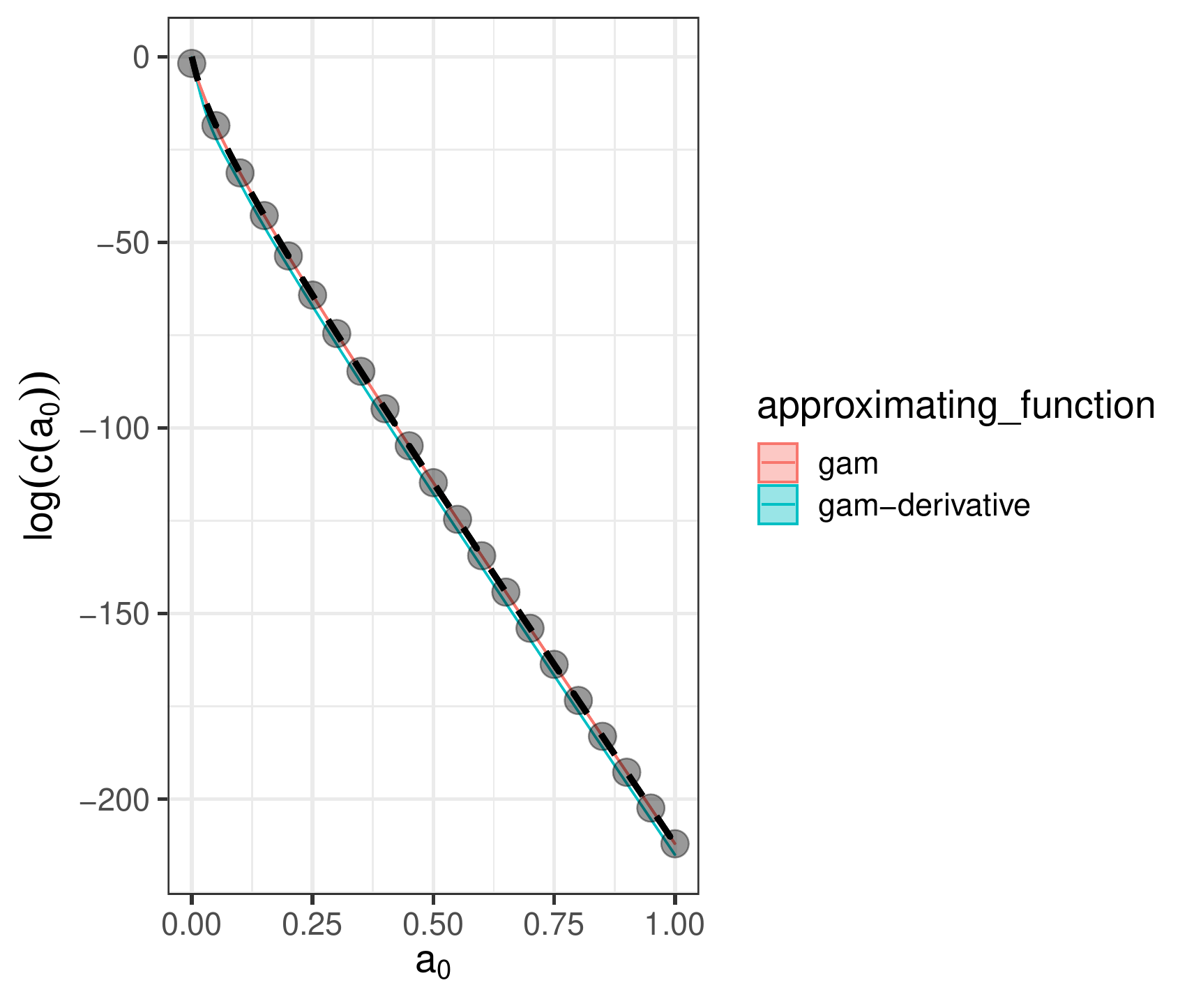}}\\
\hfill
\subfigure[Scenario C]{\includegraphics[width=7cm]{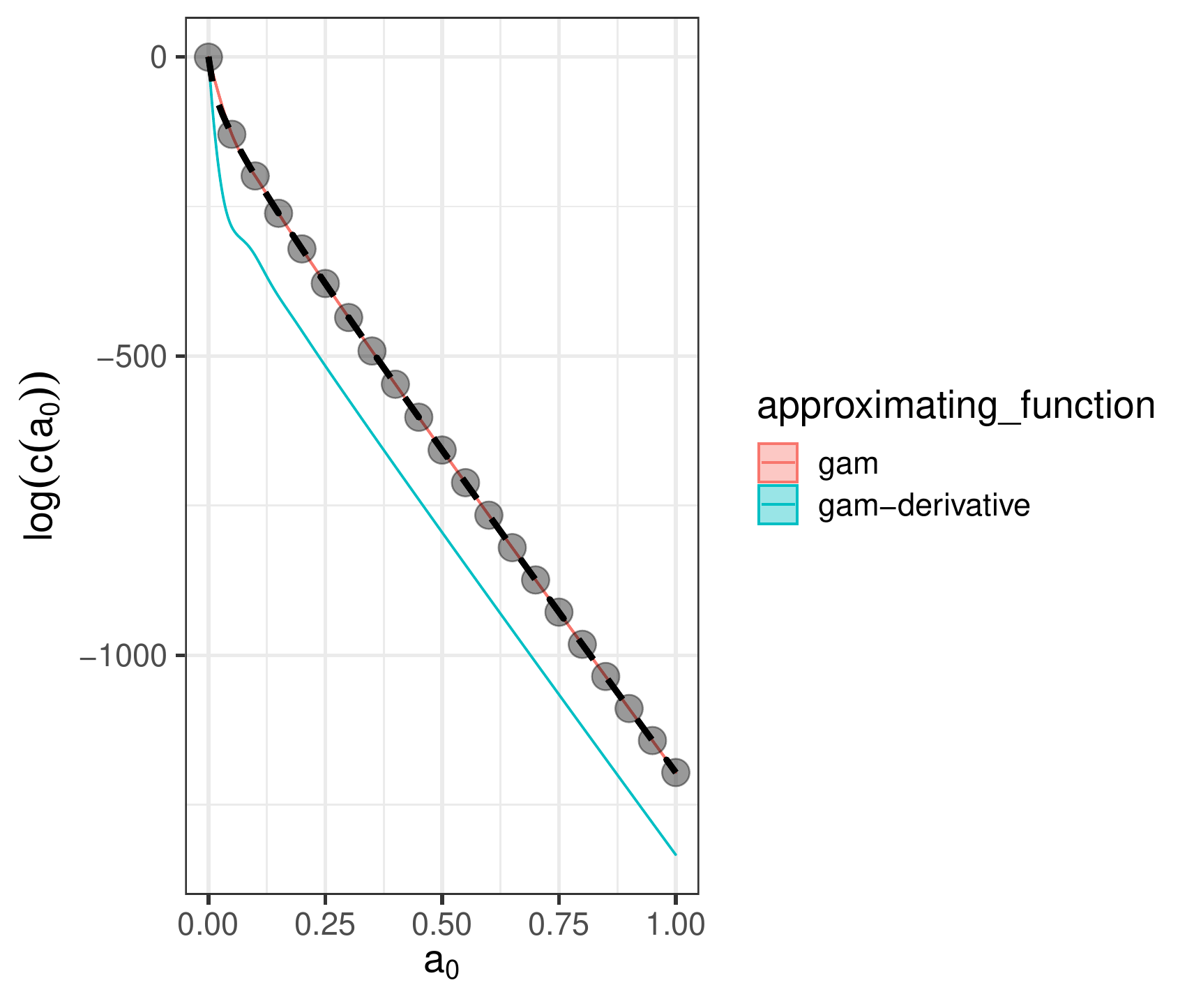}}
\hfill
\subfigure[Scenario D]{\includegraphics[width=7cm]{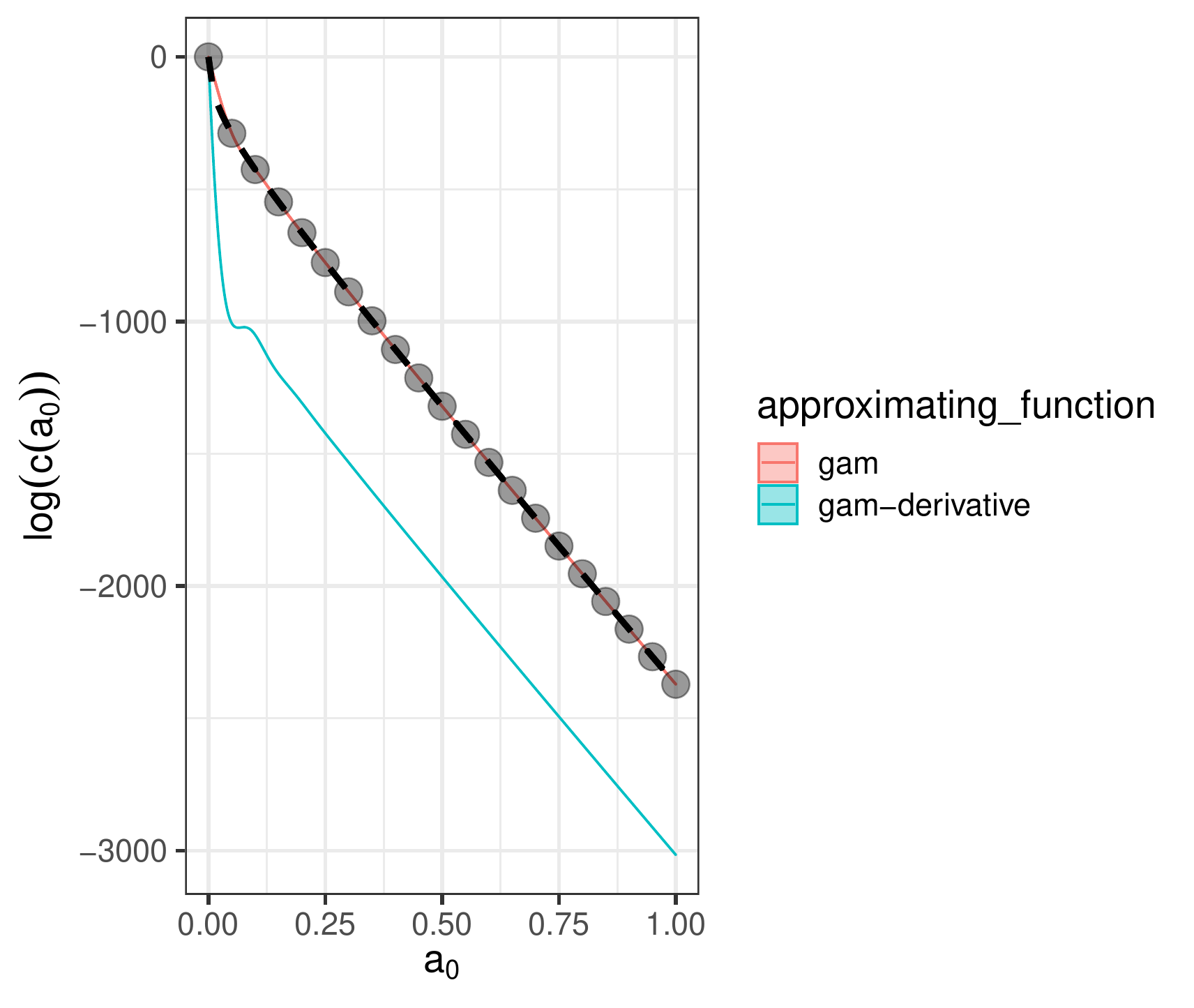}}
\hfill
\end{center}
\caption{\textbf{The log-normalising constant $l(a_0) = \log(c(a_0))$ for the linear regression example in each scenario}.
See Table~\ref{tab:results_NIGregression_scenarios} for details on the configurations of each scenario.
We show the true value of the function (black dashed line) along with the GAM-based approximation (``gam'') and an approximation based on fitting a GAM to the estimated values of $l^\prime(a_0)$ and then using midpoint integration to get $l(a_0)$ (``gam-derivative'').
Colours show the approximation method used.
All results are shown for computations using $J = 20$ points (see Section~\ref{sec:adapt_grid} in the main text).
Please note that y-axes differ between panels.
}
\label{sfig:ca0_NIGRegression}
\end{figure}

\begin{figure}[!ht]
\begin{center}
 \hfill
\subfigure[Scenario A]{\includegraphics[width=7cm]{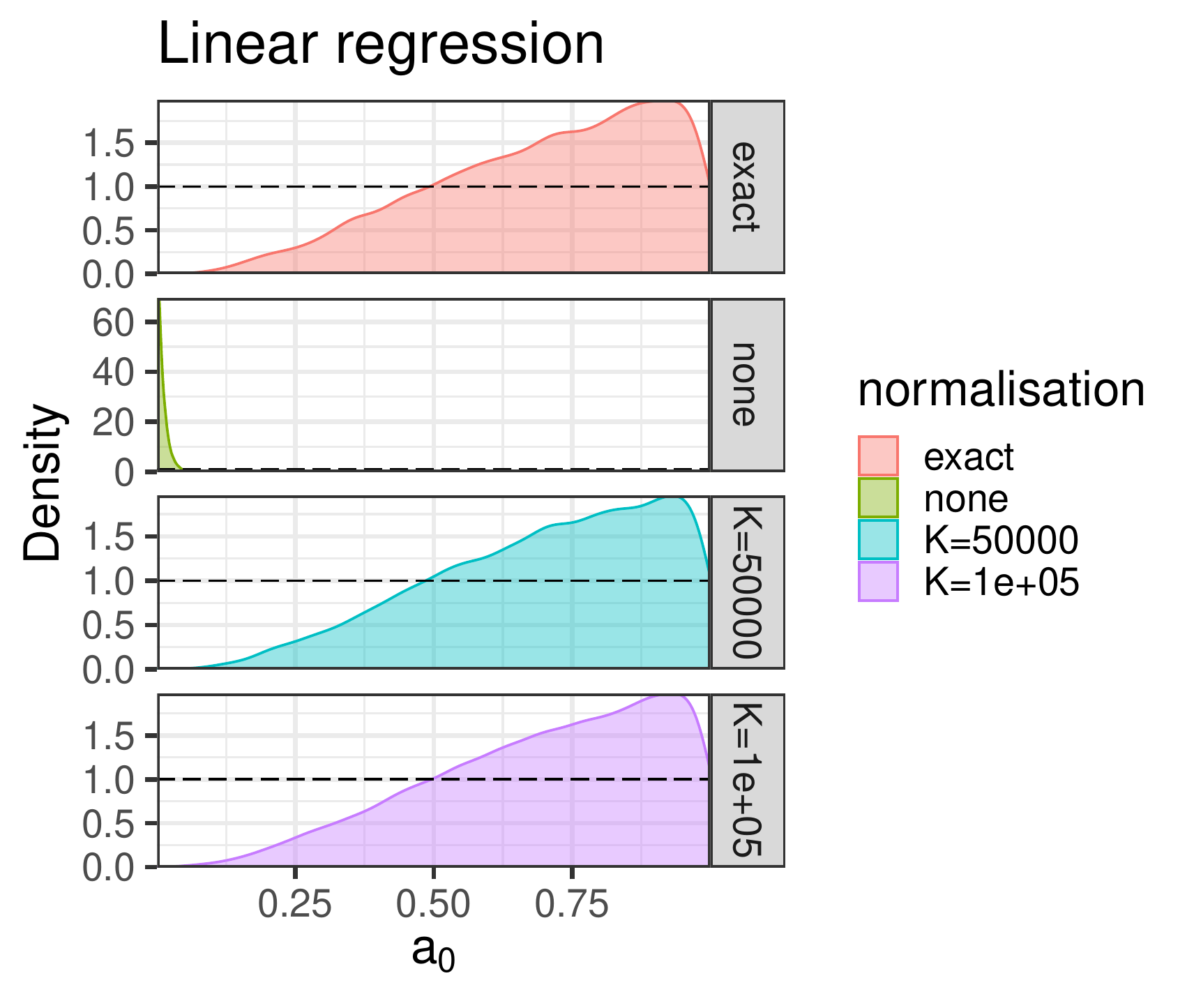}}
\hfill
\subfigure[Scenario B]{\includegraphics[width=7cm]{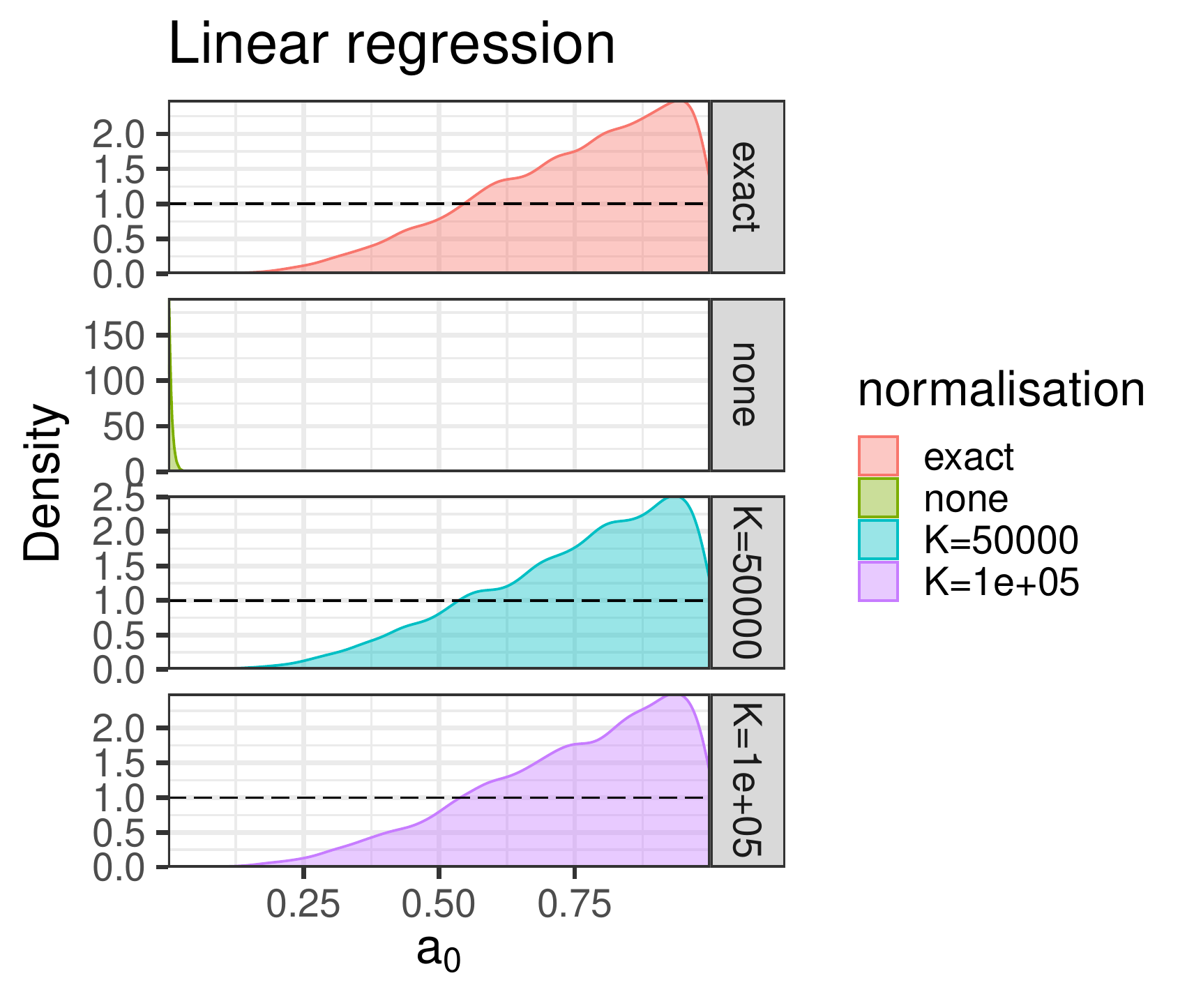}}\\
\hfill
\subfigure[Scenario C]{\includegraphics[width=7cm]{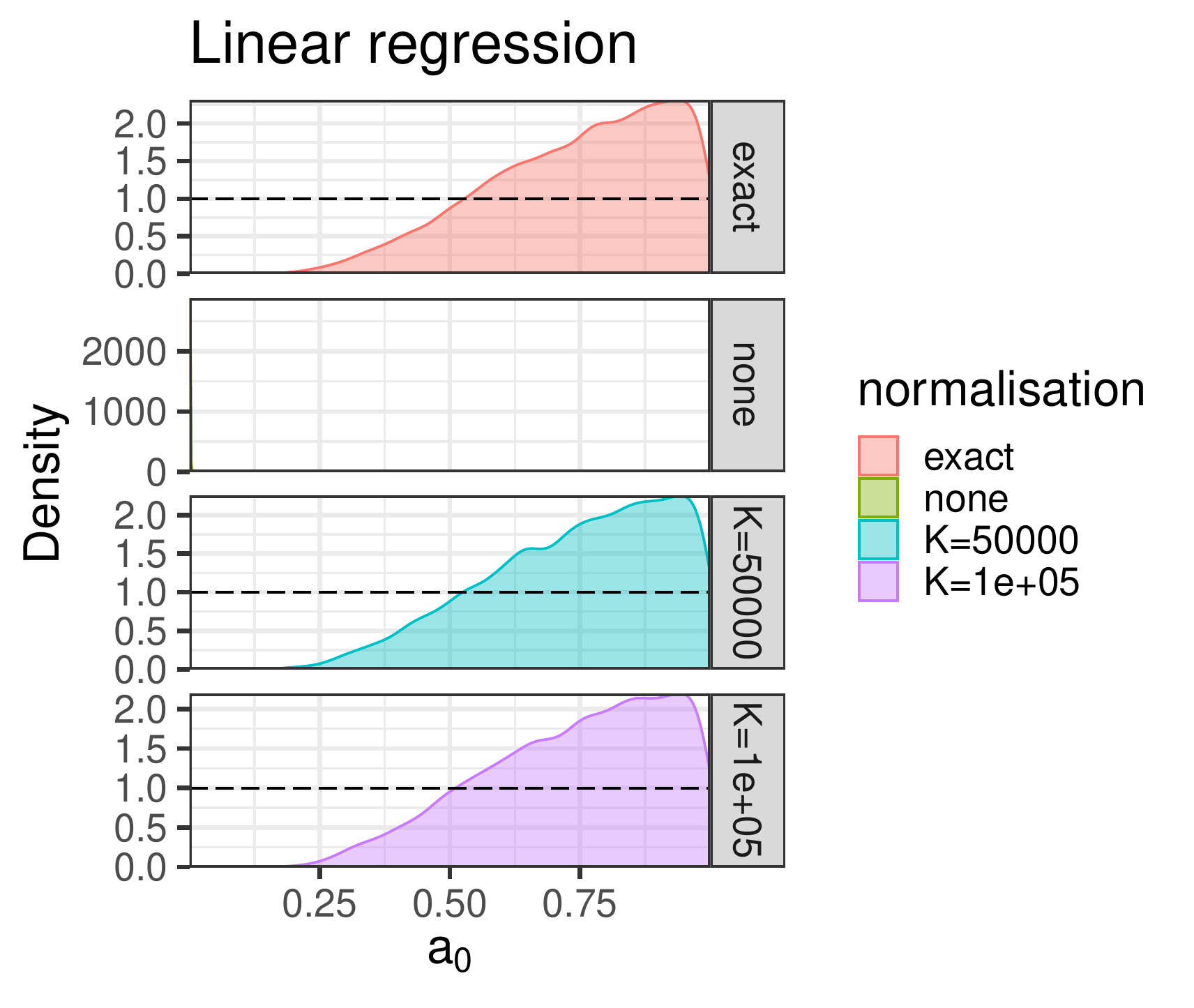}}
\hfill
\subfigure[Scenario D]{\includegraphics[width=7cm]{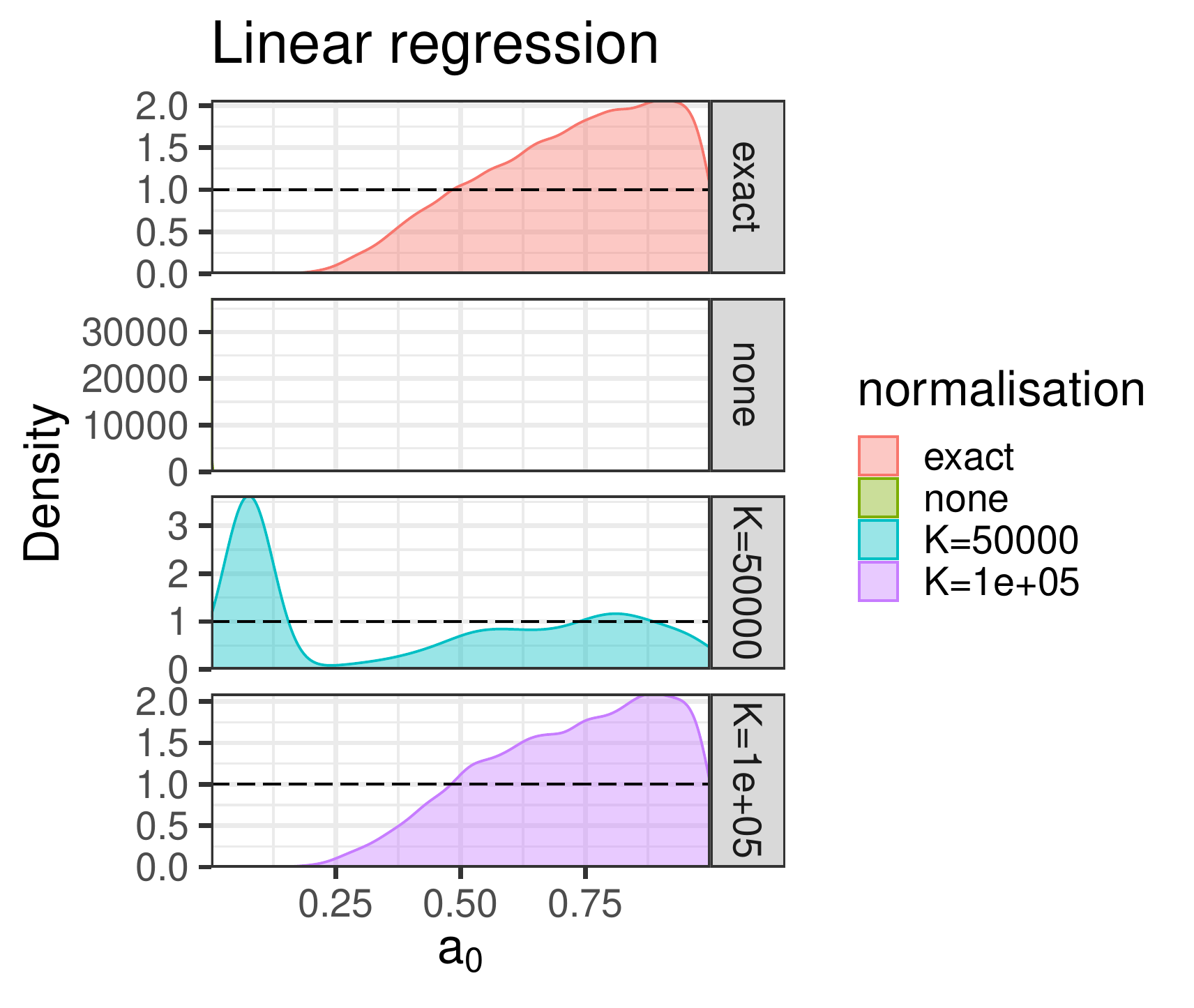}}
\hfill
\end{center}
\caption{\textbf{Posterior distribution for $a_0$  in the linear regression example in each scenario}.
See Table~\ref{tab:results_NIGregression_scenarios} for details on the configurations of each scenario.
We show the unnormalised, exactly normalised and approximately normalised posteriors for grid sizes $K = 50, 000$ and $K = 100, 000$.
The dashed line shows the prior for $a_0$, $\operatorname{Beta}(\eta = 1, \nu = 1)$.
}
\label{sfig:a0_posterior_NIGRegression_scenarios}
\end{figure}

\begin{figure}[!ht]
\begin{center}
 \hfill
\subfigure[$\alpha = 1.2$, $\boldsymbol\beta = \{ -1, 1, 0.5, -0.5\}$]{\includegraphics[width=7cm]{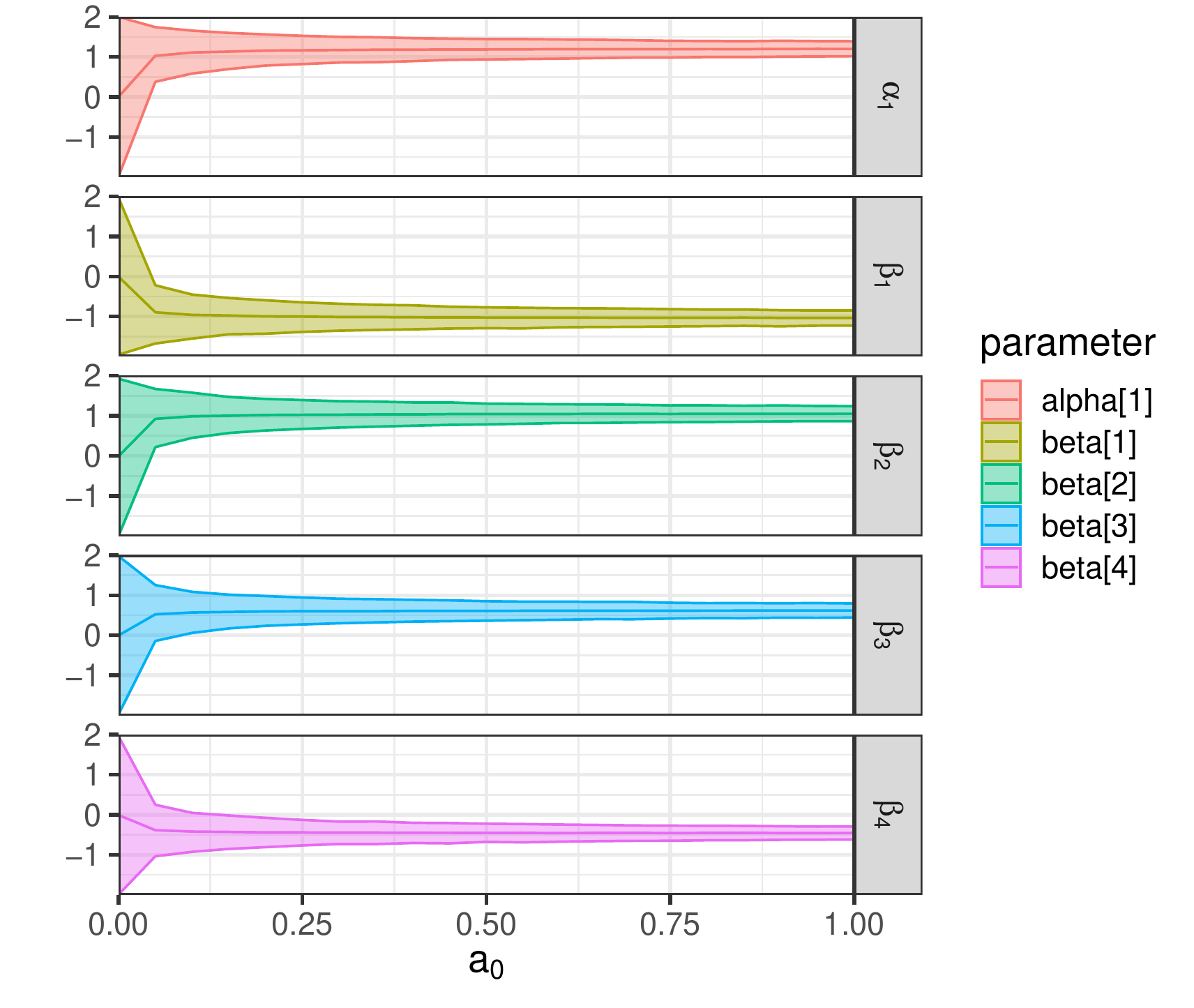}}
\hfill
\subfigure[$\alpha = 0.2$, $\boldsymbol\beta = \{ -10, 1, 5, -5\}$]{\includegraphics[width=7cm]{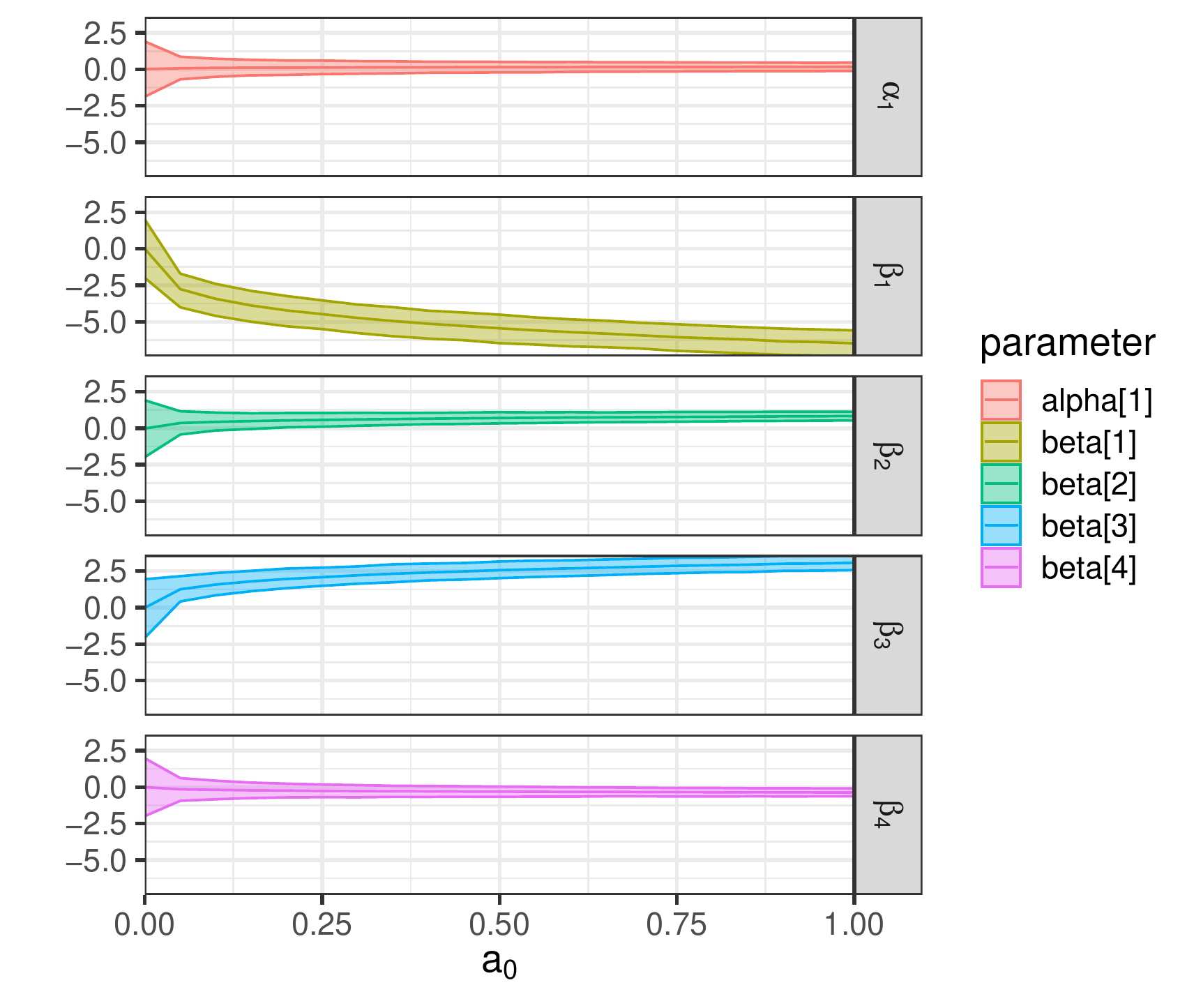}}
\end{center}
\caption{\textbf{Prior sensitivity analysis for the logistic regression example. 
}
We show the means and 95\% credibility intervals for several (fixed) values of $a_0$  for the coefficients and intercept under two sets of data-generating parameters.
}
\label{sfig:sensitivity_logistic_regression}
\end{figure}

\begin{figure}[!ht]
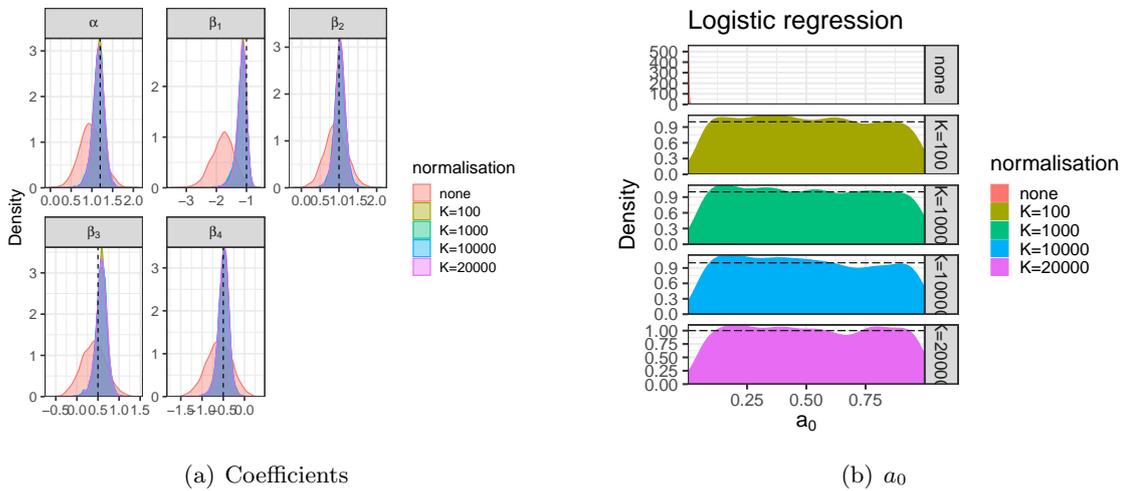

\hfill
\subfigure[Coefficients]{\includegraphics[width=7cm]{parameter_posterior_LogisticRegression_J=20_scenario_A.pdf}}
\hfill
\subfigure[$a_0$]{\includegraphics[width=7cm]{a0_posterior_RegressionLogistic_J=20_scenario_A.pdf}}
\hfill
\caption{\textbf{Results for the logistic regression example with a different set of data-generating parameters}.
In this example, $\alpha = 0.2$, $\boldsymbol\beta = \{ -10, 1, 5, -5\}$.
Panel (a) shows the marginal posterior distributions for model parameters, with colours again pertaining to the approximation scheme.
Vertical dashed lines show the ``true'' parameter values of the data-generating process.
Horizontal dashed lines show the prior density of a $\operatorname{Beta}(\eta = 1, \nu = 1)$ for $a_0$.
In panel (b) the subpanels (and colours) correspond to the posterior distribution of the parameter $a_0$ when $c(a_0)$ is accounted for using various grid sizes $K$ and when it is not included.
}
\label{sfig:logistic_regression_extra}
\end{figure}

\begin{figure}[!ht]
\begin{center}
\includegraphics[scale=0.5]{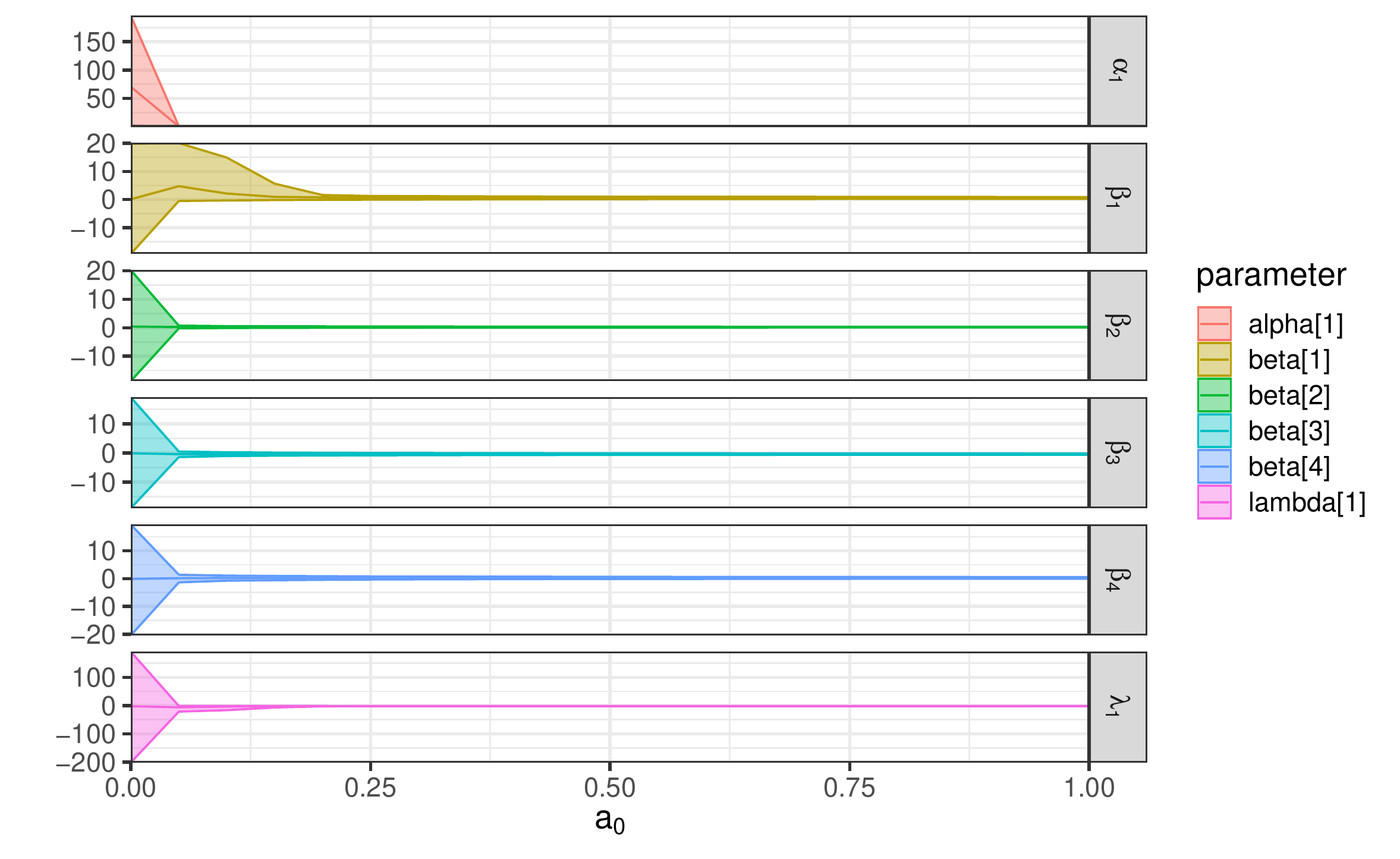} 
\end{center}
\caption{\textbf{Prior sensitivity analysis for the cure rate model}.
Solid line shows the posterior mean, while shaded ribbons show the 95\% credibility intervals.
}
\label{sfig:cure_rate_sensitivity}
\end{figure}

\end{document}